\title{All finite lattices are stable matching lattices\footnote{A previous version of this paper was circulated under the title  \emph{Non-distributive Lattices, Stable Matchings, and Linear Optimization}. An extended abstract with that title appeared in the proceedings of IPCO 2025 \citep{en2025non}.}}
\author{Christopher En\thanks{Corresponding Author: ce2456@columbia.edu}} 
\author{Yuri Faenza\thanks{yf2414@columbia.edu}}
\affil{Columbia University, IEOR Department}
\theoremstyle{plain}
\newtheorem{theorem}{Theorem}
\newtheorem{corollary}[theorem]{Corollary}
\newtheorem{lemma}[theorem]{Lemma}
\newtheorem{proposition}[theorem]{Proposition}
\theoremstyle{definition}
\newtheorem{definition}[theorem]{Definition}
\newtheorem{example}{Example}[section]
\theoremstyle{remark}
\renewcommand{\[}{\begin{equation}}
\renewcommand{\]}{\end{equation}}
\newcommand{\bea}{\begin{eqnarray}}
\newcommand{\eea}{\end{eqnarray}}
\newcommand{\id}{\mathrm{id}}
\newcommand{\R}{\mathbb{R}}
\begin{document}

\maketitle

\begin{abstract}
We show that all finite lattices, including non-distributive lattices, arise as stable matching lattices when all agents have path-independent choice functions. This result answers an open question of Blair~\cite{blair1988lattice}. In the process, we introduce new tools to reason on general lattices for optimization purposes: the \emph{partial representation} of a lattice, which partially extends Birkhoff's representation theorem to non-distributive lattices; the \emph{distributive closure} of a lattice, which gives such a partial representation; and \emph{join constraints}, which can be added to the distributive closure to obtain a representation for the original lattice. Then, we use these techniques to show that the minimum cost stable matching problem under the same standard assumptions on choice functions is NP-hard, by establishing a connection with antimatroid theory.
\end{abstract}

%\tableofcontents 

\section{Introduction}

Since their introduction by Gale and Shapley~\citet{gale1962college}, stable matchings have been studied extensively by the economics, computer science, and optimization communities. They are today applied in areas as different as school choice~\citet{abdulkadirouglu2003school}, online dating~\citet{hitsch2010matching} and ride-sharing~\citet{wang2018stable}, and have been the subject of extensive theoretical research, often collected in monographs or large portions of them~\citep{immorlica2023online,gusfield1989stable,manlove2013algorithmics,roth1992two}.

One reason for the popularity of stable matchings is their rich mathematical structure,  which has been leveraged to design efficient algorithms. In particular, in any instance of Gale and Shapley's classical one-to-one model (the \emph{marriage} setting), stable matchings can be organized to form a distributive lattice under a natural partial order~\citet{knuth1976marriages}. This fact has been used to devise an efficient combinatorial algorithms for various problems over the set of stable matchings (see, e.g.,~\citet{faenza2021stable,faenza2024two}) including the problem of finding a stable matching of minimum cost, when costs are defined over possible pairs of agents~\citet{irving1987efficient}. The latter algorithm has been extended to more general  models of many-to-many two-sided matchings~\citep{bansal2007polynomial,faenza2023affinely}, still exploiting the underlying distributive lattice structure. Moreover,~\citet{faenza2023affinely,kiraly2008total} showed that even the linear description of the stable matching polytope~\citep{rothblum1992characterization,vate1989linear} and its properties can be understood, or even deduced, starting from the distributive lattice structure of stable matchings.

Interestingly, \citet{blair1984every} (and later~\citet{gusfield1989stable} with a simpler construction) proved the converse implication of~\citet{knuth1976marriages}: every finite distributive lattice arises as the stable matching lattice of some one-to-one instance (all lattices, and more generally all sets, from this paper are assumed to be finite). Thus, the class of stable matchings in the one-to-one model is universal enough to model any distributive lattice structure, making the theory of stable matchings relevant beyond its original scope. 

Considerable attention has been devoted to matching markets more general than the ones investigated in~\citet{bansal2007polynomial,faenza2023affinely,gale1962college}. In particular, \citet{roth1984stability} proved that, in two-sided markets where agents' choice functions are consistent and substitutable (see Section~\ref{sec:preliminaries} for definitions), a stable matching exists and can be found efficiently. Interestingly, stable matchings in such models are also known to form a lattice, which may however be non-distributive~\citep{blair1988lattice}. This is a fundamental difference with respect to simpler models, since the absence of a distributive lattice structure prevents the application of Birkhoff's Representation Theorem, on which many of the algorithms, implicitly or explicitly, rely (see Section~\ref{sec:preliminaries} for definitions and~\citet{faenza2023affinely,gusfield1989stable} for discussions). As a consequence, beyond Roth's result mentioned above, to the best of our knowledge no algorithm is known for the matching markets studied in~\citet{roth1984stability}, despite their relevance for applications (see, e.g.,~\citet{echenique2006theory,konishi2006credible,sonmez2010course}). More generally, while~\citet{faenza2023affinely} explored the algorithmic implications of associating a distributive lattice structure to the feasible region of any discrete optimization problem (not necessarily a matching problem), to the best of our knowledge nothing is known on the role of (non-distributive) lattice structures for devising algorithms. This seems to be a relevant question, since general lattices appear much more frequently than distributive lattices, in settings such as matching with contracts \citep{hatfield2005matching}, formal concept analysis \citep{buelohlavek2004concept, garg2015introduction}, and symbolic computation \citep{reynolds1970transformational}, among others.

\smallskip 

\noindent {\bf Our contributions.} Our first main contribution is a universality theorem for stable matchings: Every finite lattice arises as the lattice of stable matchings for a \emph{standard matching market instance} - that is, a matching market instance with substitutable and consistent choice functions, given as its set of agents together with an algorithm that computes the choice functions of the agents in time polynomial in the number of agents. A formal definition of standard matching markets is given in Section~\ref{sec:preliminaries:problem_statement}. 

\begin{theorem}[All finite lattices are stable matching lattices]\label{thm:main1} Let ${\cal L}=(X,\succeq)$ be a finite lattice. Then, there is a standard matching market instance $I^*$ such that ${\cal L}$ is order-isomorphic to the lattice of stable matchings of $I^*$. Furthermore, $I^*$ has $O(|X|^4)$ agents, and can be constructed in $O(| X|^6)$ time.
\end{theorem}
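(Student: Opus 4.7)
My plan is to combine three ingredients that (according to the abstract) are introduced earlier in the paper: the distributive closure $\mathcal{D}(\mathcal{L})$ of $\mathcal{L}$, the partial representation of $\mathcal{L}$ inside $\mathcal{D}(\mathcal{L})$, and a set of join constraints that carve $\mathcal{L}$ out of $\mathcal{D}(\mathcal{L})$. The argument then proceeds in three phases: reduce to the distributive case, realize the distributive closure as a stable matching lattice, and finally graft gadgets onto the resulting instance to enforce the join constraints.

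\textbf{Phase 1 (distributive reduction).} First I would compute $\mathcal{D} \deq \mathcal{D}(\mathcal{L})$ together with its family $\mathcal{J}$ of join constraints, so that $\mathcal{L}$ is order-isomorphic to the set of elements of $\mathcal{D}$ satisfying all constraints in $\mathcal{J}$. Since $\mathcal{D}$ is distributive, by Birkhoff's theorem it is isomorphic to the downset lattice $\mathcal{O}(P)$ of a poset $P$ on $O(|X|^2)$ elements (this is where the $|X|^4$ bound ultimately comes from).

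\textbf{Phase 2 (distributive case).} I would then invoke the Blair--Gusfield universality theorem cited in the introduction to produce a one-to-one marriage instance $I'$ whose stable matching lattice is order-isomorphic to $\mathcal{D}$. Because agents in the marriage model have strict linear preferences, the associated choice functions are automatically substitutable and consistent, so $I'$ is a standard matching market instance; however its stable matching lattice is (in general) strictly larger than $\mathcal{L}$.

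\textbf{Phase 3 (enforcing join constraints).} The heart of the argument is to extend $I'$ to an instance $I^*$ by adding, for each constraint $J \in \mathcal{J}$, a gadget of new agents with suitably designed many-to-many choice functions. The gadget should be a ``monitor'' that, whenever the current matching corresponds to an element of $\mathcal{D}\setminus \mathcal{L}$ violating $J$, creates a blocking coalition, while leaving all stable matchings that correspond to elements of $\mathcal{L}$ unaffected. A natural template is to let the gadget accept a specific ``witness'' subset of its acceptable partners exactly when the violating downset is realized in the marriage component of the matching, and otherwise prefer to remain unmatched.

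\textbf{Main obstacle.} The delicate part is verifying that the choice functions obtained after adding all gadgets remain substitutable and consistent, and that the gadgets do not interact with each other or with $I'$ so as to create spurious new stable matchings or to destroy the ones corresponding to elements of $\mathcal{L}$. I expect to need a structural lemma stating that substitutability and consistency are preserved under the particular overlay operation used to attach gadgets, analogous in spirit to results in matching with contracts. The correctness proof then has two directions: (i) every stable matching of $I^*$ projects to an element of $\mathcal{D}$ satisfying all join constraints (so lies in $\mathcal{L}$), and (ii) every element of $\mathcal{L}$ extends uniquely to a stable matching of $I^*$ by adding the ``passive'' configuration of every gadget.

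\textbf{Complexity.} Phase~1 runs in time polynomial in $|X|$, and the poset $P$ has $O(|X|^2)$ elements, so Blair--Gusfield yields $O(|X|^2)$ agents in $I'$. With $O(|X|^2)$ join constraints and each gadget of size $O(|X|^2)$, the total agent count is $O(|X|^4)$, and the construction time is dominated by enumerating pairs of agents in the final instance while evaluating choice-function descriptions, giving $O(|X|^6)$.
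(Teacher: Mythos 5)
Your proposal follows essentially the same route as the paper: take the canonical partial representation via the poset of join-irreducible elements, realize the distributive closure as the stable matching lattice of a Gusfield--Irving instance, and then attach gadgets to enforce the $O(|X|^2)$ join constraints --- which is exactly what the paper does via \emph{join constraint augmentations} applied one constraint at a time to \emph{extendable instances}, the preservation lemma you correctly anticipate as the main obstacle. One small slip: the representation poset $(X_j,\succeq)$ has at most $|X|$ elements, not $O(|X|^2)$ (otherwise $I'$ would have $O(|X|^4)$ agents, contradicting your own Phase~2 count); the $O(|X|^4)$ agent bound instead comes from $O(|X|^2)$ constraints each adding $O(|X|^2)$ agents, as your complexity paragraph in fact computes.
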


Similarly to the results from~\citet{blair1984every,knuth1976marriages} for distributive stable matching lattices in the one-to-one model, Theorem~\ref{thm:main1} together with~\citet{blair1988lattice} establishes an equivalence between stable matching lattices and general lattices. It also answers an open question of \citet{blair1988lattice}, thus showing that standard matching markets are substantially more general than matching markets from~\citet{gale1962college,faenza2023affinely,bansal2007polynomial}. This suggests that positive algorithmic results from~\citet{gale1962college,faenza2023affinely,bansal2007polynomial} may not extend to this setting. Indeed, for our second main contribution, we show that the minimum cost stable matching problem in standard matching market instances is NP-Hard.

\begin{theorem}[Linear optimization over matching markets is NP-Hard]\label{thm:main2} The following minimum cost stable matching problem is NP-Hard: \\
Given: a standard matching market instance $I$ with agents $F \uplus W$ and a cost function $c:F\times W\rightarrow \mathbb{Z}$.\\
Find: a stable matching $\mu$ of $I$ that minimizes $c(\mu)$. 
\end{theorem}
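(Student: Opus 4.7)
The plan is to reduce from a known NP-hard optimization problem over a class of non-distributive lattices, using Theorem~\ref{thm:main1} as a black-box universality tool. The phrase ``connection with antimatroid theory'' in the abstract points toward lattices of feasible sets of antimatroids, which are meet-distributive but not distributive in general. While linear optimization over distributive lattices is polynomial---via Birkhoff's representation theorem and a min-cut argument---the analogous problem over meet-distributive lattices is known to be NP-hard for natural objectives (for instance, the minimum-weight feasible-set-completion problem on antimatroids, or the shortest covering path from $\emptyset$ to the top element with prescribed edge weights). I would fix one such problem $\Pi$ as the source of the reduction.

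Given an instance of $\Pi$, I would: (i) build the associated lattice ${\cal L}_\Pi$ of feasible sets; (ii) invoke Theorem~\ref{thm:main1} to construct, in time polynomial in $|{\cal L}_\Pi|$, a standard matching market $I^*$ on $O(|{\cal L}_\Pi|^4)$ agents whose stable matching lattice is order-isomorphic to ${\cal L}_\Pi$, via an explicit map $\phi$ produced by the construction; and (iii) define an integer cost function $c:F\times W\to\mathbb{Z}$ so that, for every stable matching $\mu$ of $I^*$, the cost $c(\mu)=\sum_{(f,w)\in\mu} c(f,w)$ equals, up to a constant independent of $\mu$, the value of the source objective at the lattice element $\phi(\mu)$. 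Any algorithm for minimum cost stable matching on $(I^*,c)$ then solves $\Pi$ in polynomial time, establishing NP-hardness.

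The main obstacle is step (iii). Stable-matching costs are additive over pairs of agents, so only pair-decomposable objectives on ${\cal L}_\Pi$ can be encoded directly, while generic linear objectives on non-distributive lattices need not decompose in this way---this is precisely what prevents a Birkhoff-style reduction. Overcoming this requires exploiting the explicit structure of the construction behind Theorem~\ref{thm:main1}: I expect to identify specific pairs $(f,w)\in F\times W$ whose appearance in a stable matching tracks designated ``atomic'' features of the corresponding lattice element (for example, whether a particular join-irreducible lies below $\phi(\mu)$, or whether a particular covering step has been taken in the antimatroid). By assigning appropriately scaled weights to these tracking pairs and a common large weight to all other pairs that appear in every stable matching (contributing only a constant, since all stable matchings in the models of~\cite{roth1984stability} have the same cardinality), one obtains the desired decomposition. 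If the direct construction from Theorem~\ref{thm:main1} does not furnish sufficient pair-level control, the natural remedy is to augment $I^*$ with polynomially many auxiliary agents whose sole purpose is to force a tight correspondence between designated edges and designated lattice events, preserving the lattice structure while expanding the space of representable objective functions.
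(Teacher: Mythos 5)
There is a genuine gap, and it is in step (ii), not in step (iii) where you locate the main difficulty. You propose to build the lattice ${\cal L}_\Pi$ of feasible sets of the antimatroid explicitly and then invoke Theorem~\ref{thm:main1} ``in time polynomial in $|{\cal L}_\Pi|$.'' But the source problem (minimum cost feasible set, Theorem~\ref{thm:antimatroid_hardness}) is NP-hard only with respect to a \emph{succinct} encoding of the antimatroid --- the paper uses the path poset $(Q,\supseteq)$, which has size polynomial in the underlying hard instance (e.g., $|Q|=|V|+2|E|$ in the reduction from independent set), while $|\mathcal G|$, the number of feasible sets, is in general exponential in that encoding. So your reduction, being polynomial in $|{\cal L}_\Pi|=|\mathcal G|$, is exponential in the input size; and conversely, if the lattice were handed to you as an explicit list of elements, the source problem would be solvable in polynomial time by enumerating $\mathcal G$ and evaluating the linear objective on each member, so there would be no hardness left to transfer. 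Theorem~\ref{thm:main1} cannot be used as a black box here for exactly this reason. The paper instead bypasses the explicit lattice: it takes the GI instance $I_{(V,\sim)}$ of the \emph{trivial} poset on the ground set $V$ (so rotations biject with elements of $V$ and the instance has $O(|V|^2)$ agents, even though the stable matching lattice is then order-isomorphic to $(2^V,\supseteq)$), proves that $O(|V|)$ complement join constraints, each with $O(|Q|^2)$ arguments, carve out exactly $\mathcal G$ from $2^V$ (Lemma~\ref{lem:poly_rjc_antimatroid}), and enforces them via the $\Omega$-extension machinery of Theorem~\ref{thm:omega_extension} --- all polynomial in $|V|+|Q|$.

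Your step (iii), by contrast, is close in spirit to what the paper actually does: the tracking-pair idea (pairs whose presence in $\mu$ records whether a designated rotation has occurred) is realized by Lemma~\ref{lem:cost_function}, which puts cost $c(\rho)/|\rho^-|$ on the pairs of $\rho^-$ and zero elsewhere, exploiting the disjointness of rotations (Proposition~\ref{prop:rotations_disjoint}) and the complement representation $\psi^c$ so that $c'(\mu)=c(\psi^c_{\mathcal S^*}(\mu))$ and minimizers correspond. One caveat within that step: your claim that all stable matchings in the models of~\cite{roth1984stability} have the same cardinality is false in this generality (it needs a law-of-aggregate-demand-type condition; indeed, the stable matchings of the augmented instances in this paper have varying sizes), but the claim is also unnecessary --- assigning zero cost to non-tracking pairs, as the paper does, removes any need for a cardinality-based constant. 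To repair your proof, you would have to replace step (ii) wholesale with a construction polynomial in the succinct antimatroid encoding, which is precisely the content of Lemma~\ref{lem:poly_rjc_antimatroid} combined with Theorem~\ref{thm:omega_extension}.
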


While both our main results only pertain to stable matchings, we prove them by introducing new tools that apply to general lattices. These tools include the \emph{partial representation} of a lattice ${\cal L}$, which partially extends Birkhoff's representation theorem to non-distributive lattices; the \emph{distributive closure} of a lattice, which gives such a partial representation by defining a distributive lattice that ``contains'' ${\cal L}$; and \emph{join constraints}, which can be added to the distributive closure to obtain a representation for the original lattice ${\cal L}$. We further show that a join constraint can be enforced on a stable matching lattice via a \emph{join constraint augmentation}, a construction that involves adding auxiliary agents and editing certain choice functions. These tools and the idea of iteratively pruning the distributive closure via join constraints may be useful in dealing more generally with optimization problems over non-distributive lattices, which, as discussed above, arise in many settings beyond stable matchings.

\smallskip

\noindent {\bf The techniques: Theorem 1.} The known fact \citep{blair1984every,gusfield1989stable} that every finite distributive lattice $\mathcal L = (X, \succeq)$ is the lattice of stable matchings in some one-to-one matching market can be proved as follows. Birkhoff's representation theorem~\citep{birkhoff1937rings} implies that there is an order-isomorphism between ${\cal L}$ and the lattice $(\mathcal D(X_j),\supseteq)$ of the lower closed sets $\mathcal D(X_j)$ of an associated poset $(X_j, \succeq)$, called the \emph{representation poset}. Then, given any poset $(P,\succeq_P)$, an algorithm by \citet{gusfield1989stable} produces a one-to-one matching market instance $I_{(P, \succeq_P)}$ whose stable matching lattice is order-isomorphic to $(\mathcal D(P),\supseteq)$ (see Theorem~\ref{thm:gusfield_irving}). When applied to $(X_j, \succeq)$, the algorithm produces therefore a one-to-one matching market instance whose stable matching lattice is order-isomorphic to  $(\mathcal D(X_j),\supseteq)$, hence to $\mathcal L$, concluding the proof.

While Birkhoff's theorem fails for a finite non-distributive lattice ${\cal L} = (X, \succeq)$, a relaxed version holds. In particular, in Section~\ref{sec:join_constraints} we define an order-embedding $\psi:X\to \mathcal D(X_j)$ (see Lemma~\ref{lem:dist_closure_partial_rep} and Figure~\ref{fig:distributive_closure}). By definition of order-embedding, $\psi$ is injective. However, in general, it is not surjective: some elements of $\mathcal D(X_j)$ may not correspond to elements of $\mathcal L$. 
We call therefore $(X_j,\succeq)$ a \emph{partial representation of ${\cal L}$}, and call $\psi$ the \emph{partial representation function} (see Definition~\ref{def:partial_rep}). Then, the Gusfield-Irving construction produces a matching market $I_{(X_j, \succeq)}$  with stable matching lattice order-isomorphic to $(\mathcal D(X_j),\supseteq)$, which in turn contains the image of the original lattice $(X, \succeq)$ via $\psi$. Our goal is now to modify the matching market instance $I_{(X_j, \succeq)}$ to an instance $I^*$, so that the lattice of stable matchings of $I^*$ is order-isomorphic to ${\cal L}$. To do this, we show that it suffices to apply a polynomial number of \emph{join constraint augmentations} to $I_{(X_j, \succeq)}$. Roughly speaking, a join constraint augmentation edits the matching market instance so that certain matchings lose stability, without adding new stable matchings. In this way, we can ``prune'' the stable matching lattice until the only remaining stable matchings form a lattice order-isomorphic to the original lattice $(X, \succeq)$. Join constraint augmentations can be seen as an implementation for the stable matching lattice of the more general concept of \emph{join constraints} (see Definition~\ref{def:join_constraint}). 

To enforce join constraint augmentations, we introduce the concept of \emph{extendable matching markets}. These markets are edited versions of the original instance $I_{(X_j,\succeq)}$ whose set of stable matchings, when projected to the agents of $I_{(X_j,\succeq)}$, is contained in the set of stable matchings of $I_{(X_j,\succeq)}$. We then construct a sequence of extendable matching markets, where each successive market enforces an additional join constraint on the stable matchings of $I_{(X_j,\succeq)}$.

\smallskip    

\noindent {\bf The techniques: Theorem 2.} Our starting point is a connection between lattices and antimatroids, first explored in \citet{edelman1985theory, dilworth1990lattices, avann1961application}: one can univocally define an antimatroid $(V,{\cal G})$ via its \emph{path poset} $(Q, \supseteq)$, see Definition~\ref{def:path_endpoint}. With this encoding, the problem of finding an element of the antimatroid minimizing a linear function over the ground set $V$ given $(Q, \supseteq)$ is well-posed. \citet{merckx2019optimization} showed that this problem is NP-Hard. After presenting some building blocks in Section~\ref{sec:hardness}, in Section~\ref{sec:proof:thm:main2} we prove Theorem~\ref{thm:main2} by constructing, for any path poset $(Q, \supseteq)$ and function $c: V \rightarrow \mathbb{Z}$, a standard matching market instance $I$ and cost function $c'$ over pairs of agents, so that the lattice of stable matchings of $I$ is order-isomorphic to the corresponding antimatroid $(V,{\cal G})$, and the minimizers of $c$ over ${\cal G}$ biject with stable matchings of $I$ minimizing $c'$. Our construction relies on techniques developed in the first part of the paper.

\smallskip 

\noindent{\bf Additional related work.} The relationship between stable matching lattices and distributive lattices has been studied in \citet{alkan2001preferences,blair1988lattice,blair1984every,gusfield1989stable,knuth1976marriages}. \citet{birkhoff1937rings} first showed a connection between distributive lattices and the poset of join-irreducible elements. \citet{bansal2007polynomial,faenza2023affinely,irving1987efficient} leveraged this relationship to propose algorithms for finding stable matchings of minimum cost. Different extensions of stability in two-sided markets using concepts from matroid theory have been explored, e.g., in~\citet{DBLP:journals/mor/IwataY20,fleiner2001matroid,fleiner2016matroid}. \citet{fleiner2000stable} studies another generalization of the stable marriage problem involving stable common antichains, which also displays a lattice structure, and shows that the minimum cost stable common antichain problem is NP-complete.

\section{Preliminaries}\label{sec:preliminaries}
As previously mentioned, all sets from this paper are finite, so finiteness will not be explicitly stated henceforth. Throughout the paper,  for a real-valued function $f:\mathbb{R}\rightarrow \mathbb{R}$ and $S\subseteq \R$, we let $f(S)=\sum_{i \in S}f(i)$. Similarly, we extend functions $\sigma: S_1\to S_2$ between sets $S_1, S_2$ to the power sets $\sigma: 2^{S_1}\to 2^{S_2}$ so that for $T\subseteq S_1$, $\sigma(T) = \bigcup_{s\in T}\{\sigma(s)\}$. Given two disjoint sets $T_1, T_2$, we write the disjoint union as $T_1\uplus T_2$. We denote the symmetric difference operator by $\triangle$. Given $n \in \mathbb{N}$, we let $[n]=\{1,2,\dots,n\}$. We next introduce the basic structures and tools used throughout the paper.

%\smallskip 

\subsection{Lattices, matching markets, and stability}\label{sec:preliminaries:problem_statement}

A poset $\mathcal L = (X, \succeq)$ is a \emph{lattice} if any pair of elements $a,b\in X$ has a unique \emph{join}, or least upper bound, denoted $a\vee b$, and a unique \emph{meet}, or greatest lower bound, denoted $a\wedge b$. Let $(X,\succeq)$ be a lattice. It follows by induction that for any $Y\subseteq X$, there exists a unique upper bound $\vee Y$ (again called join) and lower bound $\wedge Y$ (again called meet), where $\vee \emptyset$ (resp., $\wedge \emptyset$) is the minimal (resp., maximal) element of ${\cal L}$. In particular, $X$ contains a \emph{minimum} element $\wedge X$ and a \emph{maximum} element $\vee X$. A lattice is \emph{distributive} if the join and meet operations distribute over each other: $a\vee(b \wedge c) = (a\vee b)\wedge (a\vee c)$ and $a\wedge(b \vee c) = (a\wedge b)\vee (a\wedge c)$ for all $a,b,c\in X$. An element $a\in X$ not equal to the minimum element is \emph{join-irreducible} if there does not exist $Y\subseteq X\setminus \{a\}$ such that $\vee Y = a$. The set of join-irreducible elements of $(X,\succeq)$ is denoted by $X_j$, omitting the dependency on $\succeq$ which will always be clear from the context. Note that $(X_j, \succeq)$ forms a poset, with the order induced by the original lattice order. 

Given $x, y\in X$ with $x\prec y$, if there does not exist $z\in X$ such that $x\prec z\prec y$, then $x$ is an \emph{immediate predecessor} of $y$, and $y$ is an \emph{immediate successor} of $x$.

Throughout the paper, we assume that a lattice $(X, \succeq)$ is given as the set $X$ of elements, an array $A^{\vee}\in X^{X\times X}$ such that $A^{\vee}_{x,y} = x\vee y$, and an array $A^{\wedge}\in X^{X\times X}$ such that $A^{\wedge}_{x,y} = x\wedge y$. Note that this input is polynomially equivalent to an array $A^{\succ}\in \{-1,0,1\}^{X\times X}$ where $A^{\succ}_{x,y} = 1$ if $x\succ y$, $A^{\succ}_{x,y} = -1$ if $x\prec y$, and $A^{\succ}_{x,y} = 0$ otherwise. This is because the join and meet can be computed via enumeration, given the partial order.

Next, we define the main objects of study of this paper, matching markets, and related notions. Fix two disjoint sets of \emph{agents}, called \emph{firms} $F$ and \emph{workers} $W$. Each firm $f$ has a choice function $\mathcal C_f:2^W\to 2^W$, such that $\mathcal C_f(S)\subseteq S$ for all $S\subseteq W$. Similarly, each worker $w$ has a choice function $\mathcal C_w:2^F\to 2^F$, such that $\mathcal C_w(S)\subseteq S$ for all $S\subseteq F$. Then, a matching market instance is given by $I = (F, W, (\mathcal C_f)_{f\in F}, (\mathcal C_w)_{w\in W})$.

Given a matching market $I = (F, W, (\mathcal C_f)_{f\in F}, (\mathcal C_w)_{w\in W})$, a \emph{matching} $\mu$ is a mapping from $F\cup W$ to $2^{F\cup W}$ such that for all $w\in W$ and all $f\in F$, we have $\mu(w)\subseteq F$, $\mu(f)\subseteq W$, and $f\in \mu(w)$ if and only if $w\in\mu(f)$. A matching can also be viewed as a set of firm-worker pairs, so we write $(f,w)\in\mu$, $f\in\mu(w)$, and $w\in\mu(f)$ interchangeably. A matching is \emph{individually rational} if for every agent $a$, we have $\mathcal C_a(\mu(a)) = \mu(a)$. A firm-worker pair $(f,w)\not\in \mu$ is called a \emph{blocking pair} if $f\in \mathcal C_w(\mu(w)\cup\{f\})$ and $w\in \mathcal C_f(\mu(f)\cup\{w\})$. The solution concept we are concerned with is \emph{stability}: a matching is \emph{stable} if it is individually rational and admits no blocking pairs. The set of stable matchings of $I$ is denoted $\mathcal S(I)$ and, when $I$ is clear from context,  simply by $\mathcal S$. 

Next, we define two common assumptions on choice functions, \emph{substitutability} and \emph{consistency}, which we assume throughout the paper.

\begin{definition}[Substitutability and Consistency]
    A choice function $\mathcal C$ is substitutable if for any set of partners $S$, we have that $a\in \mathcal C(S)$ implies that for all $T\subseteq S$, $a\in \mathcal C(T\cup\{a\})$. It is consistent if for any sets of partners $S,T$, we have that $\mathcal C(S)\subseteq T\subseteq S$ implies $\mathcal C(S) = \mathcal C(T)$. If $\mathcal C$ is both substitutable and consistent, we say it is \emph{path-independent}\footnote{\citet{plott1973path} originally defined path-independence as the condition $\mathcal C(S\cup T) = \mathcal C(\mathcal C(S)\cup \mathcal C(T))$ for all sets $S,T$. \citet{aizerman1981general} showed that this definition is equivalent to substitutability and consistency.}.
\hfill $\diamond$ \end{definition}

\citet{blair1988lattice} showed that under path-independence, the set of stable matchings is nonempty and forms a lattice, when ordered following firm preferences.

\begin{theorem}[Stable Matching Lattice \citep{blair1988lattice}]\label{thm:stable_matching_lattice}
    Let $I=(F, W, (\mathcal C_f)_{f\in F},(\mathcal C_w)_{w\in W})$ be a matching market instance, such that all choice functions satisfy path-independence. Let ${\cal S}$ be the set of stable matchings on $I$ and for $\mu, \mu' \in {\cal S}$, let  $\mu\succeq \mu'$ if and only if $\mathcal C_f(\mu(f)\cup\mu'(f)) = \mu(f)$ for all $f\in F$. Then, $({\cal S},\succeq)$ forms a lattice, called the \emph{Stable Matching Lattice}. In particular, there exist $\mu_F, \mu_W \in {\cal S}$, called the \emph{firm-optimal} and \emph{worker-optimal} stable matchings respectively, such that $\mu_F \succeq \mu \succeq \mu_W$ for all $\mu \in {\cal S}$. %Then, $(\mathcal S, \succeq)$ forms a lattice.
\end{theorem}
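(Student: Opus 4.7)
The plan is to adapt the classical argument of Blair to the choice-function formulation. First, I would establish existence of stable matchings and the polar optima $\mu_F, \mu_W$ via a generalized deferred-acceptance (DA) procedure. In firm-proposing DA, each firm $f$ proposes $\mathcal{C}_f(A_f)$ from its current pool $A_f \subseteq W$, each worker $w$ provisionally holds $\mathcal{C}_w(B_w)$ from the set $B_w$ of offers received, and rejected pairs are removed from the corresponding $A_f$'s. Substitutability of $\mathcal{C}_w$ implies that once a firm is rejected from a larger set it stays rejected from any smaller one, so the procedure monotonically shrinks pools and terminates. Consistency of the choice functions then guarantees that the terminal matching is individually rational and has no blocking pair, yielding $\mu_F$; the dual process yields $\mu_W$. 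I would also check that $\succeq$ is a partial order, where antisymmetry uses consistency: $\mathcal{C}_f(\mu(f)\cup \mu'(f))=\mu(f)=\mu'(f)$ for every $f$ forces $\mu=\mu'$.

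The heart of the proof is the join construction. Given stable matchings $\mu,\mu'$, I would define a candidate $\lambda$ by
\[
\lambda(f) \;:=\; \mathcal{C}_f\bigl(\mu(f)\cup \mu'(f)\bigr) \quad\text{for every } f\in F,
\]
and $\lambda(w):=\{f\in F : w\in \lambda(f)\}$ for every $w\in W$, so that $\lambda$ is automatically a set of firm-worker pairs. The main obstacle, and where path-independence really does its work, is proving the dual identity
\[
\lambda(w) \;=\; \mathcal{C}_w\bigl(\mu(w)\cup \mu'(w)\bigr).
\]
I would prove the inclusion $\lambda(w)\subseteq \mu(w)\cup\mu'(w)$ by contradiction: if $w \in \lambda(f)$ but $w \notin \mu(f)\cup\mu'(f)$, then substitutability of $\mathcal{C}_f$ forces $w\in\mathcal{C}_f(\mu(f)\cup\{w\})$, and stability of $\mu$ then prevents $f$ from being acceptable to $w$ on top of $\mu(w)$, which together with substitutability of $\mathcal{C}_w$ would contradict $f\in \lambda(w)$ coming from the firm side. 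For the reverse inclusion and choice-consistency on the worker side, I would argue symmetrically that any $f\in \mathcal{C}_w(\mu(w)\cup\mu'(w))$ must lie in $\lambda(w)$, using the fact that stability of $\mu$ and $\mu'$ rules out the corresponding blocking pairs when $f$ is not chosen by $\mathcal{C}_f$ on the union. Idempotence of $\mathcal{C}_f$ and $\mathcal{C}_w$ (a consequence of consistency) then yields individual rationality of $\lambda$.

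Once $\lambda$ is a bona fide individually rational matching, stability is routine: a blocking pair $(f,w)$ for $\lambda$ would, via substitutability applied to a restriction of $\lambda(f)$ to $\mu(f)$ or $\mu'(f)$, produce a blocking pair for $\mu$ or $\mu'$. Directly from $\lambda(f) = \mathcal{C}_f(\mu(f)\cup \mu'(f))$ and idempotence one reads off $\lambda \succeq \mu$ and $\lambda \succeq \mu'$; minimality among common upper bounds follows from the same identity applied to any other common upper bound $\nu$, together with substitutability. The meet is handled by the dual construction, defining $(\mu\wedge\mu')(w):=\mathcal{C}_w(\mu(w)\cup \mu'(w))$ and swapping the roles of the two sides throughout. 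Finally, to establish $\mu_F \succeq \mu \succeq \mu_W$ for every stable $\mu$, I would induct on the rounds of firm-proposing DA to show that no firm $f$ is ever rejected by any worker in $\mu(f)$, so the surviving pool of $f$ always contains $\mu(f)$ and hence $\mu_F(f) = \mathcal{C}_f(\text{final pool}) \succeq_f \mu(f)$ in the sense of the order. The step I expect to be hardest is the worker-side structural identity for $\lambda$; the rest is bookkeeping around that identity.
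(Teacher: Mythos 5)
You should first note that the paper does not actually prove this theorem: it is imported verbatim from Blair (1988), so the comparison is against the known proof of that result. Measured against it, your proposal fails at exactly the step you flag as the heart of the argument: the dual identity $\lambda(w) = \mathcal{C}_w\bigl(\mu(w)\cup\mu'(w)\bigr)$ is false, because it confuses the join with the meet. Under Blair's order the two sides' interests are opposed (this is the ``Opposition of interests'' proposition the paper records in Appendix A): $\mu\succeq\mu'$ holds if and only if $\mathcal{C}_w(\mu(w)\cup\mu'(w))=\mu'(w)$ for all $w\in W$. Hence for $\lambda=\mu\vee\mu'$ the worker side must land on the \emph{rejected} part of the union, not the chosen part. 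Already in the classical $2\times 2$ marriage instance with $f_1\colon w_1\succ w_2$, $f_2\colon w_2\succ w_1$, $w_1\colon f_2\succ f_1$, $w_2\colon f_1\succ f_2$, and the two stable matchings $\mu=\{(f_1,w_1),(f_2,w_2)\}$, $\mu'=\{(f_1,w_2),(f_2,w_1)\}$, your construction gives $\lambda=\mu$, so $\lambda(w_1)=\{f_1\}$ while $\mathcal{C}_{w_1}(\{f_1,f_2\})=\{f_2\}$. So the identity you propose to prove is refuted by the simplest nontrivial example; the correct one-to-one statement is $\lambda(w)=(\mu(w)\cup\mu'(w))\setminus\mathcal{C}_w(\mu(w)\cup\mu'(w))$, and your subsequent derivations of individual rationality and of the meet (which you define by the same choice formula on the worker side, making your join and meet collide) inherit the error.

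There is a second, independent obstruction: even with the dual identity corrected, the pointwise approach does not work at the stated level of generality. A celebrated point of Blair's paper is that in many-to-many markets with path-independent choice functions the lattice operations are \emph{not} the componentwise ones: the assignment $f\mapsto\mathcal{C}_f(\mu(f)\cup\mu'(f))$, together with its induced worker assignment, can fail to be a matching at all, and pointwise formulas for the join are only recovered under additional hypotheses such as cardinal monotonicity or quota-filling choice functions (Alkan 2002; Alkan and Gale 2003). Blair instead obtains the least upper bound by an iterative, deferred-acceptance-style fixed-point process seeded with the unions $\mu(f)\cup\mu'(f)$, using substitutability to force monotone convergence to a stable matching, which is then shown to dominate $\mu$ and $\mu'$ and to be minimal among stable matchings doing so. The salvageable parts of your writeup are the existence of $\mu_F$ and $\mu_W$ via generalized deferred acceptance (Roth 1984) with their extremality, the antisymmetry check, and the path-independence computation $\mathcal{C}_f(\lambda(f)\cup\mu(f))=\mathcal{C}_f(\mu(f)\cup\mu'(f))=\lambda(f)$, which would yield $\lambda\succeq\mu,\mu'$ \emph{if} $\lambda$ were a stable matching; but establishing that requires Blair's different mechanism, so as it stands the proposal has a genuine gap.
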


The number of choice functions that are path-independent is doubly exponential in the number of agents of the market~\citep{echenique2007counting}. Hence, we cannot associate to each of them an algorithm that can be encoded in space polynomial in the number of agents of the market and computes the output of the function on any input. To bypass this issue, previous work has assumed oracle access to choice functions, see, e.g.,~\citet{faenza2023affinely}. Here we introduce and consider a subset called \emph{standard} matching markets. These are matching markets $(F,W,({\cal C}_f)_{f \in F},(\mathcal C_w)_{w \in W})$ whose choice functions are path-independent, and which are given in input as an algorithm that can be encoded in space polynomial in $|F|+|W|$ and is guaranteed to compute, for each input, its output in time polynomial in $|F|+|W|$. Furthermore, if for all agents $a$ and all $T\subseteq F\cup W$, we have $|\mathcal C_a(T)|\le 1$, we call it a \emph{one-to-one} instance.

\subsection{Birkhoff's representation theorem and the poset of rotations}\label{sec:preliminaries:representations}

We next recall basic tools from order theory used throughout the paper.

\begin{definition}[Lower closed set]
    Given a poset $(X,\succeq)$, a \emph{lower closed set} of $X$ is any set $S\subseteq X$ such that $x\in S$ and $x'\preceq x$ implies $x'\in S$. The set of all lower closed sets of $X$ is denoted $\mathcal D(X)$. Note that $(\mathcal D(X), \supseteq)$ is a lattice.
\hfill $\diamond$ \end{definition}

Order-embeddings and order-isomorphisms are monotone functions between posets which preserve the partial order structure.

\begin{definition}[Order-embeddings and order-isomorphisms]
    Let $(X, \succeq_X)$ and $(Y, \succeq_Y)$ be posets. A function $\psi:X\to Y$ is an \emph{order-embedding} of $(X, \succeq_X)$ into $(Y, \succeq_Y)$, if for any $x,x'\in X$, we have $x\succeq_X x'\iff \psi(x)\succeq_Y \psi(x')$. Note that such a function must be injective, as $\psi(x) = \psi(x')$ implies $x\succeq_X x'$ and $x\preceq_X x'$. An order-embedding is an \emph{order-isomorphism} if it is also bijective.
\hfill $\diamond$ \end{definition}

A representation is an order-isomorphism between a distributive lattice and the lower closed sets of a partial order.

\begin{definition}[Representation]
    Let $\mathcal L = (X, \succeq)$ be a distributive lattice and $(B, \succeq_B)$ be a poset. $(B, \succeq_B)$ is a \emph{representation} of $\mathcal L$ if there exists an order-isomorphism $\psi:(X, \succeq)\to (\mathcal D(B), \supseteq)$. $\psi$ is called a \emph{representation function}. By definition, a representation function is invertible.
\hfill $\diamond$ \end{definition}

Next, we introduce Birkhoff's representation theorem, which allows us to represent a distributive lattice using the poset of join-irreducible elements.

\begin{theorem}[Birkhoff's Representation Theorem \citep{birkhoff1937rings}]\label{thm:birkhoff}
     Let $\mathcal L = (X, \succeq)$ be a distributive lattice. Then, $(X_j, \succeq)$ is a representation of $\mathcal L$, with (Birkhoff) representation function $\psi_j:X\to \mathcal D(X_j)$ given by $\psi_j(x) = \{x'\in X_j\mid x'\preceq x\}$.
\end{theorem}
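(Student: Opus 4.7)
The plan is to establish three facts: (i) $\psi_j$ is well-defined, i.e., $\psi_j(x) \in \mathcal D(X_j)$ for every $x \in X$; (ii) $\psi_j$ is an order-embedding of $(X, \succeq)$ into $(\mathcal D(X_j), \supseteq)$; and (iii) $\psi_j$ is surjective onto $\mathcal D(X_j)$. Well-definedness in (i) is immediate: if $y \in \psi_j(x)$ and $y' \preceq y$ with $y' \in X_j$, then $y' \preceq y \preceq x$, so $y' \in \psi_j(x)$. The ``$\Rightarrow$'' direction of (ii) is also direct: $x \succeq x'$ clearly implies $\{y \in X_j : y \preceq x'\} \subseteq \{y \in X_j : y \preceq x\}$.

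The core of the proof is the identity $x = \bigvee \psi_j(x)$ for every $x \in X$, which simultaneously yields the ``$\Leftarrow$'' direction of (ii) and sets up surjectivity. I would first prove this by induction on the height of $x$ in $(X, \succeq)$, noting that it holds trivially if $x \in X_j$, and otherwise $x = \bigvee Y$ for some $Y \subseteq X \setminus \{x\}$; applying the induction hypothesis to each $y \in Y$ (each strictly below $x$) and using $\psi_j(y) \subseteq \psi_j(x)$ gives $x = \bigvee_{y \in Y} \bigvee \psi_j(y) \preceq \bigvee \psi_j(x) \preceq x$. Once this is available, $\psi_j(x) \supseteq \psi_j(x')$ implies $x = \bigvee \psi_j(x) \succeq \bigvee \psi_j(x') = x'$, completing (ii).

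For surjectivity (iii), given $S \in \mathcal D(X_j)$, I would set $x := \bigvee S$ and show $\psi_j(x) = S$. The inclusion $S \subseteq \psi_j(x)$ is immediate. For the reverse, I need the key lemma which is the only place distributivity is used: if $y \in X_j$ and $y \preceq \bigvee S$, then $y \preceq s$ for some $s \in S$. To prove this lemma, I would apply distributivity to write
\[
y \;=\; y \wedge \bigvee_{s \in S} s \;=\; \bigvee_{s \in S}(y \wedge s),
\]
and then invoke the definition of join-irreducibility: if $y \wedge s \neq y$ for every $s \in S$, then $\{y \wedge s : s \in S\} \subseteq X \setminus \{y\}$ would be a set whose join equals $y$, contradicting $y \in X_j$. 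Hence $y = y \wedge s$, i.e., $y \preceq s$, for some $s \in S$, and since $S$ is lower closed in $X_j$ and $y \in X_j$, this forces $y \in S$.

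The main obstacle is this last lemma, since it is the one place the distributive hypothesis is essential: without it one could only conclude that $y$ is below some join of subsets of $S$, not below a single element of $S$, and surjectivity would genuinely fail (as indeed motivates the notion of partial representation introduced later in the paper). Once the lemma is in place, the remaining bookkeeping for (iii) is routine.
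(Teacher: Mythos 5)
Your proof is correct, but there is nothing in the paper to compare it against line-by-line: the paper states Theorem~\ref{thm:birkhoff} as a classical result of \cite{birkhoff1937rings} without proof. The closest in-paper argument is the proof of Lemma~\ref{lem:dist_closure_partial_rep}, and your parts (i)--(ii) coincide with it almost exactly: there too the backward direction of the order-embedding rests on the identity $x = \bigvee \psi(x)$, except that the paper outsources that identity to a citation (\cite[Theorem 9.4]{garg2015introduction}) whereas you prove it yourself by height induction --- a clean argument that, as you implicitly note, uses no distributivity and is valid in any finite lattice (your base case also silently handles the minimal element, since $\bigvee\emptyset$ is the bottom under the paper's convention, so the bottom is never join-irreducible). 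Your part (iii) is the genuinely additional content: the key lemma that $y \in X_j$ and $y \preceq \bigvee S$ force $y \preceq s$ for a single $s \in S$, via $y = \bigvee_{s\in S}(y \wedge s)$ and join-irreducibility, is precisely where distributivity enters (strictly speaking you are using the finite-join form of the distributive law, which follows from the paper's binary axiom by an easy induction --- worth one sentence). You also correctly diagnose that this is the step that fails for non-distributive lattices: without it, $\psi_j$ is injective but not surjective, which is exactly the gap the paper's notions of partial representation and distributive closure (Definition~\ref{def:partial_rep}, Lemma~\ref{lem:dist_closure_partial_rep}) are built to circumvent. So your write-up can be read as a self-contained proof of the cited theorem whose first half is the paper's Lemma~\ref{lem:dist_closure_partial_rep} and whose second half isolates the distributivity-dependent surjectivity argument the paper deliberately does without.
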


When a matching market is one-to-one, the stable matching lattice is distributive~\citep{knuth1976marriages}, and a representation is given by its poset of rotations~\citep{irving1987efficient}. Rotations are minimal differences between stable matchings. %For future use, we define them here more generally for any stable matching lattice.

\begin{definition}[Rotations]
    Let $\mathcal L = (\mathcal S, \succeq)$ be a stable matching lattice of a one-to-one market. Given two matchings $\mu,\mu'\in\mathcal S$ such that $\mu$ is an immediate predecessor of $\mu'$ in $\mathcal L$, the corresponding \emph{rotation} $\rho$ is given by $(\rho^+, \rho^-)$ where $\rho^+ = \mu'\setminus\mu$ and $\rho^- = \mu\setminus\mu'$. The set of all rotations of $\mathcal S$ is denoted $\Pi(\mathcal S)$.
\hfill $\diamond$ \end{definition}

We note that different pairs of stable matchings in the same instance can generate the same rotations. Furthermore, in general $|\Pi(\mathcal S)|\in O(|F\cup W|^2)$.

\begin{theorem}[\citet{gusfield1989stable}\label{thm:rotation_representation}]
    Let $(F, W, (\mathcal C_f)_{f\in F},(\mathcal C_w)_{w\in W})$ be a one-to-one stable matching instance, and let $(\mathcal S, \succeq)$ be the stable matching lattice, with join-irreducible elements $\mathcal S_j$. Then, $(\mathcal S, \succeq)$ is distributive. Furthermore, there is a bijection $\phi_j: \Pi(\mathcal S)\to \mathcal S_j$, which induces a partial order $\succeq$ on $\Pi(\mathcal S)$, such that $(\Pi(\mathcal S), \succeq)$ is a representation for $\mathcal S$ with representation function $\psi_{\mathcal S}: \mathcal S\to \mathcal D(\Pi(\mathcal S))$ so that
    \[\psi_{\mathcal S}^{-1}(R) = \left(\triangle_{\rho\in R}(\rho^-\triangle \rho^+)\right)\triangle \mu_W = \mu_W\cup\left(\bigcup_{\rho\in R}\rho^+\right)\setminus\left(\bigcup_{\rho\in R}\rho^-\right) \notag.\]
    Hence, $\psi^{-1}_{\mathcal S}(R)$ is the stable matching obtained from $\mu_W$ by adding all pairs in $\cup_{\rho \in R} \, \rho^+$ and then removing all those in $\cup_{\rho \in R}\, \rho^-$. We refer to $\psi_{\mathcal S}$ as the \emph{rotation representation function}.
\end{theorem}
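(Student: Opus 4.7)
I would split the proof into three stages: (i) show that $(\mathcal S, \succeq)$ is distributive; (ii) construct the bijection $\phi_j: \Pi(\mathcal S)\to \mathcal S_j$ and endow $\Pi(\mathcal S)$ with the pulled-back partial order so it becomes a representation of $\mathcal S$; (iii) verify the closed-form expression for $\psi_{\mathcal S}^{-1}(R)$.

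For stage (i), the one-to-one condition $|\mathcal C_a(T)| \le 1$ means each agent's choice function encodes a strict preference ordering. I would apply the classical Conway--Knuth construction: set $(\mu \vee \mu')(f) := \mathcal C_f(\mu(f)\cup \mu'(f))$ (firm's preferred partner from the two matchings) and $(\mu \wedge \mu')(f) := (\mu(f)\cup \mu'(f))\setminus (\mu \vee \mu')(f)$ (the leftover), then verify via path-independence and stability that the results are again stable matchings. Distributivity then reduces to the fact that $\max$ and $\min$ distribute over a totally ordered set, applied firm-by-firm.

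For stage (ii), distributivity together with Birkhoff's theorem (Theorem~\ref{thm:birkhoff}) yields the representation function $\psi_j(\mu) = \{\nu \in \mathcal S_j : \nu \preceq \mu\}$. In a finite distributive lattice, $\nu$ is join-irreducible iff it has a unique immediate predecessor $\nu^-$ (otherwise two distinct immediate predecessors would join to $\nu$). This lets me set $\phi_j^{-1}(\nu) := (\nu \setminus \nu^-, \nu^- \setminus \nu) \in \Pi(\mathcal S)$. Injectivity: two distinct join-irreducibles producing the same rotation pattern would agree on every pair affected by the rotation, yet stability around those pairs (propagated by path-independence) forces them to agree on the remaining pairs as well. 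Surjectivity: any rotation $\rho$ arises from some cover $\mu \prec \mu'$; taking the lattice join of the upper matchings of all covers with pattern $\rho$ yields a join-irreducible whose associated rotation is $\rho$. I then pull back the order, $\rho \succeq \rho' \iff \phi_j(\rho) \succeq \phi_j(\rho')$, and define $\psi_{\mathcal S}$ as the composition of $\psi_j$ with the induced isomorphism $\mathcal D(\mathcal S_j) \to \mathcal D(\Pi(\mathcal S))$.

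For stage (iii), I would prove the formula by induction on $|R|$. The base case $R = \emptyset$ gives $\mu_W$. For the inductive step, pick a maximal $\rho \in R$; then $\phi_j(\rho)$ is maximal in $\psi_j(\mu)$ for $\mu := \psi_{\mathcal S}^{-1}(R)$, so $\mu$ is an immediate successor of $\mu' := \psi_{\mathcal S}^{-1}(R\setminus\{\rho\})$ via the cover determining $\rho$, giving $\mu = (\mu' \setminus \rho^-) \cup \rho^+$. Combining with the induction hypothesis produces the claimed closed form.

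\textbf{Main obstacle.} The chief technical point is \emph{monotonicity of rotations}: as one traverses covers from $\mu_W$ up to $\mu_F$, each firm's partner strictly improves, and hence each fixed pair $(f,w)$ appears in at most one $\rho^+$ and at most one $\rho^-$ across all rotations in $\Pi(\mathcal S)$. This is precisely what is needed for the set-theoretic expression $\mu_W \cup (\bigcup_{\rho \in R} \rho^+) \setminus (\bigcup_{\rho \in R} \rho^-)$ to agree with the symmetric-difference expression $(\triangle_{\rho \in R}(\rho^- \triangle \rho^+)) \triangle \mu_W$, as well as with the ``apply-rotations-in-any-linear-extension'' reading that drives the inductive argument. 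Establishing monotonicity cleanly — together with the surjectivity claim in stage (ii) — is the main difficulty, and is typically handled by exploiting invariants of stable matchings (such as the matched-agent invariant) and the local structure of immediate covers.
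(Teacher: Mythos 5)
You cannot be checked against an in-paper argument here: the paper states this theorem purely as a citation to Gusfield and Irving and gives no proof of its own; GI's development is algorithmic (exposing and eliminating rotations along maximal chains from one optimal matching to the other), whereas your route --- distributivity via pointwise $\max/\min$, then Birkhoff, then a cover-labelling lemma --- is a legitimate, more lattice-theoretic reconstruction. Its skeleton is sound, and your stage (iii) correctly isolates exactly the monotonicity/disjointness facts (each pair lies in at most one $\rho^+$ and at most one $\rho^-$; cf.\ the first part of Proposition~\ref{prop:rotations_disjoint}) needed to reconcile the symmetric-difference form with the union-minus form. One small slip in stage (i): $(\mu\wedge\mu')(f):=(\mu(f)\cup\mu'(f))\setminus(\mu\vee\mu')(f)$ returns $\emptyset$ whenever $\mu(f)=\mu'(f)$; the meet must be the firm-wise \emph{minimum}, not the leftover.

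The genuine gap is in stage (ii). First, your surjectivity construction fails as written: with the paper's conventions ($\mu_W$ minimal, $\psi_{\mathcal S}(\mu_W)=\emptyset$, larger matchings corresponding to larger rotation sets), the join-irreducible attached to a rotation $\rho$ is the \emph{minimal} stable matching in whose rotation set $\rho$ occurs, i.e.\ $\psi_{\mathcal S}^{-1}$ of the principal lower closed set of $\rho$ --- which is the \emph{meet} of the upper matchings of all covers with pattern $\rho$, not their join. The join of those upper matchings corresponds to the lower closed set $(\Pi(\mathcal S)\setminus{\uparrow}\rho)\cup\{\rho\}$, which in general has several maximal elements and is not join-irreducible. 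Second, your injectivity justification (``stability propagated by path-independence'') is not the operative mechanism; the correct argument uses the pointwise operations from your own stage (i): if distinct join-irreducibles $\nu_1,\nu_2$ had the same pattern $(\rho^+,\rho^-)$, then either they are comparable, say $\nu_1\prec\nu_2$, whence $\nu_1\preceq\nu_2^-$, or they are incomparable, whence $m:=\nu_1\wedge\nu_2\prec\nu_1$ and $m\preceq\nu_1^-$; in either case, for any $(f,w)\in\rho^+$ the lower element matches $f$ to $w$ (pointwise minimum of two matchings both assigning $w$), forcing $\nu_i^-(f)\succeq_f w$, while by definition of the pattern $\nu_i^-(f)$ is strictly worse than $w$ for $f$ --- a contradiction. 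Both fixes are local; with them, and with the cover-labelling lemma stated and proved explicitly (the set-difference of any cover $\mu\prec\mu'$ equals that of the canonical cover $\nu^-\prec\nu$ of its Birkhoff label $\nu$, again by pointwise $\max/\min$), your plan goes through.
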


Note that we define the order $\succeq$ on ${\cal S}$ from the firm-perspective, while lower closed sets of rotations are defined so that the empty set corresponds to the worker-optimal stable matching $\mu_W$. 

In essence, in the one-to-one case each matching can be seen as a unique set of rotations applied to the worker-optimal matching. If $\rho\in \psi_{\mathcal S}(\mu)$, we say that $\rho$ \emph{occurs} in $\mu$. \citet{gusfield1989stable} also showed constructively that each partial order is order-isomorphic to the rotation poset of some stable matching instance.

\begin{theorem}[\citet{gusfield1989stable}\label{thm:gusfield_irving}]
    Let $(X, \succeq)$ be a poset. Then, there exists a stable matching instance $I_{(X, \succeq)}$ such that $(X, \succeq)$ is order-isomorphic to the poset of rotations $(\Pi(\mathcal S), \succeq)$ of the stable matching lattice $(\mathcal S, \succeq)$ for $I_{(X, \succeq)}$, with order-isomorphism given by $\phi: X\to \Pi(\mathcal S)$. $I_{(X, \succeq)}$ is called the \emph{GI instance} of $(X,\succeq)$. Given $(X,\succeq)$, we can compute $I_{(X, \succeq)}$, $\Pi(\mathcal S)$, and $\phi$ in $O(|X|^2)$ running time.
\end{theorem}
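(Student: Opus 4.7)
The plan is to construct $I_{(X,\succeq)}$ by assembling, for each element $x \in X$, a small ``rotation gadget'' that carries a single rotation $\rho_x$, and then wiring the gadgets together so that the dependencies among these rotations encode the partial order $\succeq$.

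First, for every $x \in X$ I would introduce two firms $f_x^1, f_x^2$ and two workers $w_x^1, w_x^2$, with preferences chosen so that the gadget alternates between two local configurations: a firm-preferred one matching $(f_x^1, w_x^1)$ and $(f_x^2, w_x^2)$, and a worker-preferred one matching $(f_x^1, w_x^2)$ and $(f_x^2, w_x^1)$. The transition between them is the rotation $\rho_x$, with $\rho_x^+ = \{(f_x^1, w_x^1), (f_x^2, w_x^2)\}$ (gained when moving up toward $\mu_F$) and $\rho_x^- = \{(f_x^1, w_x^2), (f_x^2, w_x^1)\}$ (lost in that direction). This is achieved by the usual ``flip'' pattern in which each $f_x^i$ ranks $w_x^i$ above $w_x^{3-i}$ while each $w_x^i$ ranks $f_x^{3-i}$ above $f_x^i$.

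Next, I would encode the order $\succeq$ by inserting, into the preference lists of the gadget for $x$, carefully chosen agents from the gadgets of each immediate predecessor $y \prec x$ in $(X,\succeq)$, so that $\rho_x$ can be performed only after every $\rho_y$ with $y \prec x$ has already been performed. The concrete recipe is the Gusfield-Irving one: one places a predecessor-gadget agent high enough in the ranking of a gadget-$x$ agent that it creates a potential blocking pair whose threat is only eliminated once the predecessor gadget has itself rotated. Under this wiring, the worker-optimal matching $\mu_W$ has every gadget in its worker-preferred configuration, and the firm-optimal matching $\mu_F$ has every gadget in its firm-preferred configuration.

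To verify correctness I would then establish three claims: (i) $\mu_W$ and $\mu_F$ are stable, by direct inspection of blocking pairs; (ii) every stable matching is obtained from $\mu_W$ by applying some lower closed subset $R$ of $\{\rho_x : x \in X\}$, where the order on these rotations, inherited from the lattice order on $\mathcal S$, coincides with $\succeq$, yielding the order-isomorphism $\phi(x) = \rho_x$; and (iii) no ``spurious'' rotations arise from the cross-gadget interactions. The main obstacle is part (iii): one must show that, at every intermediate matching reached from $\mu_W$ by applying a lower closed set $R$, the only available blocking structure is a single $\rho_x$ with $x$ minimal outside $R$, i.e., the cross-gadget preference insertions forbid premature rotations without creating any new ones. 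This comes down to a careful case analysis on which pairs can block at such intermediate matchings. Finally, the running time bound is immediate: the construction uses $O(|X|)$ agents and each preference list has length $O(|X|)$, so writing down $I_{(X,\succeq)}$, the poset $\Pi(\cs)$, and the bijection $\phi$ can all be carried out in $O(|X|^2)$ time.
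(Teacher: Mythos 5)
The paper never proves this statement---it imports it wholesale from \cite{gusfield1989stable}---so the right comparison is with the Irving--Leather/Gusfield--Irving construction, which is visible in the paper's own Example~\ref{ex:join_constr_aug}. There, precedence between rotations is enforced by \emph{sharing agents}: $f_4$ occurs in both $\rho_1$ and $\rho_3$, with $\rho_1$ moving $f_4$ from $w_4$ to $w_3$ and $\rho_3$ moving it onward to $w_6$, so $\rho_3$ is literally unavailable until $\rho_1$ has occurred, and rotations may have more than two pairs (e.g.\ $\rho_1$). Your route is genuinely different: agent-disjoint $2\times 2$ gadgets, one two-pair rotation per poset element, with precedence enforced by cross-gadget blocking threats. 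This architecture is in fact completable: one gets $4|X|$ agents, total preference-list length $O(|X|^2)$ (one insertion per covering relation), hence the claimed $O(|X|^2)$ construction time, and the resulting instance even satisfies the structural facts about GI instances that the paper later relies on (Propositions~\ref{prop:GI_opposite_prefs}, \ref{prop:GI_exact_matching}, and---vacuously, since your rotations are agent-disjoint---Proposition~\ref{prop:GI_rotations_overlap}). What the cited construction buys instead is a smaller rotation-to-agent overhead via chaining; what yours buys is disjoint rotation supports and uniform rotation size.

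The one point you must nail down, because a literal reading of your wiring recipe fails, is the placement and orientation of the inserted pair. Demand is monotone: if an agent demands an outsider while holding its \emph{better} intra-gadget partner, it also demands that outsider while holding the worse one. Consequently a cross pair oriented from the successor gadget's firm to the predecessor's worker can never be active exactly at the forbidden combination: activity at ($x$ firm-preferred, $y$ worker-preferred) would force activity at ($x$ worker-preferred, $y$ worker-preferred), destroying the stability of $\mu_W$. The workable orientation runs from the predecessor's firm, say $f_y^1$, to the successor's worker $w_x^1$, inserted \emph{strictly between} her two gadget partners ($f_x^2 \succ_{w_x^1} f_y^1 \succ_{w_x^1} f_x^1$) and strictly between his ($w_y^1 \succ_{f_y^1} w_x^1 \succ_{f_y^1} w_y^2$); then $f_y^1$ demands $w_x^1$ iff $\rho_y$ has not occurred, and $w_x^1$ demands $f_y^1$ iff $\rho_x$ has, so precisely the non-lower-closed configurations are blocked. ``High enough'' taken literally (above $f_x^2$) makes the pair block $\mu_W$ itself. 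Finally, your claim (iii) requires, beyond inspecting intermediate matchings, an argument that no stable matching ever \emph{matches} a cross pair: a cross-matched $f_y^1$ forces $w_y^1$ to be cross-matched to the firm of a strict predecessor gadget, and since the poset is finite this descending chain terminates at a minimal element $m$, where a blocking pair such as $(f_m^2, w_m^2)$ appears. With these details supplied, stable matchings biject with lower closed sets of $(X,\succeq)$, the rotations are exactly the $|X|$ gadget flips ordered by $\succeq$, and your proof goes through as a legitimately more economical alternative to the cited construction.
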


It follows immediately from Theorem~\ref{thm:rotation_representation} and Theorem~\ref{thm:gusfield_irving} that the original poset $(X, \succeq)$ is also a representation for the stable matching lattice $\mathcal S$ of the Gusfield and Irving matching market instance $I_{(X, \succeq)}$.

\begin{corollary}\label{cor:GI_repr}
    Let $(X, \succeq)$ be a poset, and let $(\mathcal S, \succeq)$ be the stable matching lattice for the instance $I_{(X, \succeq)}$. Then $(X, \succeq)$ is a representation for $(\mathcal S, \succeq)$, with representation function $\phi^{-1}\circ\psi_{\mathcal S}$.
\end{corollary}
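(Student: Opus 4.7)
The plan is to observe that this corollary is essentially a composition of two order-isomorphisms already established in the excerpt, and the main content is to verify that an order-isomorphism between posets lifts to an order-isomorphism between their lattices of lower closed sets.

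First I would recall the two ingredients. By Theorem~\ref{thm:rotation_representation}, the map $\psi_{\mathcal S}:(\mathcal S,\succeq)\to (\mathcal D(\Pi(\mathcal S)),\supseteq)$ is an order-isomorphism, since by definition a representation function is an order-isomorphism. By Theorem~\ref{thm:gusfield_irving}, the map $\phi:(X,\succeq)\to (\Pi(\mathcal S),\succeq)$ is an order-isomorphism between the given poset and the rotation poset of $I_{(X,\succeq)}$. Hence $\phi^{-1}:(\Pi(\mathcal S),\succeq)\to (X,\succeq)$ is also an order-isomorphism.

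Next I would lift $\phi^{-1}$ to lower closed sets using the paper's convention that a function on a set is applied element-wise. Concretely, I would define $\Phi:\mathcal D(\Pi(\mathcal S))\to \mathcal D(X)$ by $\Phi(R)=\bigcup_{\rho\in R}\{\phi^{-1}(\rho)\}$, and verify three things: (i) $\Phi(R)$ is indeed lower closed in $(X,\succeq)$, because if $x\preceq \phi^{-1}(\rho)$ for some $\rho\in R$ then $\phi(x)\preceq \rho$ by the order-isomorphism property of $\phi$, so $\phi(x)\in R$ by lower closure of $R$, giving $x=\phi^{-1}(\phi(x))\in \Phi(R)$; (ii) $\Phi$ is a bijection, with inverse $R'\mapsto \bigcup_{x\in R'}\{\phi(x)\}$, which takes lower closed sets of $X$ back to lower closed sets of $\Pi(\mathcal S)$ by the same argument; and (iii) $\Phi$ preserves the order $\supseteq$, since for lower closed $R,R'$, $R\supseteq R'$ holds if and only if $\Phi(R)\supseteq \Phi(R')$ (both directions follow from $\phi^{-1}$ being a bijection).

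Finally I would compose: $\phi^{-1}\circ \psi_{\mathcal S}=\Phi\circ \psi_{\mathcal S}$ is then an order-isomorphism $(\mathcal S,\succeq)\to (\mathcal D(X),\supseteq)$, which is precisely the definition of $(X,\succeq)$ being a representation of $(\mathcal S,\succeq)$ with representation function $\phi^{-1}\circ \psi_{\mathcal S}$. The only potentially delicate point — and really the only place where there is anything to check — is verifying that lower closedness is preserved under the element-wise extension of $\phi^{-1}$; once that is in place, the result is a one-line composition, so I do not anticipate any serious obstacle.
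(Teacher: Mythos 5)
Your proposal is correct and follows essentially the same route as the paper, which states the corollary as an immediate consequence of composing Theorem~\ref{thm:rotation_representation} (the order-isomorphism $\psi_{\mathcal S}$ onto $\mathcal D(\Pi(\mathcal S))$) with the order-isomorphism $\phi$ from Theorem~\ref{thm:gusfield_irving}. Your verification that the element-wise extension of $\phi^{-1}$ carries lower closed sets of $\Pi(\mathcal S)$ bijectively and order-preservingly onto lower closed sets of $X$ is exactly the routine detail the paper leaves implicit, and it is carried out correctly.
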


\section{Representing non-distributive lattices}\label{sec:join_constraints}

In this section, we introduce new tools for representing general non-distributive lattices, and show how to implement them on stable matching lattices. We will then use these tools in the next section to prove Theorem~\ref{thm:main1}. 

\subsection{Partial representation and distributive closure}\label{sec:join_constraints:partial_rep}

To extend Birkhoff's theorem to any (possibly non-distributive) lattice ${\cal L}$, we introduce the notion of partial representation. Intuitively, we relax the conditions on $\psi$ from order-embeddings to certain order-isomorphisms, and allow some lower closed sets of the representation poset to not correspond to elements of ${\cal L}$.

\begin{definition}[Partial representation]\label{def:partial_rep}
	Let $\mathcal L = (X,\succeq)$ be a lattice. Let $(B,\succeq_B)$ be a poset. Then, $(B,\succeq_B)$ is a \emph{partial representation} of $\mathcal L$ if there exists an order-embedding $\psi$ of $\mathcal L$ into $(\mathcal D(B), \supseteq)$ such that, if $\overline x\in X$ (resp., $\underline x\in X$) is the maximal (resp., minimal) element of $\mathcal L$, then $\psi(\overline x) = B$ (resp., $\psi(\underline x) = \emptyset$).
%\begin{itemize}	\item If $\overline x\in X$ is the maximal element of $\mathcal L$, then $\psi(\overline x) = B$.	\item If $\underline x\in X$ is the minimal element of $\mathcal L$, then $\psi(\underline x) = \emptyset$.\end{itemize}
$\psi$ is called a \emph{partial representation function}.
\hfill $\diamond$ \end{definition}

Note that a representation function for a distributive lattice is also a partial representation function.     For a lattice $\mathcal L = (X,\succeq)$, we define its \emph{distributive closure} to be the lattice $(\mathcal D(X_j), \supseteq)$. $(\mathcal D(X_j), \supseteq)$ can be used to define a ``canonical'' partial representation, as discussed next.

\begin{lemma}[Partial representation via the distributive closure]\label{lem:dist_closure_partial_rep}
    Let $\mathcal L = (X,\succeq)$ be a lattice. Then, $(X_j,\succeq)$ is a partial representation of $\mathcal L$, with partial representation function $\psi_X: X\to \mathcal D(X_j)$ given by
    $$\psi_X(x) = \{x'\in X_j\mid x'\preceq x\}.$$ 
    We refer to $\psi_X$ as the \emph{canonical} partial representation function for $(X, \succeq)$.
\end{lemma}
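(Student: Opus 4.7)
The plan is to verify each of the three conditions demanded by Definition~\ref{def:partial_rep} for the function $\psi_X$: that $\psi_X$ takes values in $\mathcal{D}(X_j)$, that it is an order-embedding of $(X,\succeq)$ into $(\mathcal{D}(X_j), \supseteq)$, and that it sends the top and bottom of $\mathcal L$ to $X_j$ and $\emptyset$ respectively. The first and third points are essentially bookkeeping; the real content lies in the reverse implication of the order-embedding property.

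\textbf{Step 1 (range and boundary values).} For $x \in X$, if $x' \in \psi_X(x)$ and $x'' \in X_j$ satisfies $x'' \preceq x'$, then $x'' \preceq x$ by transitivity, so $x'' \in \psi_X(x)$; hence $\psi_X(x)$ is lower closed in $(X_j,\succeq)$. For the top element $\overline x$, every $x' \in X_j$ satisfies $x' \preceq \overline x$, so $\psi_X(\overline x) = X_j$. For the bottom $\underline x$: by the convention $\underline x = \vee \emptyset$, the empty set (which is a subset of $X \setminus \{\underline x\}$) witnesses that $\underline x \notin X_j$; since $\underline x$ is the unique element $\preceq \underline x$ in $X$, we get $\psi_X(\underline x) = \emptyset$.

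\textbf{Step 2 (order-embedding, easy direction).} If $x \succeq x'$, then any $y \in \psi_X(x')$ satisfies $y \preceq x' \preceq x$, so $y \in \psi_X(x)$; hence $\psi_X(x) \supseteq \psi_X(x')$.

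\textbf{Step 3 (order-embedding, hard direction).} This is the main obstacle. I would first establish the classical auxiliary lemma: for every $x \in X$,
\[
x = \bigvee \psi_X(x),
\]
i.e., every element of a finite lattice is the join of the join-irreducibles below it. Granting this, if $\psi_X(x) \supseteq \psi_X(x')$, taking joins on both sides and using monotonicity of $\vee$ gives $x = \vee \psi_X(x) \succeq \vee \psi_X(x') = x'$, as desired.

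\textbf{Step 4 (proof of the auxiliary lemma).} The inequality $\vee \psi_X(x) \preceq x$ is immediate since every element of $\psi_X(x)$ is $\preceq x$. For the reverse, I would argue by contradiction using finiteness of $X$: pick $x$ minimal with $\vee \psi_X(x) \prec x$. If $x$ is join-irreducible, then $x \in \psi_X(x)$, giving $\vee \psi_X(x) \succeq x$, a contradiction. Otherwise $x = \vee Y$ for some $Y \subseteq X \setminus\{x\}$; each $y \in Y$ satisfies $y \preceq x$ (since $x$ is an upper bound of $Y$) and $y \neq x$, hence $y \prec x$. By minimality of $x$, each such $y$ equals $\vee \psi_X(y)$; and since $y \preceq x$ we have $\psi_X(y) \subseteq \psi_X(x)$, whence $y = \vee \psi_X(y) \preceq \vee \psi_X(x)$. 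Taking the join over $y \in Y$ yields $x = \vee Y \preceq \vee \psi_X(x)$, contradicting the strict inequality and completing the proof.
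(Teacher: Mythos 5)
Your proof is correct and takes essentially the same route as the paper's: the easy direction is checked directly, and the hard direction is reduced to the classical identity $x = \bigvee \psi_X(x)$, which the paper simply cites (to \cite[Theorem 9.4]{garg2015introduction}) while your Step 4 proves it self-contained via a sound minimal-counterexample argument. As a minor bonus, your Step 1 makes explicit two points the paper leaves implicit: that $\psi_X(x)$ is indeed lower closed, and that $\underline x \notin X_j$ (via the convention $\vee\emptyset = \underline x$), which is what legitimately forces $\psi_X(\underline x) = \emptyset$.
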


\begin{proof}
    First recall that, for $x \in X$, we have $\bigvee \psi_X(x) = x$ (see, e.g., \cite[Theorem 9.4]{garg2015introduction}). Now let $x, x' \in X$. We need to show that $x\succeq x'\iff \psi_X(x)\supseteq \psi_X(x')$.  
         
    To show one direction, suppose  $x\succeq x'$. Then, any $y\in X_j$ with $y\preceq x'$ also satisfies $y\preceq x$. Thus, $$\psi_X(x)=\{y \in X_j : y \preceq x\}\supseteq \{y \in X_j : y \preceq x'\}= \psi_X(x').$$ For the converse direction, assume $\psi_X(x)\supseteq \psi_X(x')$. Then $x = \bigvee\psi_X(x)\succeq \bigvee \psi_X(x') =  x'$, as required. %So, $x\succeq x'$ if and only if $\psi_X(x)\supseteq \psi_X(x')$.

    Next, consider the maximal element $\overline x\in X$. Then $\overline x\succeq x'$ for all $x'\in X_j$, so $\psi_X(\overline x) = X_j$. Let $\underline x\in X$ be the minimal element. Then $\underline x \prec x'$ for all $x'\in X_j$, and so $\psi_X(\underline x) = \emptyset$. We conclude that $X_j$ is a partial representation for $\mathcal L$, with partial representation function $\psi_X$.
\end{proof}

\begin{figure}
    \centering
    \begin{subfigure}[t]{.3\linewidth}
    \centering\includegraphics[scale=0.5]{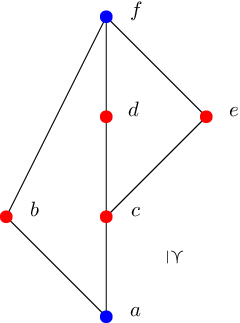}
    \caption{$(X, \succeq)$}
    \end{subfigure}
    \begin{subfigure}[t]{.3\linewidth}
    \centering\includegraphics[scale=0.5]{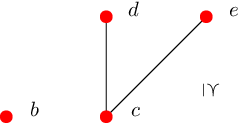}
    \caption{$(X_j, \succeq)$}
    \end{subfigure}
    \begin{subfigure}[t]{.3\linewidth}
    \centering\includegraphics[scale=0.5]{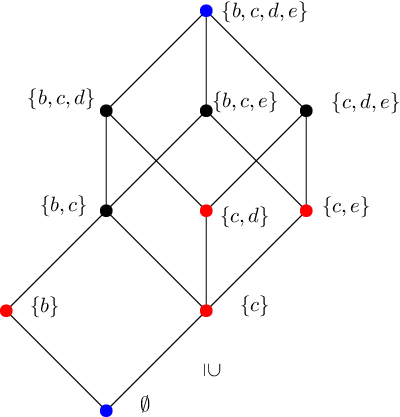}
    \caption{$(\mathcal D(X_j), \supseteq)$}
    \end{subfigure}
    \caption{Hasse diagrams of a non-distributive lattice $(X,\succeq)$, the corresponding poset of join-irreducible elements $(X_j,\succeq)$, and the distributive closure $(\mathcal D(X_j), \supseteq)$ of $(X,\succeq)$. See Example~\ref{ex:distributive_closure} and Example~\ref{example:join_constr_dist_closure} for discussions.}
    \label{fig:distributive_closure}
\end{figure}
We next present an example illustrating a non-distributive lattice and its distributive closure.

\begin{example}\label{ex:distributive_closure}
Consider the non-distributive lattice in Figure~\ref{fig:distributive_closure}(a). The join-irreducible elements are given by $X_j = \{b, c, d, e\}$, and the corresponding poset is shown in Figure~\ref{fig:distributive_closure}(b). The family of all lower closed sets, $\mathcal D(X_j)$, is shown in Figure~\ref{fig:distributive_closure}(c), along with the set containment relation that defines the distributive closure $(\mathcal D(X_j), \supseteq)$. The canonical partial representation function $\psi_X:X\to \mathcal D(X_j)$ is given by $\psi_X(x) = \{x'\in X_j\mid x'\preceq x\}$. In particular, $\psi_X$ is defined as
\begin{table}[H]
    \centering
    \begin{tabular}{|c||c|c|c|c|c|c|}
        \hline
        $x$ & $a$ & $b$ & $c$ & $d$ & $e$ & $f$ \\
        \hline
        $\psi_X(x)$ & $\emptyset$ & $\{b\}$ & $\{c\}$ & $\{c, d\}$ & $\{c, e\}$ & $\{b, c, d, e\}$  \\
        \hline
    \end{tabular}
\end{table}
We can check that $\psi_X$ preserves order, maps $\psi_X(a) = \emptyset$, and maps $\psi_X(f) = X_j$. It follows that $\psi_X$ is an order-embedding from $(X, \succeq)$ to $(\mathcal D(X_j), \supseteq)$. \hfill $\triangle$
\end{example}

Given a non-distributive lattice $(X, \succeq)$, we now have a framework for producing a stable matching instance $I$, such that there is an order-embedding from $(X, \succeq)$ to the stable matching lattice $(\mathcal S, \succeq)=(\mathcal S(I), \succeq)$. See Figure~\ref{fig:relations} for reference. We can start with the poset of join-irreducible elements $(X_j, \succeq)$, and construct the distributive closure $(\mathcal D(X_j), \supseteq)$. By Lemma~\ref{lem:dist_closure_partial_rep}, $(X_j,\succeq)$ partially represents $(X, \succeq)$ via the canonical partial representation function $\psi_X: X\to \mathcal D(X_j)$. Then, we can use Gusfield and Irving's construction (see Theorem~\ref{thm:gusfield_irving}) to produce a stable matching instance $I_{(X_j, \succeq)}$ with stable matching lattice $(\mathcal S, \succeq)$, such that the rotation poset $(\Pi(\mathcal S), \succeq)$ is order-isomorphic to $(X_j, \succeq)$ via $\phi: X_j\to \Pi(\mathcal S)$. The order-isomorphism $\phi$ then induces an order-isomorphism between the distributive closure $(\mathcal D(X_j), \supseteq)$ and the lower closed sets of the rotation poset $(\mathcal D(\Pi(\mathcal S)), \supseteq)$. Then, we know by Theorem \ref{thm:rotation_representation} that $(\Pi(\mathcal S), \succeq)$ represents $(\mathcal S, \succeq)$ with representation function $\psi_{\mathcal S}: \mathcal S\to \mathcal D(\Pi(\mathcal S))$. Finally, we deduced that $\psi_{\mathcal S}^{-1}\circ \phi\circ \psi_X$ is an order-embedding of $(X, \succeq)$ into the stable matching lattice $(\mathcal S, \succeq)$.

\begin{figure}[H]
    \centering
    \includegraphics[width=.7\linewidth]{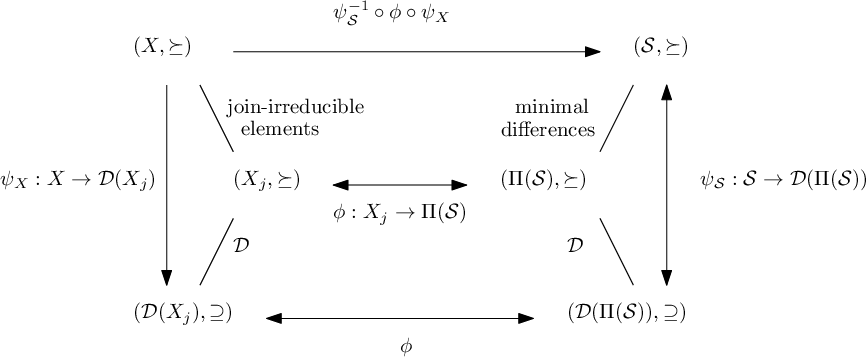}
    \caption{The functions connecting the non-distributive lattice $(X, \succeq)$, its distributive closure $(\mathcal D(X_j), \supseteq)$, the stable matching lattice $(\mathcal S, \succeq)$, and the rotation poset $(\Pi(\mathcal S), \succeq)$. The arrows indicate an order-embedding, and double arrows indicate an order-isomorphism.}
    \label{fig:relations}
\end{figure}

\subsection{Join constraints on lattices}\label{sec:join_constraints:join_constraints}

The main tool we use throughout the paper is the \emph{join constraint}, which is a constraint on a lattice that excludes certain elements.

\begin{definition}[Join constraint]\label{def:join_constraint}
    Let $(X, \succeq)$ be a lattice with partial representation $(B, \succeq)$ and partial representation function $\psi$. A \emph{join constraint} on $B$ is a pair $(\alpha, \beta)$ where $\alpha, \beta: 2^B\to \{0,1\}$ are functions of the form
    $$\alpha(T) = (\mathbf{1}_{b_{11}}(T)\vee \mathbf{1}_{b_{12}}(T)\vee\dots)\wedge (\mathbf{1}_{b_{21}}(T)\vee \mathbf{1}_{b_{22}}(T)\vee\dots)\wedge\dots, $$ $$\beta(T) = \mathbf{1}_{b_{1}}(T)\wedge\mathbf{1}_{b_{2}}(T)\wedge\dots$$
    for some $b_{11}, b_{12},\dots\in B$ such that $b_{i_1i_2}\not\prec b_{i_3i_4}$ for all $i_1,i_2,i_3,i_4$ and $b_1,b_2,\cdots\in B$, where $\mathbf{1}_b(T)$ is the indicator that $b\in T$. A set $T\in \mathcal D(B)$ \emph{satisfies} the join constraint $(\alpha, \beta)$ if, whenever $\alpha(T) = 1$, we also have $\beta(T) = 1$. Similarly, an element $x\in X$ satisfies the join constraint if $\psi(x)$ satisfies the join constraint. The \emph{arguments} of the join constraint are the elements $b_{11},b_{12},\dots$ and $b_{1}, b_2,\dots$. %We sometimes denote $(\alpha, \beta)$ by $(\{b_{11}, b_{12},\dots\}, \{b_1, b_2,\dots\})$.
\hfill $\diamond$ \end{definition}

Note that in the proof of Lemma~\ref{lem:poly_join_constrs} and Theorem~\ref{thm:main1}, only join constraints of a special form where $\alpha(T) = \mathbf{1}_{b_{11}}(T)\wedge \mathbf{1}_{b_{21}}(T)\wedge \dots$ are used. The more general form is required for the proof of Theorem~\ref{thm:main2}.

We next argue that, by applying a polynomial number of join constraints to the distributive closure, we can obtain the original lattice. Note that the running time of the algorithm from the lemma below does not depend on $|{\mathcal D}(X_j)|$, which may be exponential in $|X|$.

\begin{lemma}[Polynomially many join constraints suffice to describe a lattice]\label{lem:poly_join_constrs}
Let $(X,\succeq)$ be a lattice, with canonical partial representation $\psi_X$, and distributive closure $(\mathcal D(X_j), \supseteq)$. Then there exists a set $\Omega$ of $O(|X|^2)$ join constraints on $X_j$, such that $\psi_X$ is an order-isomorphism from $(X, \succeq)$ to\begin{equation}\label{eq:T-satisfies-Omega}(\{T\in \mathcal D(X_j)\mid T \text{ satisfies } (\alpha,\beta)\ \forall (\alpha, \beta)\in\Omega\}, \supseteq).\end{equation} Furthermore, the number of arguments $|(\alpha, \beta)|$ for each join constraint $(\alpha,\beta)\in \Omega$ is at most $O(|X|)$, and given $(X,\succeq)$, the set $\Omega$ can be computed in $O(|X|^3)$ time.
\end{lemma}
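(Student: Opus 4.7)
The plan is to define $\Omega$ by taking one join constraint for each pair $(x,y)\in X\times X_j$. Specifically, for each such pair I let $A_{x,y}\subseteq X_j$ be the set of $\preceq$-maximal elements of $\psi_X(x)\cup\{y\}$, and define
$$\alpha_{x,y}(T) \;=\; \bigwedge_{b\in A_{x,y}} \mathbf{1}_b(T),\qquad \beta_{x,y}(T) \;=\; \bigwedge_{b\in \psi_X(x\vee y)} \mathbf{1}_b(T).$$
By construction $A_{x,y}$ is an antichain, so the syntactic restriction on the arguments of $\alpha$ is respected, and $|A_{x,y}|+|\psi_X(x\vee y)|=O(|X|)$. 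Informally the constraint asserts: if $T$ contains $\psi_X(x)\cup\{y\}$, then it must also contain $\psi_X(x\vee y)$. I will also use the observation that, because every $T\in\mathcal D(X_j)$ is lower closed, the condition $A_{x,y}\subseteq T$ is equivalent to $\psi_X(x)\cup\{y\}\subseteq T$.

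The forward direction---showing that every $T=\psi_X(x^*)$ in the image of the canonical partial representation satisfies each constraint---is routine. If $\alpha_{x,y}(T)=1$, then $\psi_X(x)\cup\{y\}\subseteq T=\psi_X(x^*)$, and since $\psi_X$ is an order-embedding by Lemma~\ref{lem:dist_closure_partial_rep}, this yields $x,y\preceq x^*$. Hence $x\vee y\preceq x^*$ and $\psi_X(x\vee y)\subseteq \psi_X(x^*)=T$, i.e., $\beta_{x,y}(T)=1$.

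The main work is the backward direction: any $T\in \mathcal D(X_j)$ satisfying all constraints lies in $\mathrm{Im}(\psi_X)$. The approach is to show $T=\psi_X(\bigvee T)$; the inclusion $T\subseteq \psi_X(\bigvee T)$ is immediate from $T\subseteq X_j$. For the reverse inclusion I plan to prove, by induction on $|S|$, the auxiliary statement that $\psi_X(\bigvee S)\subseteq T$ for every $S\subseteq T$, then specialize to $S=T$. The cases $|S|\le 1$ follow from lower closedness of $T$ together with $\psi_X(\underline x)=\emptyset$ (Lemma~\ref{lem:dist_closure_partial_rep}). For $|S|\ge 2$ I pick any $s\in S$, set $x'=\bigvee(S\setminus\{s\})\in X$, and invoke the constraint $(\alpha_{x',s},\beta_{x',s})$: the inductive hypothesis gives $\psi_X(x')\subseteq T$ and $s\in T$, so $\alpha_{x',s}(T)=1$, which forces $\psi_X(\bigvee S)=\psi_X(x'\vee s)\subseteq T$. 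I expect this iterative step to be the main subtlety: it is essential that the first coordinate of the indexing pair ranges over all of $X$ rather than only $X_j$, because the partial join $\bigvee(S\setminus\{s\})$ is in general not join-irreducible, and one can construct small examples where pair constraints restricted to $X_j\times X_j$ fail to exclude some lower closed set from $\mathcal D(X_j)\setminus\mathrm{Im}(\psi_X)$.

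For the quantitative claims, $|\Omega|=|X|\cdot|X_j|=O(|X|^2)$ and each constraint has at most $2|X_j|=O(|X|)$ arguments; combined with the two directions above, this makes $\psi_X$ the advertised order-isomorphism onto the set of satisfying lower closed sets. To meet the $O(|X|^3)$ running time I would first precompute $\psi_X(x)$ for all $x\in X$ in $O(|X|^2)$ time, then compute $M(x)=\max_{\preceq}\psi_X(x)$ for each $x$ in $O(|X|^3)$ total; each $A_{x,y}$ can then be extracted from $M(x)\cup\{y\}$ in $O(|X|)$ time, and $\psi_X(x\vee y)$ is a lookup after reading $x\vee y$ from $A^\vee$. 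Summing over the $O(|X|^2)$ pairs yields the claim.
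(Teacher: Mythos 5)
Your proof is correct and takes essentially the same route as the paper: the paper indexes its constraints by all pairs $(x,y)\in X\times X$, with $\alpha^{xy}$ supported on the maximal (undominated) elements of $\psi_X(x)\cup\psi_X(y)$ and $\beta^{xy}$ on $\psi_X(x\vee y)$, and its backward direction likewise accumulates the elements of $T$ one join at a time via the constraints $(\alpha^{x_{i+1}y_i},\beta^{x_{i+1}y_i})$. Your restriction of the second index to $X_j$ is a harmless economy (the element joined at each inductive step always lies in $T\subseteq X_j$, and for $y\in X_j$ the maximal elements of $\psi_X(x)\cup\{y\}$ coincide with those of $\psi_X(x)\cup\psi_X(y)$), and your forward direction, antichain observation, and runtime accounting all match the paper's argument.
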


\begin{proof}
    For each $x, y\in X$, define $Y_{\alpha}^{xy} = \psi_{X}(x)\cup \psi_{X}(y)$ and $X_{\alpha}^{xy}=\{z \in Y^{xy}_\alpha \mid z \not\prec z'$ for any $z' \in Y^{xy}_\alpha\}$. That is, the set $X^{xy}_\alpha$ is the set of undominated elements of $Y_\alpha^{xy}$ with respect to $\succ$. Also define $X_{\beta}^{xy} = \psi_{X}(x\vee y)$. 
    
    Let $\alpha^{xy} = [T\mapsto \bigwedge_{z\in X_{\alpha}^{xy}}\mathbf{1}_{z}(T)]$  and $\beta^{xy} = [T\mapsto \bigwedge_{z\in X_{\beta}^{xy}}\mathbf{1}_{z}(T)]$. By construction, no two elements of $X_{\alpha}^{xy}$ dominate each other with respect to $\prec$, as required by Definition~\ref{def:join_constraint}.
    
    Consider the set of join constraints
    \[\Omega = \left\{(\alpha^{xy}, \beta^{xy}) \mid x, y\in X\right\}.\]
    $|\Omega|$ has size $O(|X|^2)$. Also, since $\alpha^{xy}$ and $\beta^{xy}$ each have $O(|X|)$ arguments, $\Omega$ can be computed in $O(|X|^3)$ time. We know that $\psi_X$ is an order-embedding. Thus, we simply need to show that its image is exactly $Q=(\{T\in \mathcal D(X_j)\mid T \text{ satisfies } (\alpha,\beta)\ \forall (\alpha, \beta)\in\Omega\}$. 

    To show one direction, fix any $w \in X$. We show $\psi_X(w)\in Q$. Let $x, y\in X$.  Assume $\alpha^{xy}(\psi_X(w)) = 1$. Then, $X_{\alpha}^{xy} \subseteq \psi_X(w)$. By definition, every element of $Y^{xy}_\alpha\setminus X_{\alpha}^{xy}$ is dominated by an element of $Y_{\alpha}^{xy}$, thus by transitivity of $\succ$ by an undominated element of $Y^{xy}_\alpha$. The undominated elements of $Y^{xy}_\alpha$ are exactly $X_\alpha^{xy}$. Since $\psi_X(w)$ is a lower closed set, we know therefore that $Y_{\alpha}^{xy}\subseteq \psi_X(w)$. By definition, $\psi_X(x)\cup \psi_X(y) =Y^{xy}_\alpha$. The latter two facts imply $w\succeq x,y$ by definition of order-embedding. Thus, $w\succeq z: = x\vee y$, and so $\psi_X(z)\subseteq\psi_X(w)$ as well, and so $\beta^{xy}(\psi_X(w)) = 1$. So, $w$ satisfies the join constraint $(\alpha^{xy}, \beta^{xy})$ for all $(\alpha^{xy}, \beta^{xy})\in \Omega$.

    To show the other direction, fix any $T\in \{T\in \mathcal D(X_j)\mid T \text{ satisfies } (\alpha,\beta)\ \forall (\alpha, \beta)\in\Omega\}$.  Define $w = \bigvee T$. We claim $\psi_X(w)=T$, which completes the proof. Clearly, $T \subseteq \{x' \in X_j : x' \preceq w \} = \psi_X(w)$. To show that $T = \psi_X(w)$, fix any ordering $x_1,\cdots, x_k$ over the elements in $T$, and define $y_i = x_1\vee\cdots\vee x_i$ for $i = 1,\cdots, k$. First observe $\psi_X(x_i)\subseteq T$ for each $1 \leq i \leq k$: indeed, since $T\in \mathcal D(X_j)$, we know that $v\in X_j$ with $v \preceq x_i$ satisfies $v\in T$, which implies that $\psi_X(x_i)\subseteq T$ by definition of $\psi_X$. 
    
    We next show by induction that $\psi_X(y_i)\subseteq T$ for each $1\le i\le k$. The base case $i=1$, follows from $y_1 = x_1$ and the previous discussion. Now assume that $\psi_X(y_i)\subseteq T$. Consider $(\alpha^{x_{i+1}y_i}, \beta^{x_{i+1}y_i})\in\Omega$. We know that $\psi_X(x_{i+1}), \psi_X(y_i)\subseteq T$, and so $\alpha^{x_{i+1}y_i}(T) = 1$. Since $T$ satisfies $(\alpha^{x_{i+1}y_i}, \beta^{x_{i+1}y_i})$, we know that $\beta^{x_{i+1}y_i}(T) = 1$. Thus, $\psi_X(y_{i+1}) = \psi_X(x_{i+1}\vee y_i)\subseteq T$, showing the inductive step. 
    
    We conclude that $\psi_X(w) = \psi_X(\bigvee T)= \psi_X(y_k) \subseteq T$. 
\end{proof}

Lemma~\ref{lem:poly_join_constrs} is illustrated in the next example.

\begin{example}\label{example:join_constr_dist_closure}
    Consider the lattice $(X, \succeq)$ in Figure~\ref{fig:distributive_closure}, along with its poset of join-irreducible elements $(X_j, \succeq)$ and distributive closure $(\mathcal D(X_j), \supseteq)$. Consider the set of join constraints $\Omega = \{(\alpha^1, \beta^1), (\alpha^2, \beta^2)\}$ given by
    $\alpha^1(T) = \mathbf{1}_{b}(T)\wedge \mathbf{1}_c(T), \beta^1(T) = \mathbf{1}_{d}(T)\wedge \mathbf{1}_e(T)$ and
    $\alpha^2(T) = \mathbf{1}_d(T) \wedge \mathbf{1}_e(T),
    \beta^2(T) = \mathbf{1}_{b}(T).$
    Then, the elements of $\mathcal D(X_j)$ that satisfy $(\alpha^1, \beta^1)$ and $(\alpha^2, \beta^2)$ are $\emptyset$,  $\{b\}$, $\{c\}$, $\{c, d\}$, $\{c, e\}$, and $\{b, c, d, e\}$. These are exactly the elements of $\psi_X(X)$. \hfill $\triangle$
\end{example}

\subsection{Implementing join constraints for the stable matching lattice}\label{sec:join_constraints:implement}

Following Lemma~\ref{lem:poly_join_constrs}, to prove Theorem~\ref{thm:main1}, we would like to create, given a lattice $(X,\succ)$, a standard matching market instance whose stable matchings admit an order-isomorphism with~\eqref{eq:T-satisfies-Omega}. In order to do that, we first introduce the following definition. Throughout the section, fix a GI instance  $I = (F, W, (\mathcal C_f)_{f\in F}, (\mathcal C_w)_{w\in W})$ with stable matching lattice $(\mathcal S, \succeq)$ and representation function $\psi_{\mathcal S}: \mathcal S\to \mathcal D(\Pi(\mathcal S))$. We also let $\Omega$ be a set of join constraints defined over $\Pi(\mathcal S)$.

%implement join constraints on a stable matching lattice, we need to ensure that the underlying matching market instance satisfies an appropriate invariant. For the remainder of this section, we fix a GI stable matching instance  $I = (F, W, \mathcal C_F, \mathcal C_W)$ with stable matching lattice $(\mathcal S, \succeq)$.

\begin{definition}[$\Omega$-Extension of a matching market]\label{def:extension-new}
    Let $I' = (F', W', $ $(\mathcal C'_f)_{f\in F'}, (\mathcal C'_w)_{w\in W'})$ be a matching market with stable matching lattice $(\mathcal S',\succeq')$ and such that $F\subseteq F'$, $W\subseteq W'$. We say that $I'$ is an \emph{$\Omega$-extension} of $I$ if there exists a function $\xi_{\mathcal S'}: \mathcal S'\to \mathcal S$ that preserves any edge in $2^{F\times W}$, and is an order-isomorphism from $(\mathcal S',\succeq')$ to
    $(\{\mu \in \mathcal S : \mu \hbox{ satisfies } (\alpha,\beta), \hbox{ for all $(\alpha,\beta)\in \Omega)$}\},\succeq)$.
    \hfill $\diamond$ 
\end{definition}

%With the terminology from Definition~\ref{def:extension-new}, our goal is to find an $\Omega$-extension

We next show that a standard $\Omega$-extension for a GI instance always exists.

\begin{theorem}[GI instances are $\Omega$-extendable] \label{thm:omega_extension}
There exists an $\Omega$-extension $I^*$ for $I$ with function $\xi_{\mathcal S^*}$ that has $O(|\Omega||W|)$ additional agents and is a standard matching market instance. Given $I$ as an input, $I^*$ can be computed in $O(|\Omega||F|(|W|+|(\alpha,\beta)^*|))$ time, where $|(\alpha,\beta)^*|$ is the maximum number of arguments in a join constraint in $\Omega$. Additionally, $\psi_{\mathcal S^*} = \psi_{\mathcal S}\circ \xi_{\mathcal S^*}: \mathcal S^* \rightarrow \mathcal D(\Pi(\mathcal S))$ is a partial representation function for the stable matching lattice $(\mathcal S^*, \succeq)$ of $I^*$.
\end{theorem}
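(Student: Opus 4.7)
It suffices to handle a single join constraint at a time. The plan is to design a \emph{join constraint augmentation}: given a matching market $J$ with stable matching lattice $(\mathcal{S}_J,\succeq)$ equipped with a partial representation function $\psi_J:\mathcal{S}_J\to\mathcal{D}(\Pi(\mathcal{S}))$, and a single join constraint $(\alpha,\beta)$ defined over $\Pi(\mathcal{S})$, construct a matching market $J'$ together with a projection $\xi$ that acts as the identity on the firm-worker pairs of $J$ and realizes $J'$ as an $\{(\alpha,\beta)\}$-extension of $J$. Applying this construction sequentially---starting from $J=I$ and adding one join constraint at a time---and composing the resulting projections yields the desired $I^*$ and $\xi_{\mathcal{S}^*}$. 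Composability holds because all previously added auxiliary agents appear only in auxiliary pairs, so the firm-worker pairs of $I$ that witness a rotation of $\Pi(\mathcal{S})$ can still be referenced unambiguously during the next augmentation. Each augmentation adds $O(|W|)$ agents and runs in time $O(|F|(|W|+|(\alpha,\beta)|))$; iterating over $\Omega$ yields $O(|\Omega||W|)$ additional agents and $O(|\Omega||F|(|W|+|(\alpha,\beta)^*|))$ total running time. Since the new choice functions will be encoded by explicit small acceptance rankings, $I^*$ is a standard matching market.

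\textbf{Single-constraint augmentation.} Write $\alpha(T) = \bigwedge_i \bigvee_j \mathbf{1}_{b_{ij}}(T)$ and $\beta(T) = \bigwedge_k \mathbf{1}_{b_k}(T)$. A matching $\mu$ violates $(\alpha,\beta)$ exactly when every $\alpha$-clause is ``hit'' by some rotation already in $\psi_J(\mu)$, yet some $b_k \notin \psi_J(\mu)$; and $b_k \notin \psi_J(\mu)$ iff every pair of $b_k^-$ is still present in $\mu$, so the violation is witnessed by specific firm-worker pairs appearing in $\mu$. For each $k$, I add an auxiliary worker $w'_k$ and a default auxiliary firm $f'_k$; for each disjunction $\bigvee_j \mathbf{1}_{b_{ij}}$ of $\alpha$, I add a short chain of auxiliary agents whose only stable configuration records whether that disjunction is satisfied by $\psi_J(\mu)$. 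The choice function of $w'_k$ is then designed so that $w'_k$ prefers a target firm $f_k$ (chosen from some pair in $b_k^-$) over $f'_k$ precisely when every $\alpha$-chain certifies its clause; the choice function of $f_k$ is extended on $W\cup\{w'_k\}$ so that $f_k$ ranks $w'_k$ above its $b_k^-$-partner. With these assignments, $(f_k,w'_k)$ is a blocking pair in the extension of $\mu$ exactly when $\mu$ violates $(\alpha,\beta)$.

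\textbf{Correctness and partial representation.} Verification proceeds in two directions. Given $\mu$ satisfying $(\alpha,\beta)$, I define a canonical extension $\mu'$ by matching each $w'_k$ to $f'_k$ and placing the auxiliary chains in their default configuration. Individual rationality of $\mu'$ follows from path-independence of the constructed choice functions; absence of blocking pairs reduces to (i) stability of $\mu$ in $J$, (ii) unavailability of a better option for every $w'_k$ because either some $\alpha$-clause fails or $b_k\in \psi_J(\mu)$, and (iii) isolation of the auxiliary chains by design. Conversely, if $\mu'$ is stable in $J'$, its projection $\mu=\xi(\mu')$ is stable in $J$ (any $F$-$W$ blocking pair would persist since auxiliary agents never appear in $F\times W$ pairs) and satisfies $(\alpha,\beta)$ (otherwise the construction forces the blocking pair $(f_k,w'_k)$). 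Firm-side dominance on $\mathcal{S}_{J'}$ is determined by agreement on $F$, because the auxiliary matches of every stable matching are forced by $\xi(\mu')$; hence $\xi$ is an order-isomorphism onto the sublattice of $\mathcal{S}_J$ satisfying $(\alpha,\beta)$. Composition with $\psi_\mathcal{S}$ after all $|\Omega|$ augmentations yields the order-embedding $\psi_{\mathcal{S}^*}=\psi_\mathcal{S}\circ \xi_{\mathcal{S}^*}$, which sends the firm-optimal and worker-optimal matchings of $I^*$ to $\Pi(\mathcal{S})$ and $\emptyset$ respectively, because both $\xi_{\mathcal{S}^*}$ and $\psi_\mathcal{S}$ preserve extrema; this yields a partial representation function per Definition~\ref{def:partial_rep}.

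\textbf{Main obstacle.} The principal technical difficulty is the simultaneous implementation---within \emph{path-independent} choice functions---of the conditional blocking-pair triggering. The auxiliary chains must reliably detect which of the $b_{ij}$ rotations have occurred, communicate this to every $w'_k$, and avoid admitting spurious stable configurations of the auxiliary agents that would enlarge $\mathcal{S}^*$ beyond the desired sublattice. Designing choice functions that couple preferences across many rotations while remaining path-independent, and that do not interact with the original rotations of $I$ beyond the intended blocking-pair creation, is where the bulk of the technical care will lie.
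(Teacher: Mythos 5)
Your high-level architecture matches the paper's: both proofs reduce to a single-constraint \emph{join constraint augmentation} applied iteratively (the paper's Lemmas~\ref{lem:jca_implemented} and the proof of Theorem~\ref{thm:omega_extension}), with a projection acting as the identity on original pairs, a two-directional stability correspondence, preservation of the extrema, and composition of order-embeddings at the end. However, there is a genuine gap at the heart of the argument, and you have correctly located but not resolved it. Your mechanism requires the choice of the auxiliary worker $w'_k$ between $f_k$ and $f'_k$ to depend on whether the ``$\alpha$-chains'' certify their clauses. But a path-independent choice function only sees the set offered to its own agent: whether $w'_k$ demands $f_k$ at $\mu(w'_k)\cup\{f_k\}$ is a fixed bit once $\mu(w'_k)$ is fixed, and if the chain state is stored in \emph{separate} auxiliary gadgets, there is no channel by which it can enter $w'_k$'s offer set. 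The paper resolves exactly this by merging the detector and the decision-maker into one auxiliary worker $w_0$ whose $(F_\alpha,f_0,\gamma)$-triggered choice function holds all signal firms simultaneously: each firm $f\in F_\alpha$ conditionally demands $w_0$ (via the pair $(w,w_0)$ added to $A''_f$ in its regular choice function) precisely when its $\Pi_\alpha$-rotation has not occurred, so $\gamma(T)=\alpha(\pi(F_\alpha\setminus T))$ evaluates $\alpha(\psi_{\mathcal S}(\mu))$ on $w_0$'s own match set. This is also where the ``standard'' (algorithmically specified) choice-function model does real work: $\gamma$ evaluates the whole CNF in one agent, rather than one gadget per clause.

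A second, related deviation is your enforcement step. Having $f_k$ \emph{rank} $w'_k$ above its $b_k^-$-partner is a substitutive modification of a one-to-one list, which admits stable matchings of $J'$ in which $f_k$ is matched to $w'_k$ instead of an original worker; such matchings project to non-perfect matchings of $J$, breaking Lemma-\ref{lem:join_constr_aug_projects}-type statements and the order-embedding (your claim that ``the auxiliary matches of every stable matching are forced by $\xi(\mu')$'' is precisely what must be proved, and is the content of the paper's Lemma~\ref{lem:join_constr_aug_order_embedding}). The paper avoids this by making all auxiliary demand \emph{additive}: firms take $w_0$ and the new copies $w''_j\in W''_\beta$ in addition to (copies of) their regular partner, and $\beta$ is enforced indirectly --- when $\alpha$ fires, $(f_0,w_0)$ match, $f_0$ drops $W''_\beta$ under its if-else choice function, and each copy $w''_j$ must then be matched at $f_j$ or worse (else it blocks with the last firm on its list, by Proposition~\ref{prop:GI_opposite_prefs}), which forces every rotation of $\Pi_\beta$ to have occurred. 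Since your proposal leaves the central gadget unconstructed and explicitly defers the path-independent implementation of the conditional trigger --- which is essentially the entire technical content of Definition~\ref{def:join_constraint_augmentation} and Lemmas~\ref{lem:join_constr_aug_subs_cons}--\ref{lem:join_constr_aug_preserves_partial_rep} --- it does not yet constitute a proof of the theorem.
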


Theorem~\ref{thm:omega_extension} is proved in the next subsection. Here, we give a roadmap. First, we introduce the concept of an \emph{extendable} matching market instance, which is a type of instance with a useful structure (Definition~\ref{def:extendable_instance}). Then, we introduce \emph{join constraint augmentation}, a construction that can be applied to an extendable matching market instance which then enforces a given join constraint on the stable matching lattice (Definition~\ref{def:join_constraint_augmentation}): We show that the new stable matching lattice after a join constraint augmentation corresponds exactly to the elements of the original lattice which satisfied the join constraint (Lemma~\ref{lem:join_constr_aug_respects_join_constr}). Then, we show that the join constraint augmentation preserves the extendability of the instance (Lemma~\ref{lem:join_constr_aug_preserves_partial_rep}), and that it can be implemented in polynomial time (Lemma~\ref{lem:jca_runtime}). Altogether, this allows us to show that join constraints can be iteratively enforced on a stable matching lattice using join constraint augmentations (Lemma~\ref{lem:jca_implemented}), which leads to Theorem~\ref{thm:omega_extension}.

\subsubsection{Proof of Theorem~\ref{thm:omega_extension}}\label{sec:proof:thm:omega_extension}

\paragraph{Alternative description of a matching market instance.} Consider a matching market $I=(F,W,(\mathcal C_f)_{f\in F}, (\mathcal C_w)_{w\in W})$. It will sometimes be useful to describe preferences of an agent $a$ with a \emph{preference relation} $\succeq_a$, which ranges over $2^W$ if $a\in F$ and over $2^F$ if $a\in W$, with corresponding preference list $P_a = (T_1,T_2,\cdots, T_n)$ where $T_k\succ_a T_{k+1}$ for all $k\in [n-1]$. We say that $\succ_a$ and the choice function $\mathcal C_a(T) = \max_{\succeq_a}2^T$ \emph{correspond} with each other, and that a set is \emph{acceptable} if and only if ${\cal C}_a(T) = T$, where $\max_{\succeq_a} 2^T$ selects the $\succeq_a$-maximal subset of $T$. Note that we often omit $\emptyset$ and any set ranked after $\emptyset$ from the preference list, as they do not affect the corresponding choice function. It is well-known that given any path-independent choice function $\mathcal C_a$ there exists a corresponding preference relation, though it may not be unique (see, e.g., \citet{yang2020rationalizable}). As such, the instance can also be written as $I = (F, W, (P_f)_{f\in F}, (P_w)_{w\in W})$. We use the latter description and the one using choice functions interchangeably. We abbreviate $P_a=(\{t_1\},\dots,\{t_n\})$ by $P_a=(t_1,\dots,t_n)$. Note that if $P_a=(t_1,\dots,t_n)$ for all agents $a$, then $I$ is a one-to-one instance.

Given an agent $a$, a set $T$, and an agent $b$ in the domain of $\mathcal C_a$, we say \emph{$a$ demands $b$ at $T$ (under ${\cal C}_a$)} if $b\in \mathcal C_a(T\cup\{b\})$. For agents $a,b$ from opposite sides of a matching market, we say that $b$ is \emph{acceptable} (for $a$) if $b \in {\cal C}_a(T)$ for some $T$. In the proofs, we make extensive use of the following observation: if $b$ is not acceptable for $a$, then $b \notin \mu(a)$ for any individually rational matching $\mu$.

\smallskip 

Throughout the rest of this section, fix a GI instance $I= (F, W, (\mathcal C_f)_{f\in F}, (\mathcal C_w)_{w\in W})=(F,W,(P_f)_{f\in F},$ $(P_w)_{w\in W})$ as in Section~\ref{sec:join_constraints:implement}, with stable matching lattice $({\cal S},\succeq)$ and representation function $\psi_{\mathcal S}: \mathcal S\to \mathcal D(\Pi(\mathcal S))$.

\paragraph{Choice functions used in the construction.} Recall that a standard matching market instance is given by its set $\overline F$ of firms, its set $\overline W$ of workers, together with an algorithm that can be encoded in space polynomial in $|\overline F|+|\overline W|$ and is guaranteed to compute the path-independent choice functions of any agent in time polynomial in $|\overline F|+|\overline W|$. In the instances constructed in this proof, we describe the choice function of each agent $a$ as one of the following:
\begin{enumerate}
    \item As a preference list $P'_a$, with length polynomial in the number of agents $|\overline F|+|\overline W|$. The resulting choice function is defined as the one corresponding to the preference order $\succ_a$ induced by $P'_a$, i.e., $\mathcal C_a(T)=\max_{\succ_a} 2^T$ for each input $T$.
    \item When $a=w$ for certain workers $w \in \overline W$, as an \emph{($\hat F$,$\hat f$, $\gamma$)-triggered} choice function, where $\hat F\subseteq \overline F$, $\hat f \in \overline F$,  and $\gamma$ is a function with binary output that takes as input a subset of $\overline F$ and runs in time polynomial in $|\overline F|+|\overline W|$. The corresponding choice function ${\mathcal C}'_w(T)$ for each set $T$ of firms in input is the outcome of the following process:
        \begin{enumerate}[i.]
            \item Select all firms in $T\cap \hat F$.
            \item If $\gamma(T) = 1$ and $\hat f\in T$, then additionally select $\hat f$. \end{enumerate}\label{def:join_constraint_augmentation:aux_worker_choice}
    \item When $a=f$ for certain firms $f \in \overline F$, as an \emph{if-else-$(\hat w,\hat W)$} choice function, where $\hat w \in \overline W$ and $\hat W \subseteq \overline W$. The corresponding choice function ${\mathcal C}'_f(T)$ for each set of workers $T$ in input is the outcome of the following process:
         \begin{enumerate}[i.]
         \item If $\hat w\in T$, select $\hat w$.
            \item Else, select $T\cap \hat W$.
        \end{enumerate}
    % \item When $a=f$ for $f \in F$ with $P_f=(w_1,\dots,w_k)$, as a \emph{$(\overline W_1,\dots,\overline W_k,\hat A)$-regular} choice function, where $\overline W_1,\dots,\overline W_k$ are disjoint subsets of $\overline W$, and $\hat A \subseteq (\overline W_1 \cup \overline W_2 \cup \dots \cup \overline W_k) \times \overline W$. The corresponding choice function $\mathcal C'_f(T)$ for each set of workers $T$ in input is the outcome of the following process:
    % \label{def:extendable_instance:reg_firms}
    %         \begin{enumerate}[i.]
    %             \item Let $i$ be the smallest index such that $T\cap \overline W_i\ne\emptyset$, if such an index exists. Select all workers in $T\cap \overline W_i$.
    %             \item Then, for each pair $(w_\ell, w')\in \hat A$,  such that $w' \in T$ and either $w_i \preceq_f w_\ell$ or the index $i$ as in part i.~does not exist, additionally select $w'$.
    %         \end{enumerate}

    \item When $a=f$ for $f \in F$ with $P_f=(w_1,\dots,w_k)$, as a \emph{$(\overline W_1,\dots,\overline W_k,\hat A)$-regular} choice function, where $\overline W_1,\dots,\overline W_k$ are disjoint subsets of $\overline W$, and $\hat A \subseteq \{w_1, \dots, w_k\} \times \overline W$.
    The corresponding choice function $\mathcal C'_f(T)$ for each set of workers $T$ in input is the outcome of the following process:
    \label{def:extendable_instance:reg_firms}
            \begin{enumerate}[i.]
                \item Let $i$ be the smallest index such that $T\cap \overline W_i\ne\emptyset$, if such an index exists.
                Select all workers in $T\cap \overline W_i$.
                \item Then, for each pair $(w_\ell, w')\in \hat A$,  such that $w' \in T$ and either $w_i \preceq_f w_\ell$ or the index $i$ as in part i.~does not exist, additionally select $w'$.
            \end{enumerate}

\end{enumerate}

In Section~\ref{sec:helpers}, we show in Lemma~\ref{lem:join_constr_aug_subs_cons} that, when appropriate parameters are selected, each of the above choice functions satisfies path-independence.

\paragraph{Extendable matching market instances and Join constraint augmentations.}

\begin{definition}[Extendable matching market instance]\label{def:extendable_instance}
    An instance $I' = (F', W', (\mathcal C'_f)_{f\in F'},$ $ (\mathcal C'_w)_{w\in W'}) = (F', W', (P'_f)_{f\in F'}, (P'_w)_{w\in W'})$ with stable matching lattice $(\mathcal S', \succeq')$ is an \emph{extendable} instance of $I$ if it satisfies the following properties.
    \begin{enumerate}
        \item \textbf{Substitutability and consistency}: All agents in $I'$ have substitutable and consistent choice functions. \label{def:extendable_instance:subs_cons}
    
        \item \textbf{Partition over agents}: $F'$ can be partitioned into $F\uplus F'_{aux}$ and $W'$ can be partitioned into $W'_{reg}\uplus W'_{aux}$. $W'_{reg}$ is called the set of \emph{regular}  workers, while $W'_{aux}$ (resp., $F'_{aux}$) is the set of \emph{auxiliary} workers (resp., firms). Moreover, $W'_{reg}$ can be partitioned into sets $\biguplus_{w\in W}copy'(w)$ and such that, for $w\in W$, we have $w \in copy'(w)$. We refer to elements of $copy'(w)$ as to \emph{copies of $w$.} \label{def:extendable_instance:partition}

        \item \textbf{Preference lists of workers}: 
            \begin{enumerate}
                \item For every worker $w\in W$, let $P_w = (f_1, f_2,\cdots, f_k) \subseteq F$. Then, we have $P_w' = P_w$ and, for every worker $w'\in copy'(w)\setminus\{w\}$, we have $P'_{w'} = (f_0, f_\ell, f_{\ell+1}, \cdots, f_k)$, for some $f_0\in F'_{aux}$  and $\ell \in [k]$ ($f_0$ and $\ell$ may be different for different workers $w'$). Note that $P'_{w'}$ is a one-to-one preference list.
                \item For every auxiliary worker $w \in W'_{aux}$, $P'_w$ may be arbitrary. 
            \end{enumerate}\label{def:extendable_instance:reg_workers}
        
        \item\label{it:I'-choice-firms} \textbf{Choice functions of firms}: 
        \begin{enumerate}
            \item For every firm $f\in F$, with $P_f = (w_1,w_2,\cdots, w_k)$, ${\mathcal C}'_f$ is a $(copy'(w_1),\dots, copy'(w_k)$,$A'_f$)-regular choice function, where $A'_f$ is a (possibly empty) set of pairs of workers $(w_\ell, w_{aux})$ where $\ell \in [k]$ and $w_{aux}\in W'_{aux}$, and such that each $w_{aux}\in W'_{aux}$ appears at most once in a pair in $A'_f$.

            \item For every auxiliary firm in $ f \in F'_{aux}$, $P'_f$ can be arbitrary. 
        \end{enumerate}

        \item \textbf{Order-embedding of stable matching lattice}: Let $\xi_{\cal S'}: F\cup W'_{reg}\to F\cup W$ be given by $\xi_{\cal S'}(f) = f$ for all $f\in F$, and $\xi_{\cal S'}(w') = w$ for all $w\in W$ and $w'\in copy'(w)$. Then, for every $\mu\in \mathcal S'$, $$\xi_{\cal S'}(\mu)=\{(\xi_{\cal S'}(f), \xi_{\cal S'}(w'))\mid (f, w')\in\mu,\ f \in F, w' \in W'_{reg}\}$$ is a stable matching in $\mathcal S$. Furthermore, $\xi_{\cal S'}$ is an order-embedding of $(\mathcal S', \succeq')$ into $(\mathcal S, \succeq)$, and the maximal and minimal elements of $\mathcal S$ are in the image of $\xi_{\cal S'}$. \label{def:extendable_instance:projection} \hfill $\diamond$ 
    \end{enumerate}
\end{definition}

We observe the following fact immediately from the definition of extendable instance.
\begin{corollary}
The GI instance is a standard extendable instance of itself.
\end{corollary}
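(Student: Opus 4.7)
The plan is to verify that the GI instance satisfies all five conditions of Definition~\ref{def:extendable_instance} by making the trivial choices $F_{aux} = \emptyset$, $W_{aux} = \emptyset$, $W_{reg} = W$, and $copy(w) = \{w\}$ for every $w \in W$, and then observing that it is standard because it is given by polynomial-size preference lists. I set $I' := I$ throughout.

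First, I would check the agent partition (Property~\ref{def:extendable_instance:partition}): with the choices above, $F' = F = F \uplus \emptyset$ and $W' = W = W_{reg} \uplus \emptyset$, and $W_{reg} = \biguplus_{w \in W} \{w\}$ with $w \in copy(w) = \{w\}$, as required. Property~\ref{def:extendable_instance:subs_cons} (substitutability and consistency) follows immediately from the fact that the GI instance is one-to-one and given by preference lists, so its choice functions are path-independent (see Section~\ref{sec:join_constraints:implement}). For the worker preferences (Property 3), part (a) is trivial since $P'_w = P_w$ and $copy(w) \setminus \{w\} = \emptyset$, and there is no condition in part (b) because $W_{aux} = \emptyset$.

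The only substantive point is Property~\ref{it:I'-choice-firms}. For each $f \in F$ with $P_f = (w_1, \ldots, w_k)$, I would set $A'_f = \emptyset$ and verify that the $(\{w_1\}, \ldots, \{w_k\}, \emptyset)$-regular choice function agrees with the one induced by $P_f$. Given $T \subseteq W$, the regular procedure finds the smallest $i$ with $T \cap \{w_i\} \neq \emptyset$ and selects $w_i$; with $A'_f = \emptyset$, step (ii) adds nothing. This coincides with picking the $\succ_f$-maximal element of $T$, which is exactly ${\cal C}_f(T)$ under $P_f$. Part (b) is vacuous since $F_{aux} = \emptyset$.

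Finally, for Property~\ref{def:extendable_instance:projection}, the map $\xi_{\mathcal S'}$ is the identity on $F \cup W_{reg} = F \cup W$, so for every $\mu \in \mathcal S' = \mathcal S$ we have $\xi_{\mathcal S'}(\mu) = \mu \in \mathcal S$, and $\xi_{\mathcal S'}$ is trivially an order-isomorphism of $(\mathcal S, \succeq)$ onto itself, hence an order-embedding whose image contains the firm- and worker-optimal matchings. Standardness follows since $I$ is presented as preference lists of length at most $|W|$ per firm and $|F|$ per worker, yielding an encoding and choice-evaluation algorithm polynomial in $|F| + |W|$. I do not anticipate any genuine obstacle: the statement is essentially a sanity check that the general definition specializes correctly to the base case, and the proof is a direct enumeration of the five properties.
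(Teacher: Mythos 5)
Your proposal is correct and matches the paper's treatment: the paper states this corollary as following ``immediately from the definition of extendable instance,'' and the natural verification is exactly your trivial instantiation ($F_{aux}=W_{aux}=\emptyset$, $copy(w)=\{w\}$, $A'_f=\emptyset$, $\xi_{\mathcal S'}=\mathrm{id}$), including the one mildly substantive check that the $(\{w_1\},\dots,\{w_k\},\emptyset)$-regular choice function reproduces the one-to-one preference-list choice function. No gaps.
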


For a join constraint $(\alpha,\beta)$ on $\Pi(\mathcal S)$,  we denote by $\Pi_{\alpha}$ the set of rotations that appear in the argument of $\alpha$, and by $\Pi_{\beta}$ the set of rotations that appear in the argument of $\beta$. For each $\rho\in \Pi_\alpha$, let $$F_\rho = \{f\mid (f, w)\in \rho^-\}$$ and let $F_\alpha = \bigcup_{\rho\in \Pi_{\alpha}} F_\rho$. For each $\rho\in \Pi_\beta$, let $$W_\rho = \{w\mid (f, w)\in \rho^+\}$$ and let $W_\beta = \bigcup_{\rho\in \Pi_\beta} W_\rho$. Note that the elements of $\Pi_\alpha\cup\Pi_\beta$ are rotations from the original rotation poset $\Pi = \Pi(\mathcal S)$, so $F_\alpha, W_\beta$ only contain agents from the original instance $I$.

Now, we define the \emph{join constraint augmentation}, which is the construction that allows us to implement a join constraint on an extendable instance. An example of such construction is given in Example~\ref{ex:join_constr_aug} in Appendix~\ref{app:join_constraints}. 

From now on, fix an extendable instance $I' = (F', W', (\mathcal C'_f)_{f\in F'}, (\mathcal C'_w)_{w\in W'}) = (F', W', (P'_f)_{f\in F'}, (P'_w)_{w\in W'})$ of $I$ with stable matching lattice $(\mathcal S',\succeq)$ and suppose we have a partial representation function $\psi_{\mathcal S'} = \psi_{\mathcal S}\circ \xi_{\mathcal S'}: {\mathcal S}' \rightarrow \mathcal D(\Pi(\mathcal S))$. In order for the definition to be well-posed, we rely on the following statement, which claims that if two rotations of a GI instance\footnote{The statement holds for any one-to-one matching market.} involve the same agent, then one must be a predecessor of the other.  

\begin{proposition}[\citet{gusfield1989stable}]\label{prop:GI_rotations_overlap}
    For any $\rho_1, \rho_2\in \Pi(\mathcal S)$, if there exists $f \in F$ such that (i) $(f, w_1)\in \rho_1^-$ and $(f, w_2)\in \rho_2^-$ or (ii) $(f, w_1)\in \rho_1^+$ and $(f, w_2)\in \rho_2^+$ for some $w_1, w_2 \in W$, then $\rho_1\succ \rho_2$ or $\rho_1\prec\rho_2$. Symmetrically, if there exists $w \in W$ such that (i) $(f_1, w)\in \rho_1^-$ and $(f_2, w)\in \rho_2^-$ or (ii) $(f_1, w)\in \rho_1^+$ and $(f_2, w)\in \rho_2^+$ for some $f_1, f_2 \in F$, then $\rho_1\succ \rho_2$ or $\rho_1\prec\rho_2$.
\end{proposition}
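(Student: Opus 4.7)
The plan is a short proof by contradiction that lifts the problem to the lattice $(\mathcal D(\Pi(\mathcal S)),\supseteq)$ via the rotation representation function $\psi_{\mathcal S}$ of Theorem~\ref{thm:rotation_representation}. I will write out the firm-side statement; the worker case follows by exchanging firms and workers throughout. The case $\rho_1=\rho_2$ is trivial, so assume $\rho_1\neq\rho_2$ are incomparable.

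Form a diamond of lower closed sets $R=\{\rho\in\Pi(\mathcal S):\rho\prec\rho_1\}\cup\{\rho\in\Pi(\mathcal S):\rho\prec\rho_2\}$, $R_1=R\cup\{\rho_1\}$, $R_2=R\cup\{\rho_2\}$, $R_{12}=R\cup\{\rho_1,\rho_2\}$. Incomparability guarantees that $\rho_1,\rho_2\notin R$, and transitivity of $\prec$ guarantees that all four sets are lower closed. Let $\mu_0,\mu_1,\mu_2,\mu_{12}$ be the corresponding stable matchings under $\psi_{\mathcal S}^{-1}$. Each of the four covering edges of this diamond adds exactly one rotation to the underlying lower closed set, so by the definition of a rotation we obtain $\rho_1^{-}=\mu_0\setminus\mu_1=\mu_2\setminus\mu_{12}$ and $\rho_2^{-}=\mu_0\setminus\mu_2=\mu_1\setminus\mu_{12}$.

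Now read off $f$'s partners in this diamond. From $(f,w_1)\in\rho_1^{-}$ and $(f,w_2)\in\rho_2^{-}$, both $(f,w_1)$ and $(f,w_2)$ lie in the one-to-one matching $\mu_0$, forcing $w_1=w_2=:w$ and $\mu_0(f)=w$. Since $\rho_2$ removes $(f,w)$ in the transition $\mu_0\to\mu_2$, we have $(f,w)\notin\mu_2$, i.e., $\mu_2(f)\neq w$. However, the identity $\rho_1^{-}=\mu_2\setminus\mu_{12}$ combined with $(f,w)\in\rho_1^{-}$ forces $(f,w)\in\mu_2$, a contradiction. The worker case is analogous, using that $\mu_0$ assigns at most one firm to any given worker.

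The only real obstacle is the lattice-level bookkeeping: verifying that the four sets form a valid diamond of lower closed sets and that the rotation $\rho_i$ governs both of its covering edges with the same $\rho_i^{\pm}$. This is immediate from Theorem~\ref{thm:rotation_representation}, whose symmetric-difference formula for $\psi_{\mathcal S}^{-1}(R)$ contributes $\rho^{+}\cup\rho^{-}$ (with $\rho^{+}$ and $\rho^{-}$ disjoint) for each added rotation $\rho$, independent of the base set; hence adding a minimal rotation yields a Hasse edge whose added and removed pairs are exactly $\rho^{+}$ and $\rho^{-}$.
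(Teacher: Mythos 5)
Your proof is correct, but note that there is no in-paper argument to compare against: the paper imports this proposition verbatim from \cite{gusfield1989stable} without proof, so what you have produced is a genuinely different, self-contained derivation. The classical Gusfield--Irving proof is combinatorial and works at the level of a single agent, tracking how that agent's partner changes across rotations and concluding that any two rotations moving the same agent must be eliminated in a fixed order, hence are comparable in the rotation poset. You instead argue abstractly in $(\mathcal D(\Pi(\mathcal S)),\supseteq)$: incomparability of $\rho_1,\rho_2$ makes $R$, $R_1$, $R_2$, $R_{12}$ four distinct lower closed sets, and Theorem~\ref{thm:rotation_representation} pins down each single-rotation edge of the diamond. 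One bookkeeping point deserves a line of computation rather than the word ``immediate'': the symmetric-difference form of Theorem~\ref{thm:rotation_representation} only gives $\mu_0\,\triangle\,\mu_1=\rho_1^-\cup\rho_1^+$ and does not by itself say which pairs are removed and which added; you also need the union-minus form, which yields $\mu_0\setminus\mu_1\subseteq\rho_1^-$ and $\mu_1\setminus\mu_0\subseteq\rho_1^+$, whence both inclusions are equalities because the two differences partition $\rho_1^-\cup\rho_1^+$ and $\rho_1^-\cap\rho_1^+=\emptyset$. With that identity in hand, $(f,w)\in\rho_2^-=\mu_0\setminus\mu_2$ gives $(f,w)\notin\mu_2$ while $\rho_1^-=\mu_2\setminus\mu_{12}\subseteq\mu_2$ gives $(f,w)\in\mu_2$, the contradiction you want; the worker side is indeed symmetric. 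What each approach buys: yours is shorter and uses only results the paper states (it also directly covers any one-to-one market, matching the paper's footnote), while the classical argument is logically prior --- in Gusfield and Irving's development this proposition is an ingredient in establishing the rotation-poset representation itself, so your derivation would be circular if one were rebuilding their theory from scratch, though it is perfectly legitimate here where Theorem~\ref{thm:rotation_representation} is a black box. Finally, a small reading issue: when $\rho_1=\rho_2$ the stated conclusion is false rather than ``trivial'' (a rotation is not strictly comparable to itself), so the proposition is implicitly about distinct rotations; your restriction to $\rho_1\neq\rho_2$ is the right reading, and it is how the paper uses the result, but you should say that the hypothesis with $\rho_1=\rho_2$ is simply excluded rather than trivially handled.
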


Recall that by definition of join constraint, $\Pi_\alpha$ is an antichain of the rotation poset. Thus, by Proposition~\ref{prop:GI_rotations_overlap}, the following holds. 
\begin{lemma}\label{obs:F-rho-do-not-intersect} For any two distinct rotations $\rho, \rho' \in \Pi_\alpha$, we have $F_\rho \cap F_{\rho'}=\emptyset$ and $W_\rho \cap W_{\rho'}=\emptyset$ %(COMMENT: Not sure we need the latter -- if not, the ``symmetric'' direction in the previous proposition can be omitted).
\end{lemma}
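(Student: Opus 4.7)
The plan is to deduce the lemma from two ingredients already in place: the antichain condition built into the definition of a join constraint, and Proposition~\ref{prop:GI_rotations_overlap}, which says that if two rotations of a one-to-one instance share an agent in their $\rho^-$ sets, they must be comparable in $(\Pi(\mathcal S),\succeq)$.

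First I would unpack the definition of a join constraint (Definition~\ref{def:join_constraint}). The function $\alpha$ is a conjunction of disjunctions whose arguments $b_{11}, b_{12}, \ldots$ are required to satisfy $b_{i_1 i_2}\not\prec b_{i_3 i_4}$ for all choices of indices. Since $\Pi_\alpha$ is defined as the set of all such arguments (viewed as rotations of the GI instance), this condition says precisely that $\Pi_\alpha$ is an antichain in $(\Pi(\mathcal S),\succeq)$. So for any two distinct $\rho,\rho'\in\Pi_\alpha$, neither $\rho\succ\rho'$ nor $\rho\prec\rho'$ holds.

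Next I would prove $F_\rho\cap F_{\rho'}=\emptyset$ by contradiction. Suppose some $f\in F_\rho\cap F_{\rho'}$ exists. By definition of $F_\rho$, there are workers $w$ and $w'$ with $(f,w)\in\rho^-$ and $(f,w')\in(\rho')^-$. Proposition~\ref{prop:GI_rotations_overlap} then forces $\rho\succ\rho'$ or $\rho\prec\rho'$, contradicting the antichain property established above.

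For the second conclusion, the small subtlety is that $W_\rho$ was defined using $\rho^+$ whereas Proposition~\ref{prop:GI_rotations_overlap} is stated in terms of $\rho^-$. I would bridge this with the standard structural fact that in a one-to-one instance a rotation $\rho$ cyclically shifts partners: if $\rho^-=\{(f_0,w_0),(f_1,w_1),\ldots,(f_{k-1},w_{k-1})\}$ then $\rho^+=\{(f_0,w_1),(f_1,w_2),\ldots,(f_{k-1},w_0)\}$, so the set of workers occurring in $\rho^+$ coincides with the set of workers occurring in $\rho^-$. Hence $W_\rho=\{w : (f,w)\in\rho^-\text{ for some }f\}$. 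Assuming $w\in W_\rho\cap W_{\rho'}$, there exist firms $f_1,f_2$ with $(f_1,w)\in\rho^-$ and $(f_2,w)\in(\rho')^-$, so the symmetric half of Proposition~\ref{prop:GI_rotations_overlap} again makes $\rho,\rho'$ comparable, contradicting the antichain property.

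The only mildly nontrivial step is the last translation between $\rho^+$ and $\rho^-$ for workers, but this is immediate from the definition of a rotation in a one-to-one market (as already recorded before Theorem~\ref{thm:rotation_representation}); everything else is a direct application of the two ingredients. Overall this should be a short proof of roughly half a page.
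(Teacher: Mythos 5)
Your proof is correct and follows essentially the same route as the paper, which likewise derives the lemma in one step from the antichain condition on the arguments of $\alpha$ in Definition~\ref{def:join_constraint} together with Proposition~\ref{prop:GI_rotations_overlap}. Your bridging of the $\rho^+$/$\rho^-$ mismatch for the worker side (via the fact that a rotation covers the same workers in $\rho^+$ and $\rho^-$, which follows from Proposition~\ref{prop:GI_exact_matching}) is a detail the paper's one-line proof glosses over, and it is handled correctly.
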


\begin{definition}[Join constraint augmentation]\label{def:join_constraint_augmentation}
       Let $(\alpha, \beta)$ be a join constraint on $\Pi(\mathcal S)$. Let $\pi: 2^F \to 2^\Pi$ be given by $\pi(T) = \{\rho\in \Pi_\alpha \mid F_\rho\subseteq T\}$. Then, the join constraint augmentation for $(\alpha, \beta)$ on input $I'$ outputs a new instance $I''=(F'',W'',(\mathcal C''_f)_{f\in F''},(\mathcal C''_w)_{w\in W''})=(F'',W'',(P''_f)_{f\in F''},(P''_w)_{w\in W''})$ given by the following procedure.
    \begin{enumerate}
        \item \textbf{Additional auxiliary agents}: Let $W''_{aux}$ be obtained by adding an auxiliary worker $w_0$ to $W'_{aux}$ and $F''_{aux}$ be obtained by adding an auxiliary firm $f_0$ to $F'_{aux}$. \label{def:join_constraint_augmentation:aux}
        
        \item \textbf{Additional regular workers}: For each $w_j\in W_{\beta}$, create a regular worker $w_j''$ and set $copy''(w_j)\gets copy'(w_j)\cup\{w_j''\}$. Define $W_\beta'' = \{w_j''\mid w_j\in W_{\beta}\}$. Thus, the new set of regular workers $W''_{reg}$ is given by $W'_{reg} \cup W''_\beta$. \label{def:join_constraint_augmentation:reg_workers}

        \item \textbf{Choice function of new auxiliary worker}: Let the choice function $\mathcal C''_{w_0}(T)$ for $w_0$ be  the $(F_\alpha$,$f_0$, $\gamma$)-triggered choice function, where $\gamma(T)=\alpha(\pi(F_\alpha\setminus T))$.
        \label{def:join_constraint_augmentation:add_workers}

        \item \textbf{Choice function of new auxiliary firm}: Let the choice function $\mathcal C''_{f_0}(T)$ for $f_0$ be the if-else-$(w_0, {W''_\beta})$ choice function. \label{def:join_constraint_augmentation:aux_firm_choice} 

        \item \textbf{Choice function of new regular workers}: For each $w_j''\in W_{\beta}''$, let $f_j$ be the least preferred firm in $P_{w_j}$ such that $(f_j, w_j)\in\rho^+$ for some $\rho\in \Pi_\beta$. Such a firm exists by definition of $W_\beta$. Then, let the preference list $P''_{{w_j''}}$ be given by:
        \begin{itemize}
            \item First, the set $\{f_0\}$.
            \item The same as the original preference list $P_{w_j}$ for $w_j$, starting from $\{f_j\}$ and continuing to the end of $P_{w_j}$.
        \end{itemize}\label{def:join_constraint_augmentation:reg_worker_choice}

        \item \textbf{Change in preferences of existing workers}: No change. 
        \item \textbf{Change in preferences of existing auxiliary firms}: No change.
        \item \textbf{Change in preferences of existing non-auxiliary firms}: For each firm $f \in F$, recall that $\mathcal C'_f$ is a $(copy'(w_1),\dots, copy'(w_{k}),A'_f)-$regular choice function in $I'$.
        \begin{itemize}\item For each firm $f\in F_{\alpha}$, let $w\in P_f$ be the worker such that $(f, w)\in \rho^-$ for some $\rho\in \Pi_\alpha$. Such a worker exists by definition of $F_\alpha$, and is unique by Proposition~\ref{prop:GI_rotations_overlap}. Set $A''_f = A'_f\cup\{(w, w_0)\}$.  \item For firms $f\in F\setminus F_{\alpha}$, set $A''_f = A'_f$. \item Then, for each firm $f\in F$, define $\mathcal C''_f$ to be the $(copy''(w_1),\dots, copy''(w_{k}),A_f'')$-regular choice function.\hfill $\diamond$\end{itemize}\label{def:join_constraint_augmentation:reg_firm_choice} 
    \end{enumerate} 
\end{definition}

\paragraph{An intuitive explanation of the mechanics of Join constraint augmentations.} As we will formally show next, the Join constraint augmentation is designed to create an extendable matching market instance $I''$ that is an $\{(\alpha,\beta)\}-$extension of $I'$. In particular, any stable matching $\mu''$ of $I''$ projects to a stable matching $\mu$ of $I$ such that  the logical implication $\alpha(\psi_{\mathcal S}(\mu))=1\implies \beta(\psi_{\mathcal S}(\mu))$ holds. That is, under the bijection $\psi_S$ mapping stable matchings of $I$ to closed sets of rotations, $\psi_S(\mu)$ contain all rotations in $\Pi_\alpha$ only if $\psi_S(\mu)$ contains all rotations in $\Pi_\beta$. Intuitively, this is achieved by creating a cascade of forced matches using the newly introduced auxiliary agents and the modification of choice functions of agents already in $I'$.
\begin{enumerate}
    \item If $\alpha$ evaluates to true, it means that
    $\mu=\left(\triangle_{\rho\in R}(\rho^-\triangle \rho^+)\right)\triangle \mu_W$ for some set $R$ of rotations of $I$ containing $\Pi_\alpha$. This means that each firm $f\in F_\alpha$ is matched to a worker that is strictly better than $w_f$ in $P_f$, where $w_f$ is such that $(f,w_f) \in \rho^-$ for some $(\rho^-,\rho^+)=\rho \in \Pi_\alpha$.
    Thus, by definition of regular choice function, $f$ does not demand $w_0$ at the current matching $\mu''$. In particular, $(f,w_0)\notin \mu''$. 
    \item Because of the previous bullet point and of the $(F_\alpha, f_0,\gamma)$-triggered choice function of $w_0$, $w_0$ demands $f_0$ at $\mu''$. Because of the if-else choice function of $f_0$, $f_0$ demands $w_0$ at $\mu''$. Thus, $(f_0,w_0) \in \mu''$.
    \item Since $(f_0, w_0) \in \mu''$, $f_0$ does not demand any copied regular workers in $W''_\beta$, who all had $f_0$ at the top of their preference lists.
    \item The new regular workers $w''_j\in W''_\beta$, having failed to match with their top choice $f_0$, must fall back to the remaining options on their preference lists. By design, these lists only including firms at or below the threshold $f_j$. Because non-auxiliary firms have regular choice functions, whenever they demand $w_j''$, they also demand $w_j$. Therefore, for the overall matching to remain stable, for each $w\in W$, all copies in $copy''(w)$ must have the same matches. This ensures all required rotations in $\Pi_\beta$ occur in $\mu$.
\end{enumerate}

\paragraph{Properties of Join constraint augmentations.} For the remainder of this section, fix a join constraint $(\alpha, \beta)$, and let $I'' = (F'', W'', (\mathcal C''_f)_{f\in F''}, (\mathcal C''_w)_{w\in W''})=(F'',W'',(P''_f)_{f\in F''},(P''_w)_{w\in W''})$ be the instance after applying the join constraint augmentation for $(\alpha, \beta)$ to $I'$. Recall that we assumed that $I'$ is an extendable instance of $I$. Let $\mathcal S''$ be the set of stable matchings of $I''$. We start by defining a function from $\mathcal S''$ to $\mathcal S$.

\begin{definition}
    For an agent $a\in F'\cup W'\cup W''_{\beta}$, define %the \emph{projection} 
    $\zeta_{\mathcal S''}(a)$ by 
    $$\zeta_{\mathcal S''}(a) = \begin{cases}
        a & \hbox{if } a\in F'\cup W'; \\
        w_j & \hbox{else if } a = w_j''\in copy''(w_j)\cap W''_\beta.  \\
    \end{cases}$$
    Then, for any $\mu''\in \mathcal S''$, define $\zeta_{\mathcal S''}(\mu'')$ by
    $$\{(\zeta_{\mathcal S''}(f), \zeta_{\mathcal S''}(w))\mid(f, w)\in \mu'',\  f \in F', w\in W'\cup {W''_{\beta}}\}.$$ 
    That is, $\zeta_{\mathcal S''}(\mu'')$ outputs a matching in $I'$ that contains all pairs of $\mu''$ defined over agents in $W'$, plus pairs $(f,w_j)$ with $f \in F'$, $w_j \in W$, and $w_j'' \in \mu''(f)$ for some copy $w_j'' \in {W''_\beta}$ of $w_j$ (recall that $W' \cap {W''_\beta}=\emptyset$). Finally, define $\xi_{\mathcal S''} = \xi_{\mathcal S'}\circ \zeta_{\mathcal S''}: {\mathcal S}'' \rightarrow \mathcal S$.
\hfill $\diamond$ \end{definition}

Note that it is not immediate that $\xi_{\mathcal S''}$ is well-defined, since for this to happen, the image of $\zeta_{\mathcal S''}({\cal S''})$ must be contained in $\mathcal S'$. We show that this is the case in the next lemma, whose proof is deferred to Section~\ref{sec:helpers}.

\begin{lemma}[{Stable matchings after the join constraint augmentation project to stable matchings}]\label{lem:join_constr_aug_projects}
    Let $\mu''\in \mathcal S''$. Then $\zeta_{\mathcal S''}(\mu'')$ is a stable matching in $\mathcal S'$. Hence, $\xi_{\mathcal S''}$ is well-defined.
\end{lemma}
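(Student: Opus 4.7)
Fix $\mu'' \in \mathcal S''$ and set $\mu' := \zeta_{\mathcal S''}(\mu'')$; the plan is to verify separately that $\mu'$ is a matching between $F'$ and $W'$, that it is individually rational in $I'$, and that it has no blocking pair in $I'$. Being a matching is immediate from the definition of $\zeta_{\mathcal S''}$: only pairs with $f \in F'$ and $w \in W' \cup W''_\beta$ are retained, each $w_j'' \in W''_\beta$ is relabelled to $w_j \in W \subseteq W'$, and pairs involving $f_0$ or $w_0$ are discarded, so all surviving pairs lie in $F' \times W'$.

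For individual rationality I would proceed agent by agent. Agents whose match and choice function are unchanged by the augmentation (the original auxiliary agents other than $f_0,w_0$, regular workers outside $W_\beta$, regular firms $f \in F \setminus F_\alpha$) inherit individual rationality directly from $\mu'' \in \mathcal S''$. For a firm $f \in F$, I would compare $\mathcal C''_f$ and $\mathcal C'_f$ by unfolding the definition of a regular choice function: the smallest-index chunk $copy''(w_{i^*}) \cap \mu''(f)$ selected in step (i) of $\mathcal C''_f$ projects exactly onto $copy(w_{i^*}) \cap \mu'(f)$, since distinct-index chunks are disjoint and $\zeta_{\mathcal S''}$ sends the extra element $w_{i^*}'' \in copy''(w_{i^*})$ (when $w_{i^*} \in W_\beta$) back into $copy(w_{i^*})$. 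The extra pair $(w_f,w_0) \in A''_f \setminus A'_f$ is inactive under $\mathcal C'_f$ because $w_0 \notin \mu'(f)$, and every other pair of $A'_f \subseteq A''_f$ fires under the same index condition, so $\mathcal C'_f(\mu'(f)) = \mu'(f)$. The delicate case is a worker $w_j \in W_\beta$, for which $\mu'(w_j) = (\mu''(w_j) \cup \mu''(w_j'')) \setminus \{f_0\}$; since $w_j$ has a one-to-one preference list in $I'$, individual rationality forces $|\mu'(w_j)| \le 1$, which I would establish by using the preference list $(\{f_0\},\{f_j\},\{f_{j+1}\},\dots)$ of $w_j''$ together with the fact that $(f_0,w_j'')$ is not blocking in $\mu''$ to couple $\mu''(w_j)$ with $\mu''(w_j'')$ through the triggered choice function of $w_0$.

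For the blocking-pair check I would argue by contradiction: if $(f,w) \notin \mu'$ blocks $\mu'$ in $I'$, lift it to $I''$, taking $(f,w)$ itself as the candidate blocker unless $w = w_j \in W_\beta$ and the offending demand traces back to the copy $w_j''$, in which case $(f,w_j'')$ is the candidate. A case analysis on whether $f \in F_\alpha$ and whether $w \in W_\beta$ converts the hypothesized $I'$-blocking pair into a genuine $I''$-blocking pair, contradicting the stability of $\mu''$.

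The main obstacle is the worker-side individual rationality for $w_j \in W_\beta$: one must really couple the stability of $w_0$ and $f_0$ through their triggered and if-else choice functions (Definition~\ref{def:join_constraint_augmentation}) with the suffix structure of $w_j''$'s preferences to rule out the scenario $\mu''(w_j) = \{f^*\}$ and $\mu''(w_j'') = \{f^{**}\}$ with $f^*, f^{**}$ distinct regular firms. Concretely, if such a scenario held then $(f_0,w_j'')$ not being a blocking pair would force $\mu''(f_0) = \{w_0\}$, which in turn activates the trigger $\gamma(\mu''(w_0)) = \alpha(\pi(F_\alpha \setminus \mu''(w_0))) = 1$, and tracing this constraint through the $A''_f$ pairs of firms in $F_\alpha$ that match $w_0$ must yield a contradiction with $f^*$ being strictly preferred by $w_j$ to $f_j$. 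Once this structural compatibility is pinned down, both the remaining individual-rationality checks and the blocking-pair lifting reduce to routine case analysis on the regular/triggered/if-else choice functions of Definitions~\ref{def:extendable_instance} and~\ref{def:join_constraint_augmentation}.
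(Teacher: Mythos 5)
Your overall architecture coincides with the paper's proof: check that the projection is a matching, then individual rationality agent by agent (unchanged agents inherit it; firms $f\in F$ are handled by unfolding the regular choice function and noting the new pair $(w,w_0)\in A''_f\setminus A'_f$ never fires under $\mathcal C'_f$), then rule out blocking pairs by lifting a hypothetical $I'$-blocker back to $I''$ via a case analysis on $f\in F'\setminus F$ versus $f\in F$. You also correctly isolate the one genuinely delicate point, namely that for $w_j\in W_\beta$ one has $\zeta_{\mathcal S''}(\mu'')(w_j)=\mu''(w_j)\cup(\mu''(w_j'')\setminus\{f_0\})$, so one must exclude $\mu''(w_j)=\{f^*\}$ and $\mu''(w_j'')=\{f^{**}\}$ with $f^*\neq f^{**}$ two regular firms; the paper's own write-up passes over this silently by asserting $\mu''(w)=\zeta_{\mathcal S''}(\mu'')(w)$ for all $w\in W'$ (the coupling $\mu''(w_j'')=\mu''(w_j)$ is only derived later, in the proofs of Lemma~\ref{lem:from-mu-to-muprime-to-muprimeprime} and Lemma~\ref{lem:join_constr_aug_respects_join_constr}, which logically come after this lemma). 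So you found the right crux; the gap is that your proposed mechanism for resolving it would not deliver the contradiction.

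Your plan is: $(f_0,w_j'')$ not blocking forces $\mu''(f_0)=\{w_0\}$, hence the trigger $\gamma(\mu''(w_0))=\alpha(\pi(F_\alpha\setminus\mu''(w_0)))=1$, and then ``tracing this through the $A''_f$ pairs of firms in $F_\alpha$'' should contradict $f^*\succ_{w_j}f_j$. But the trigger and the $A''_f$ pairs only encode information about the $\Pi_\alpha$-side of the constraint (which firms in $F_\alpha$ sit weakly below their $\rho^-$-partners); they say nothing about $w_j$'s match, which is a $\Pi_\beta$-side quantity. Deducing ``trigger fired $\Rightarrow$ every $w_j\in W_\beta$ is matched at or below $f_j$'' is exactly the reverse direction of Lemma~\ref{lem:join_constr_aug_respects_join_constr}, and the paper can only prove it \emph{after} the present lemma; inside this proof that reasoning is circular. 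What actually kills the bad scenario is bookkeeping in the underlying GI instance, and it needs no trigger at all: every $w\in W$ is matched in $\mu''$ (otherwise its last-ranked firm $f_k$, for which $w$ is top-ranked by Proposition~\ref{prop:GI_opposite_prefs}, demands $w$ under any regular choice function and $(f_k,w)$ blocks), and by the same argument every $f\in F$ has a nonempty regular part of $\mu''(f)$, which by individual rationality lies in a single class $copy''(w_{c(f)})$. Since each $w\in W$ has a one-to-one list, the match map $m:W\to F$ is well-defined and satisfies $c(m(w))=w$, so $m$ is injective; as $|F|=|W|$ (Proposition~\ref{prop:GI_exact_matching}), $m$ is a bijection with $c=m^{-1}$. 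Hence $w_j''\in\mu''(f_\ell)$ for a regular $f_\ell$ forces $c(f_\ell)=w_j$, i.e.\ $f_\ell=m(w_j)$, so $f_\ell$ is the firm matched to $w_j$ — the scenario with $f^*\neq f^{**}$ is impossible because two distinct regular firms would share the class $w_j$. With this coupling in hand, both your individual-rationality and blocking-pair steps go through as you sketched; without it, the proof is incomplete at precisely the step you flagged.
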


We can show that $\xi_{\cal S''}''(\cal S'')$ restricts $\xi_{\cal S'}(\cal S')$ by removing all and only the stable matching of $I$ whose corresponding set of rotations does not satisfy the join constraint $(\alpha,\beta)$. See Section~\ref{sec:helpers} for the proof.

\begin{lemma}[Join constraint augmentation respects join constraint]\label{lem:join_constr_aug_respects_join_constr}
    Let $\mu\in \xi_{\mathcal S'}(\mathcal S')$. $\mu$ is in the image of $\xi_{\mathcal S''}({\cal S}'')$ if and only if it satisfies the join constraint $(\alpha, \beta)$. That is, if and only if $\psi_{{\cal S}}(\mu)$ satisfies $(\alpha,\beta)$.
\end{lemma}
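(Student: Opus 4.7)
My plan is to prove both directions of the lemma by combining individual rationality constraints on the new agents $w_0, f_0$ with explicit blocking-pair arguments.

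\textbf{Forward direction.} Assume $\mu = \xi_{\mathcal S''}(\mu'')$ for some $\mu'' \in \mathcal S''$. I first characterize $\mu''$ on the new agents. The triggered choice function of $w_0$ forces $\mu''(w_0) \subseteq F_\alpha \cup \{f_0\}$, and the modified regular choice function of each $f \in F_\alpha$ (with the added pair $(w, w_0) \in A''_f$) forces $w_0 \in \mu''(f)$ iff $\mu(f) \preceq_f w$; because GI rotations strictly improve firms' matches as they occur, this is equivalent to the unique $\rho \in \Pi_\alpha$ with $f \in F_\rho$ not lying in $\psi_{\mathcal S}(\mu)$. Combined with Lemma~\ref{obs:F-rho-do-not-intersect} (disjointness of $\{F_\rho : \rho \in \Pi_\alpha\}$), this gives $\pi(F_\alpha \setminus \mu''(w_0)) = \Pi_\alpha \cap \psi_{\mathcal S}(\mu)$ and hence $\gamma(\mu''(w_0)) = \alpha(\psi_{\mathcal S}(\mu))$. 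If $\alpha(\psi_{\mathcal S}(\mu)) = 0$ the constraint is trivially satisfied; otherwise $w_0$ demands $f_0$, and stability combined with the if-else choice function of $f_0$ forces $\mu''(f_0) = \{w_0\}$.

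\textbf{Showing $\beta(\psi_{\mathcal S}(\mu)) = 1$.} Suppose for contradiction some $\rho_k \in \Pi_\beta$ is not in $\psi_{\mathcal S}(\mu)$. Pick $w_j \in W_{\rho_k}$ and let $\rho^* \in \Pi_\beta$ be the rotation satisfying $(f_j, w_j) \in (\rho^*)^+$. Since $f_j$ is the $w_j$-worst firm partnering $w_j$ in any rotation of $\Pi_\beta$, $\rho^*$ is the $\succeq$-maximum among rotations in $\Pi_\beta$ involving $w_j$; Proposition~\ref{prop:GI_rotations_overlap} forces comparability of $\rho^*$ and $\rho_k$, so $\rho^* \succeq \rho_k$. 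Because $\psi_{\mathcal S}(\mu)$ is lower closed and excludes $\rho_k$, it also excludes $\rho^*$, which gives $\mu(f_j) \prec_{f_j} w_j$ and $\mu(w_j) \succ_{w_j} f_j$. The copy $w_j''$ cannot match $f_0$ (saturated by $w_0$), and cannot match any $f \in \{f_j, f_{j+1}, \ldots\}$, since this would force $\mu(f) = w_j$ via the regular choice function and the one-to-one projection, contradicting $f \preceq_{w_j} f_j \prec_{w_j} \mu(w_j)$. So $w_j''$ is unmatched, but then $(f_j, w_j'')$ is a blocking pair: $w_j''$ demands $f_j$, and $\mu(f_j) \prec_{f_j} w_j$ implies that adding $w_j''$ promotes $copy''(w_j)$ to $f_j$'s top group, so $f_j$ demands $w_j''$, contradicting stability.

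\textbf{Reverse direction and main obstacle.} Given $\mu = \xi_{\mathcal S'}(\mu')$ satisfying $(\alpha, \beta)$, I construct $\mu'' \in \mathcal S''$ projecting to $\mu$ by extending $\mu'$ as follows: always set $\mu''(w_0) \cap F_\alpha = \bigcup_{\rho \in \Pi_\alpha \setminus \psi_{\mathcal S}(\mu)} F_\rho$; if $\alpha(\psi_{\mathcal S}(\mu)) = 0$ let $\mu''(f_0) = W''_\beta$ (so all copies match $f_0$); if $\alpha = 1$ (so $\beta = 1$), let $\mu''(f_0) = \{w_0\}$, include $f_0$ in $\mu''(w_0)$, and match each $w_j''$ alongside $w_j$ to $\mu(w_j)$, which lies in $\{f_j, f_{j+1}, \ldots\}$ because all rotations of $\Pi_\beta$ involving $w_j$ have occurred and degraded $w_j$'s match to at most $f_j$. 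Individual rationality is immediate from the design of the choice functions. Blocking pairs are ruled out case-by-case: $(w_0, f_0)$ and $(w_0, f)$ with $f \in F_\alpha$ via the construction of $\mu''(w_0)$ together with the identity $\gamma(\mu''(w_0)) = \alpha(\psi_{\mathcal S}(\mu))$; $(f_0, w_j'')$ via the if-else mechanism of $f_0$; and $(f, w_j'')$ for other firms by combining stability of $\mu$ in $I$ with $w_j''$'s truncated preference list. The main obstacle is the upgrade from $\rho_k \notin \psi_{\mathcal S}(\mu)$ to $\rho^* \notin \psi_{\mathcal S}(\mu)$ in the forward direction, which crucially uses the design choice of $f_j$ as the worst $w_j$-preferred firm in any rotation of $\Pi_\beta$, together with the lower-closed structure of $\psi_{\mathcal S}(\mu)$.
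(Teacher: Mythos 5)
Your proposal follows essentially the same route as the paper's proof: the same extension of $\mu'$ (match $w_0$ to $\bigcup_{\rho\in\Pi_\alpha\setminus\psi_{\mathcal S}(\mu)}F_\rho$, match $f_0$ to $\{w_0\}$ or to $W''_\beta$ according to the value of $\alpha(\psi_{\mathcal S}(\mu))$, and match each copy $w''_j$ alongside $w_j$), the same key identity $\pi(F_\alpha\setminus\mu''(w_0))=\Pi_\alpha\cap\psi_{\mathcal S}(\mu)$, and the same mechanism for the converse direction (stability forces $(f_0,w_0)\in\mu''$, which pushes the copies off $f_0$ and into the tail of $P_{w_j}$, forcing all of $\Pi_\beta$ to have occurred). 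Your one genuine micro-deviation is in showing that $w''_j$ cannot be unmatched: the paper blocks with the last firm $f_k$ of $P_{w_j}$ via Proposition~\ref{prop:GI_opposite_prefs}, while you block with $f_j$ itself after deriving $\mu(f_j)\prec_{f_j}w_j$ from the chain structure of the rotations moving $w_j$ ($\rho^*\succeq\rho_k$ together with lower-closedness of $\psi_{\mathcal S}(\mu)$). Both are correct; yours trades the appeal to Proposition~\ref{prop:GI_opposite_prefs} for slightly more rotation-poset bookkeeping, and your by-contradiction framing is only cosmetically different from the paper's direct argument. (Note also that your characterization ``$w_0\in\mu''(f)$ iff $\mu(f)\preceq_f w$'' needs stability, not just the choice function: the ``if'' direction holds because otherwise $(f,w_0)$ would block, since $w_0$ always demands every firm in $F_\alpha$.)

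There is, however, one concrete omission in your stability verification of the constructed $\mu''$: your case list (pairs involving $w_0$, pairs involving $f_0$, and pairs involving a copy $w''_j$) never addresses potential blocking pairs $(f,w)$ with \emph{both} agents already in $I'$ --- the paper's Cases 4 and 5. This case is not vacuous: for every $f\in F$ the choice function genuinely changes ($A''_f$ gains the pair $(w_i,w_0)$ when $f\in F_\alpha$, and the classes grow from $copy$ to $copy''$), and $\mu''(f)$ can strictly contain $\mu'(f)$, since it may gain $w_0$ and a new copy $w''_j$ when $f=\mu(w_j)$ for some $w_j\in W_\beta$. One must therefore argue that demand is preserved downward: if $f$ demands an old worker $w$ at $\mu''(f)$ under $\mathcal C''_f$, then $f$ demands $w$ at $\mu'(f)$ under $\mathcal C'_f$ (the top copy-class is unchanged because the new copies added to $\mu''(f)$ belong to the class of the already-matched worker, and $w_0$ is auxiliary so it does not affect class selection in step i of the regular choice function), while $\mu''(w)=\mu'(w)$ and $\mathcal C''_w=\mathcal C'_w$ for $w\in W'$; hence $(f,w)$ would block $\mu'$, contradicting stability of $\mu'$ in $I'$. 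The fix is routine, but as written your proof of the ``satisfies $\Rightarrow$ in the image'' direction is incomplete without it.
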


We can also show that join constraint augmentations preserve the extendability of the instance. See Section~\ref{sec:helpers} for the proof.

\begin{lemma}[Join constraint augmentation creates extendable instance]\label{lem:join_constr_aug_preserves_partial_rep}
    The instance $I''$ after the join constraint augmentation is an extendable instance, with order-embedding function $\xi_{\mathcal S''} = \xi_{\mathcal S'}\circ \zeta_{\mathcal S''}$.
\end{lemma}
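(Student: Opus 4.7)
The plan is to verify the five properties of Definition~\ref{def:extendable_instance} for $I''$, using the natural extension $F''_{aux} = F_{aux}\cup\{f_0\}$, $W''_{aux} = W_{aux}\cup\{w_0\}$, $W''_{reg} = W_{reg}\cup W''_\beta$, together with $copy''(w_j) = copy(w_j)\cup\{w_j''\}$ for $w_j\in W_\beta$ and $copy''(w_j) = copy(w_j)$ otherwise. Properties~\ref{def:extendable_instance:partition}, \ref{def:extendable_instance:reg_workers}, and~\ref{it:I'-choice-firms} hold essentially by construction: the partition structure, the one-to-one preference list form required for each new $w_j''$ (namely $(f_0,f_j,f_{j+1},\dots)$ with $f_0\in F_{aux}$), and the fact that every $f\in F$ is assigned a $(copy''(w_1),\dots,copy''(w_k),A''_f)$-regular choice function are exactly what Definition~\ref{def:join_constraint_augmentation} specifies; the new agents $f_0$ and $w_0$ are auxiliary, and their choice functions are allowed to be arbitrary. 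The remaining content is Properties~\ref{def:extendable_instance:subs_cons} and~\ref{def:extendable_instance:projection}.

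For Property~\ref{def:extendable_instance:subs_cons}, the unchanged agents inherit path-independence from $I'$, and the new workers $w_j''$ are path-independent since their choice functions come from explicit preference lists. What is left is path-independence for $w_0$, $f_0$, and for the modified regular firms $f\in F$. The cases of $f_0$ (an if-else-$(w_0,W''_\beta)$ function) and of each regular firm (a regular choice function with enlarged $A''_f$) reduce to short case analyses on their defining procedures. The core check is for the $(F_\alpha, f_0, \gamma)$-triggered choice function of $w_0$: since $\gamma(T)=\alpha(\pi(F_\alpha\setminus T))$ and $\alpha$ is a conjunction of disjunctions of positive indicators $\mathbf{1}_{b_{ij}}$, the map $T\mapsto \gamma(T)$ is antitone in $T$, so shrinking the input cannot turn $\gamma$ from $0$ to $1$. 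Together with the fact that firms in $F_\alpha\cap T$ are always selected regardless of $\gamma$, this yields substitutability; consistency then follows because restricting from $T$ to any $T'\supseteq \mathcal C''_{w_0}(T)$ preserves both the set $F_\alpha\cap T$ and the value of $\gamma$.

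Property~\ref{def:extendable_instance:projection} is the substantive step. By Lemma~\ref{lem:join_constr_aug_projects}, $\zeta_{\mathcal S''}$ maps $\mathcal S''$ into $\mathcal S'$, so $\xi_{\mathcal S''}=\xi_{\mathcal S'}\circ\zeta_{\mathcal S''}$ is well-defined into $\mathcal S$. To show it is an order-embedding, I will argue first that $\zeta_{\mathcal S''}$ is an order-embedding from $(\mathcal S'',\succeq'')$ to $(\mathcal S',\succeq')$, and then compose with the order-embedding $\xi_{\mathcal S'}$ provided by extendability of $I'$. The key observation is that for every $f\in F$, the $(copy''(\cdot),A''_f)$-regular selection applied to $\mu''_1(f)\cup\mu''_2(f)$ and the $(copy(\cdot),A'_f)$-regular selection applied to $\zeta_{\mathcal S''}(\mu''_1)(f)\cup \zeta_{\mathcal S''}(\mu''_2)(f)$ coincide after projecting out $w_0$: each new copy $w_j''\in W''_\beta$ collapses under $\zeta_{\mathcal S''}$ to $w_j\in copy(w_j)$ and therefore contributes identically to the minimum-index step of the regular-choice rule, while the only extra pair $(w,w_0)\in A''_f\setminus A'_f$ affects only whether $w_0$ is selected, which is discarded by $\zeta_{\mathcal S''}$. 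Hence $\mathcal C''_f(\mu''_1(f)\cup\mu''_2(f)) = \mu''_1(f)$ holds iff $\mathcal C'_f(\zeta_{\mathcal S''}(\mu''_1)(f)\cup \zeta_{\mathcal S''}(\mu''_2)(f)) = \zeta_{\mathcal S''}(\mu''_1)(f)$ for all $f\in F$, giving both directions of the embedding condition.

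Finally, the maximal and minimal elements of $\mathcal S$ lie in the image of $\xi_{\mathcal S''}$: both satisfy $(\alpha,\beta)$ trivially — at the minimum, $\psi_{\mathcal S}(\mu_W)=\emptyset$ makes $\alpha$ evaluate to $0$, and at the maximum, $\psi_{\mathcal S}(\mu_F)=\Pi(\mathcal S)$ makes both $\alpha$ and $\beta$ evaluate to $1$ — and by extendability of $I'$ they already lie in $\xi_{\mathcal S'}(\mathcal S')$, so Lemma~\ref{lem:join_constr_aug_respects_join_constr} places them in $\xi_{\mathcal S''}(\mathcal S'')$. The step I expect to be the main obstacle is the verification that $\zeta_{\mathcal S''}$ is an order-embedding in \emph{both} directions: simultaneously tracking how the new copies in $W''_\beta$ and the auxiliary pairs $(w,w_0)\in A''_f\setminus A'_f$ interact with the regular-choice rule, and showing that neither modification can create or destroy a strict preference relation between projected matchings, requires a careful case analysis.
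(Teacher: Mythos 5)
Your overall route is the same as the paper's: verify the five properties of Definition~\ref{def:extendable_instance}, with Properties~\ref{def:extendable_instance:partition}--\ref{it:I'-choice-firms} by construction, Property~\ref{def:extendable_instance:subs_cons} by a per-agent case analysis, and Property~\ref{def:extendable_instance:projection} by showing $\zeta_{\mathcal S''}$ is an order-embedding, composing with $\xi_{\mathcal S'}$, and handling the extremal elements of $\mathcal S$ via Lemma~\ref{lem:join_constr_aug_respects_join_constr}. Your path-independence checks for $w_0$ match the paper's (your consistency argument via $T'\cap F_\alpha = T\cap F_\alpha$, hence $\gamma(T')=\gamma(T)$, is exactly its Case 5), though note a direction slip: since $\gamma$ is antitone, shrinking the input cannot turn $\gamma$ from $1$ to $0$ — it \emph{can} turn it from $0$ to $1$, which is the harmless direction; as written your sentence states the opposite, although the substitutability conclusion you draw is the right one.

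The genuine gap is in Property~\ref{def:extendable_instance:projection}. The order $\succeq''$ on $\mathcal S''$ quantifies over \emph{all} firms of $I''$, including the newly added auxiliary firm $f_0$, but your case analysis for the embedding covers only $f\in F$ (and implicitly the unchanged firms of $F_{aux}$). The direction this cannot deliver is $\zeta_{\mathcal S''}(\mu''_1)\succeq'\zeta_{\mathcal S''}(\mu''_2)\Rightarrow \mu''_1\succeq''\mu''_2$: since $\zeta_{\mathcal S''}$ discards $f_0$'s matches entirely, agreement of the projections says nothing a priori about whether $\mathcal C''_{f_0}(\mu''_1(f_0)\cup\mu''_2(f_0))=\mu''_1(f_0)$, and $\mu''(f_0)$ genuinely varies across $\mathcal S''$ (it equals $\{w_0\}$ for some stable matchings and $W''_\beta$ for others, as in Example~\ref{ex:join_constr_aug}). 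This case is exactly where the paper's Lemma~\ref{lem:join_constr_aug_order_embedding} does its real work: assuming $\mu''_1(f_0)\prec''_{f_0}\mu''_2(f_0)$, stability and the if-else structure force $\mu''_2(f_0)=\{w_0\}$ and $\mu''_1(f_0)=W''_\beta$, hence $\alpha(\pi(F_\alpha\setminus\mu''_2(w_0)))=1$ while $\alpha(\pi(F_\alpha\setminus\mu''_1(w_0)))=0$; monotonicity of $\alpha$ yields some $f_j\in F_\alpha$ with $f_j\in\mu''_1(w_0)\setminus\mu''_2(w_0)$, and comparing $f_j$'s regular matches against the threshold worker $w_i$ with $(w_i,w_0)\in A''_{f_j}$ gives $\mu''_1(f_j)\prec''_{f_j}\mu''_2(f_j)$, contradicting the relation already established at firms in $F$. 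Without this argument (or a substitute showing the $f_0$-comparison is implied by the comparisons at $F$), the claim ``giving both directions of the embedding condition'' is unjustified and Property~\ref{def:extendable_instance:projection} is unproven. Your per-firm equivalence for $f\in F$ is asserted at the same level of detail as the paper's (``by construction''), so that part is not the issue; the missing $f_0$ case is.
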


When $I'$ is standard we can bound the time needed to compute the output of a join constraint augmentation, and show that $I''$ is standard as well.

\begin{lemma}[Runtime of join constraint augmentation]\label{lem:jca_runtime}
Let $I'=(F', W', (\mathcal C'_f)_{f\in F'}, (\mathcal C'_w)_{w\in W'})$ be standard.
Then, $I''$ is standard and takes $O(|F|(|W|+|\alpha|)) = O(|\Pi({\cal S})|^4 + |\Pi({\cal S})|^2 |\alpha|)$ time to compute, where $|\alpha|$ is the number of arguments in $\alpha$. Also, each of the choice functions evaluates in polynomial time.
\end{lemma}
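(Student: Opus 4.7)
\medskip

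\noindent\textbf{Proof plan for Lemma~\ref{lem:jca_runtime}.} The plan is to walk through the eight steps of Definition~\ref{def:join_constraint_augmentation} and, for every new or modified agent, verify two things: (a) the agent's choice function admits an encoding of size polynomial in $|F''|+|W''|$ together with an evaluation algorithm running in time polynomial in $|F''|+|W''|$ (so that $I''$ remains standard); and (b) the time to produce that encoding from the description of $I'$ fits within the claimed bound. Since $I'$ is standard, every agent that is not touched by the augmentation (all agents of $I' \setminus(F\cup W_\beta''\cup \{w_0,f_0\})$) already satisfies these conditions, so I only need to account for: the new auxiliary pair $w_0,f_0$; the $O(|W|)$ new regular copies $w_j''\in W_\beta''$; and the modified regular firms $f\in F$.

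The easy pieces come first. Each $w_j''$ is assigned an explicit preference list $P''_{w_j''}$ of length at most $|F|+1$ (step~\ref{def:join_constraint_augmentation:reg_worker_choice}), which can be written down in $O(|F|)$ time and evaluated in linear time, contributing $O(|F||W_\beta''|)=O(|F||W|)$ overall. The auxiliary firm $f_0$ is given an if-else-$(w_0,W_\beta'')$ choice function; storing $W_\beta''$ uses $O(|W|)$ space and each evaluation is linear in the input. For the regular firms, I replace each $(copy(w_1),\dots,copy(w_{|W|}),A'_f)$-regular choice function with the $(copy''(w_1),\dots,copy''(w_{|W|}),A''_f)$-regular one: each $copy''(w_j)$ differs from $copy(w_j)$ by at most one element, and for $f\in F_\alpha$ one single pair $(w,w_0)$ is appended to $A'_f$, while for $f\in F\setminus F_\alpha$ nothing changes. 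This costs $O(|W|)$ bookkeeping per firm, $O(|F||W|)$ total, and the evaluation of a regular choice function is polynomial in its description.

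The only delicate piece is the $(F_\alpha,f_0,\gamma)$-triggered choice function for $w_0$ (step~\ref{def:join_constraint_augmentation:add_workers}), because its definition references the composite predicate $\gamma(T)=\alpha(\pi(F_\alpha\setminus T))$. Here I propose to encode $w_0$'s choice function by storing $F_\alpha$, the family $\{F_\rho:\rho\in\Pi_\alpha\}$, and the Boolean formula for $\alpha$, all of which use $O(|F|+|\alpha|)$ space. On input $T$, the computation of $\pi(F_\alpha\setminus T)$ reduces to checking, for each $\rho\in\Pi_\alpha$, whether $F_\rho\cap T=\emptyset$ (using Lemma~\ref{obs:F-rho-do-not-intersect} to guarantee disjointness makes this even cleaner), then evaluating $\alpha$ on the resulting set of rotations in $O(|\alpha|)$ time. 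Since $|\Pi_\alpha|\le|\alpha|$ and $|F_\alpha|\le|F|$, the whole evaluation runs in $O(|F||\alpha|)$ time, which is polynomial in the size of $I''$; producing the encoding itself takes $O(|F|+|\alpha|)$ time.

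Summing the contributions yields $O(|F|(|W|+|\alpha|))$, which gives the first stated bound; the second form follows by plugging in the bounds $|F|=O(|\Pi(\mathcal{S})|^4)$ and $|W|=O(|\Pi(\mathcal{S})|^2)$ that come from Theorem~\ref{thm:gusfield_irving} applied to the poset $(X_j,\succeq)$. I expect no genuine obstacle in the argument: it is essentially a bookkeeping exercise, and the one conceptually substantive step is verifying that the nested expression $\alpha(\pi(F_\alpha\setminus T))$ used to define $\gamma$ can indeed be encoded and evaluated in polynomial time, which follows directly from the polynomial bounds on $|\Pi_\alpha|$, $|F_\alpha|$, and $|\alpha|$ noted above.
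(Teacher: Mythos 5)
Your proof is correct and takes essentially the same route as the paper's: the same step-by-step accounting over Definition~\ref{def:join_constraint_augmentation} (Steps~\ref{def:join_constraint_augmentation:aux}--\ref{def:join_constraint_augmentation:reg_workers} and~\ref{def:join_constraint_augmentation:aux_firm_choice} in $O(|W|)$, Step~\ref{def:join_constraint_augmentation:add_workers} in $O(|F||\alpha|)$, Steps~\ref{def:join_constraint_augmentation:reg_worker_choice} and~\ref{def:join_constraint_augmentation:reg_firm_choice} in $O(|F||W|)$), and your treatment of the $(F_\alpha,f_0,\gamma)$-triggered choice function of $w_0$ is in fact more explicit than the paper's. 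Two footnotes: standardness of $I''$ also requires path-independence of the new choice functions, which the paper obtains by citing Lemma~\ref{lem:join_constr_aug_preserves_partial_rep} (via Lemma~\ref{lem:join_constr_aug_subs_cons}) rather than from the encoding and evaluation bounds alone; and in your final substitution the GI instance built on $(X_j,\succeq)$ has $|F|,|W| = O(|\Pi(\mathcal S)|^2)$ agents, not $|F| = O(|\Pi(\mathcal S)|^4)$, so the clean form of the bound is $O(|\Pi(\mathcal S)|^4 + |\Pi(\mathcal S)|^2|\alpha|)$ (the product expression in the lemma statement appears to be a typo for this sum).
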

\begin{proof}
     By Lemma~\ref{lem:join_constr_aug_preserves_partial_rep}, agents from $I''$ have consistent and substitutable choice functions. Thus, it is enough to show that we add $O(|F|(|W|+|\alpha|))$ new agents and there is an algorithm that can be encoded in time $O(|F|(|W|+|\alpha|))$, which for each agent $a$ and set $T$, outputs $\mathcal C_a(T)$ in polynomial time.
     
     Following the procedure in Definition~\ref{def:join_constraint_augmentation}, Steps \ref{def:join_constraint_augmentation:aux} and \ref{def:join_constraint_augmentation:reg_workers} take $O(|W|)$ time, and add $O(|W|)$ new agents to the instance. Step \ref{def:join_constraint_augmentation:add_workers} takes $O(|F||\alpha|)$ time to encode the choice function for the new auxiliary worker $w_0$. Similarly, it takes $O(|W|)$ time to encode the choice function for the new auxiliary firm $f_0$ in Step \ref{def:join_constraint_augmentation:aux_firm_choice}. For Step \ref{def:join_constraint_augmentation:reg_worker_choice}, the algorithm computes choice functions for $O(|W|)$ new regular workers, each of which takes $O(|F|)$ time to implement as a preference list, for a total of $O(|F||W|)$ time. Step \ref{def:join_constraint_augmentation:reg_firm_choice} computes regular choice functions for $O(|F|)$ regular firms by updating the sets $A'_f$, each of which takes $O(|W|)$ time, for a total of $O(|F||W|)$. In total, this gives $O(|F|(|W|+|\alpha|))$ time. We can also verify from their definitions that each of the choice functions evaluates in $O(|F''|+|W''|+|\alpha|)$ time, which is polynomial in $|F|+|W|+|\alpha|$.
\end{proof}

We can now show that join constraints can be implemented iteratively on extendable instances.

\begin{lemma}[Join constraints can be implemented on extendable instances]\label{lem:jca_implemented}
Let $I'$ be a standard extendable instance with stable matching lattice $(\mathcal S',\succeq)$ and order embedding $\xi_{\mathcal S'}: \mathcal S'\to \mathcal S$. Let $(\alpha,\beta)$ be a join constraint. There exists an extendable instance $I''$ with order-embedding function $\xi_{\mathcal S''}: \mathcal S''\to \mathcal S$ such that $\mu\in \xi_{\mathcal S''}(\mathcal S'')$ if and only if $\mu\in \xi_{\mathcal S'}(\mathcal S')$ and $\mu$ satisfies $(\alpha, \beta)$.  Furthermore, $I''$ is standard and can be computed in $O(|F|(|W|+|\alpha|))$ time and has $O(|W|)$ more agents than $I'$.
\end{lemma}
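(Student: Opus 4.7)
The plan is to define $I''$ to be the output of the join constraint augmentation for $(\alpha,\beta)$ applied to $I'$, as specified in Definition~\ref{def:join_constraint_augmentation}, and then assemble the desired conclusions from the four preceding lemmas. No new arguments are needed: the preparatory lemmas were designed precisely to bundle into this statement.

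First, I would invoke Lemma~\ref{lem:join_constr_aug_preserves_partial_rep} to conclude that $I''$ is an extendable instance of $I$ and that $\xi_{\mathcal S''} = \xi_{\mathcal S'}\circ \zeta_{\mathcal S''}$ is an order-embedding projection from $(\mathcal S'',\succeq'')$ into $(\mathcal S,\succeq)$. This directly supplies the structural conclusion that $I''$ is extendable with the stated projection.

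Next, for the characterization of $\xi_{\mathcal S''}(\mathcal S'')$, I would split into two containments. For the forward direction, Lemma~\ref{lem:join_constr_aug_projects} shows that $\zeta_{\mathcal S''}$ maps $\mathcal S''$ into $\mathcal S'$, hence $\xi_{\mathcal S''}(\mathcal S'') = \xi_{\mathcal S'}(\zeta_{\mathcal S''}(\mathcal S''))\subseteq \xi_{\mathcal S'}(\mathcal S')$; Lemma~\ref{lem:join_constr_aug_respects_join_constr} then guarantees that every $\mu$ in this image satisfies $(\alpha,\beta)$. For the reverse direction, Lemma~\ref{lem:join_constr_aug_respects_join_constr} again ensures that any $\mu\in \xi_{\mathcal S'}(\mathcal S')$ that satisfies $(\alpha,\beta)$ lies in $\xi_{\mathcal S''}(\mathcal S'')$. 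Combined, this yields the desired iff characterization.

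Finally, the standardness of $I''$ and the runtime bound $O(|F|(|W|+|\alpha|))$ are precisely the content of Lemma~\ref{lem:jca_runtime}. The bound on the number of new agents follows by inspection of Definition~\ref{def:join_constraint_augmentation}: Step~\ref{def:join_constraint_augmentation:aux} introduces one auxiliary worker $w_0$ and one auxiliary firm $f_0$, while Step~\ref{def:join_constraint_augmentation:reg_workers} introduces one new regular worker per element of $W_\beta$, and $|W_\beta|\le |W|$ since $W_\beta \subseteq W$. Thus at most $O(|W|)$ agents are added. There is no real obstacle in this proof; it is a bookkeeping lemma whose purpose is to package the preceding technical results into a single clean iterative tool, so the only care needed is to verify that the hypotheses of each invoked lemma are met (notably, that $I'$ is a standard extendable instance of $I$, so that Lemma~\ref{lem:jca_runtime} applies and Lemma~\ref{lem:join_constr_aug_preserves_partial_rep} yields $\xi_{\mathcal S''}$ with the required order-embedding property).
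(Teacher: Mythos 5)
Your proof is correct and follows essentially the same route as the paper's: apply the join constraint augmentation of Definition~\ref{def:join_constraint_augmentation} to $I'$, then cite Lemma~\ref{lem:join_constr_aug_preserves_partial_rep} for extendability and the order-embedding projection, Lemma~\ref{lem:join_constr_aug_respects_join_constr} for the iff characterization of the image, and Lemma~\ref{lem:jca_runtime} for standardness and the runtime. If anything, your version is slightly more careful than the paper's, since you make explicit the containment $\xi_{\mathcal S''}(\mathcal S'')\subseteq \xi_{\mathcal S'}(\mathcal S')$ via Lemma~\ref{lem:join_constr_aug_projects} and justify the $O(|W|)$ agent bound directly from the construction.
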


\begin{proof}
    By Lemma~\ref{lem:join_constr_aug_preserves_partial_rep}, $I''$ is an extendable instance with order-embedding function $\xi_{\mathcal S''}$. By Lemma~\ref{lem:join_constr_aug_respects_join_constr}, the image $\xi_{\mathcal S''}(\mathcal S'')$ contains exactly the elements of $\xi_{\mathcal S'}(\mathcal S')$ which satisfy $(\alpha, \beta)$. By Lemma~\ref{lem:jca_runtime},  $I''$ is standard and can be computed in $O(|F|(|W|+|\alpha|))$ time. The bound on the number of additional agents follows from Definition~\ref{def:join_constraint_augmentation}.
\end{proof}

\paragraph{Conclusion of the proof.} We can now prove Theorem~\ref{thm:omega_extension}.

\begin{proof}[Proof of Theorem~\ref{thm:omega_extension}.]
    Let $\emptyset = \Omega_0\subset\Omega_1\subset\cdots\subset \Omega_k = \Omega$ be a sequence of subsets of $\Omega$, so that $|\Omega_{i}\setminus \Omega_{i-1}|
= 1$. Let $(\alpha_i, \beta_i)\in \Omega_i\setminus \Omega_{i-1}$ for each $i \in [k]$.
    By Lemma~\ref{lem:jca_implemented}, we can iteratively apply the join constraints $(\alpha_i, \beta_i)$ to $I$, so that after each iteration, the instance $I^i$ is a standard $\Omega_i$-extension of $I$.
    The final instance $I^* = I^k$ has $O(|\Omega||W|)$ additional agents and can be computed in $O(|\Omega||F|(|W|+|(\alpha,\beta)^*|))$ time.
    
    Finally, since $\psi_{\mathcal S}$ and $\xi_{\mathcal S^*}$ are both order-embeddings, $\psi_{\mathcal S^*} = \psi_{\mathcal S}\circ \xi_{\mathcal S^*}$ is an order-embedding.
    By Lemma~\ref{lem:join_constr_aug_order_embedding} (applied iteratively), the maximal and minimal elements of $\mathcal S$ are in the image of $\xi_{\mathcal S^*}$. Thus, their images under $\psi_{\mathcal S}$ are in the image of the composition $\psi_{\mathcal S^*} = \psi_{\mathcal S}\circ \xi_{\mathcal S^*}$.
    We conclude that $\psi_{\mathcal S^*}$ is a partial representation function for $(\mathcal S^*, \succeq)$.
\end{proof}

\subsubsection{Deferred proofs}\label{sec:helpers}

In this section, we present the proofs of Lemmas~\ref{lem:join_constr_aug_projects}, \ref{lem:join_constr_aug_respects_join_constr}, and \ref{lem:join_constr_aug_preserves_partial_rep}. First, we introduce helper results and features of GI instances. It is well known that the following holds.

\begin{proposition}[Opposition of interests~\citep{blair1988lattice}]
    Let $(F, W, (\mathcal C_f)_{f\in F}, (\mathcal C_w)_{w\in W})$ be a matching market instance such that all choice functions satisfy path-independence. Let $(\mathcal S, \succeq)$ be the stable matching lattice, and let $\mu, \mu'\in \mathcal S$. Then $\mu\succeq \mu'$ if and only if $\mathcal C_w(\mu(w)\cup\mu'(w)) = \mu'(w)$ for all $w\in W$. 
\end{proposition}

Hence, the previous lemma implies that, for stable matchings $\mu$ and $\mu'$, all workers prefer $\mu'$ if and only if all firms prefer $\mu$.

A first important feature of GI instances is that a firm is a worker's favorite partner if and only if the worker is the firm's least favorite partner, and vice versa.

\begin{proposition}[\citet{gusfield1989stable}]\label{prop:GI_opposite_prefs}
    Let $a,b$ be agents from the opposite sides of the market of $I_{(X,\succeq)}$. Then, agent $a$ is the first entry of $P_b$ (i.e., agent $b$'s most preferred partner) if and only if agent $b$ is the last entry of $P_a$ (i.e., agent $a$'s least preferred acceptable partner).
\end{proposition}

Furthermore, every stable matching in a GI instance is a perfect matching.

\begin{proposition}[\citet{gusfield1989stable}]\label{prop:GI_exact_matching}
    For every agent $a\in F\cup W$ and every stable matching $\mu\in \mathcal S$, $|\mu(a)| = 1$.
\end{proposition}

From now on, coherently with Section~\ref{sec:proof:thm:omega_extension}, we let $I'$ be an extendable instance of $I$, and $I''$ be the instance after applying the join constraint augmentation for $(\alpha,\beta)$ to $I'$. 
The next lemma follows immediately by Definition~\ref{def:extendable_instance} and Definition~\ref{def:join_constraint_augmentation}. 

\begin{lemma}\label{lem:obs:basic-properties-augmentation} 
    We have:
    \begin{enumerate}
    \item $W''=W' \uplus\{w_0\} \uplus {W''_\beta}$. That is, one auxiliary worker, and additional copies $W''_\beta$ are added to $W'$.
    \item $F''=F' \uplus \{f_0\}$. That is, only one firm is added to $F'$, and this is auxiliary.
    \item For each $f \in F$ and $T\subseteq W''$, $C''_f(T) \cap copy''(w)\neq \emptyset$ for at most one $w \in W$.
    \end{enumerate}
\end{lemma}

We next show that the preference lists after the join constraint augmentation satisfy substitutability and consistency. %The proof is deferred to Appendix~\ref{sec:app:helpers}.

\begin{lemma}[Join constraint augmentation satisfies substitutability and consistency]\label{lem:join_constr_aug_subs_cons}
    All agents in $I''$ have substitutable and consistent choice functions. 
\end{lemma}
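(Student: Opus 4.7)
The plan is a case analysis on the type of agent in $I''$, checking substitutability and consistency for each type based on how the choice function is described in the construction. By Lemma~\ref{lem:obs:basic-properties-augmentation}, the agents split into six classes: (i) workers in $W'$ and auxiliary firms in $F_{aux}$, whose choice functions are inherited unchanged from $I'$; (ii) new regular workers $w''_j \in W''_\beta$, whose preferences are ordinary one-to-one preference lists; (iii) the new auxiliary firm $f_0$, with an if-else-$(w_0, W''_\beta)$ choice function; (iv) the new auxiliary worker $w_0$, with an $(F_\alpha, f_0, \gamma)$-triggered choice function; and (v) firms $f \in F$, whose original $(copy(w_1),\dots,copy(w_k),A'_f)$-regular choice function is replaced with a $(copy''(w_1),\dots,copy''(w_k),A''_f)$-regular one.

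For (i), path-independence follows directly from Property~\ref{def:extendable_instance:subs_cons} of Definition~\ref{def:extendable_instance} applied to $I'$. For (ii), a one-to-one preference list always induces a path-independent choice function. For (iii), I verify substitutability and consistency directly: if $w_0 \in S$ then $\mathcal{C}''_{f_0}(S) = \{w_0\}$, so removing anything other than $w_0$ preserves this selection, and if $w_0 \notin S$ then $\mathcal{C}''_{f_0}(S) = S \cap W''_\beta$, so that on any $T$ with $\mathcal{C}''_{f_0}(S) \subseteq T \subseteq S$ one still has $w_0 \notin T$ and $T \cap W''_\beta = S \cap W''_\beta$.

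The two substantive cases are (iv) and (v). For (iv), the key observation is that $\gamma$ is monotone non-increasing: if $T_1 \subseteq T_2$ then $F_\alpha \setminus T_1 \supseteq F_\alpha \setminus T_2$, hence $\pi(F_\alpha \setminus T_1) \supseteq \pi(F_\alpha \setminus T_2)$, and since $\alpha$ is a conjunction of disjunctions of positive indicators (hence monotone non-decreasing) one gets $\gamma(T_1) \geq \gamma(T_2)$. Combined with the fact that $\gamma(T) = \gamma(T\cup\{f_0\})$ since $f_0\notin F_\alpha$, this yields substitutability for $f_0$; and since $\mathcal C''_{w_0}(S)\supseteq S\cap F_\alpha$, any $T$ with $\mathcal C''_{w_0}(S) \subseteq T \subseteq S$ satisfies $T\cap F_\alpha = S\cap F_\alpha$, forcing $\gamma(T) = \gamma(S)$ and giving consistency. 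For (v), I verify path-independence of a general $(\overline{W}_1,\dots,\overline{W}_k,\hat A)$-regular choice function: letting $i(T)$ denote the smallest index with $T \cap \overline{W}_{i(T)} \neq \emptyset$, substitutability follows because passing from $S$ to $T\cup\{a\}$ (with $T\subseteq S$, $a\in\mathcal{C}(S)$) can only increase the index (so any condition $w_{i(S)} \preceq_f w_\ell$ still holds), while consistency follows because $\mathcal{C}(S)\subseteq T\subseteq S$ forces $i(T) = i(S)$ and $T \cap \overline{W}_{i(T)} = S\cap \overline{W}_{i(S)}$, so both the primary selection and the secondary selections triggered via $\hat A$ agree on $T$ and $S$. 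I also need to check that $A''_f$ is a valid argument for a regular choice function, which holds because $w_0$ is newly introduced into $W_{aux}$ and thus appears at most once as the second coordinate of any pair in $A''_f$.

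The main obstacle is case (iv): the interplay between $\gamma$ and the complement map $T \mapsto F_\alpha \setminus T$ inverts monotonicity, and one must carefully track that $\alpha$'s positive-literal structure is exactly what compensates for this inversion. Once this monotonicity is established, the remaining verifications in cases (iii), (iv), and (v) are essentially mechanical.
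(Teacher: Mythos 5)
Your proposal is correct and follows essentially the same route as the paper's proof: a case analysis over the six agent types, with unchanged agents inheriting path-independence from $I'$, one-to-one lists for $W''_\beta$, direct verification of the if-else choice function for $f_0$ and of $(\overline W_1,\dots,\overline W_k,\hat A)$-regular choice functions via the index argument, and the key observation that $\gamma(T)=\alpha(\pi(F_\alpha\setminus T))$ is weakly decreasing in $T$ (with $f_0\notin F_\alpha$ and $T\cap F_\alpha$ pinned down by $\mathcal C''_{w_0}(S)\subseteq T\subseteq S$ for consistency), exactly as in the paper's Case 5. Your additional check that $A''_f$ remains a valid regular-choice-function argument (since $w_0$ is fresh) is left implicit in the paper but is a harmless and correct inclusion.
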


\begin{proof}
    Because of Lemma~\ref{lem:obs:basic-properties-augmentation}, we consider the following cases. 
    
    \noindent    \textbf{Case 1}: Let $f\in F''\setminus F''_{aux}$. Thus,  $f \in F$. We know by definition that $f$ has a $(copy''(w_1),\dots, copy''(w_{k}),A_f'')$-regular choice function. 
    
    \noindent \emph{Substitutability}.  Let $w \in T \subseteq W''$ with $w\in \mathcal C''_{f}(T)$. Suppose first $w$ is a regular worker. Then $w\in copy''(w_i)$ for some $w_i\in W$, and $w_i$ is the most preferred worker in $P_f$ with copies in $T$. So, for any $T'\subseteq T$, $w_i$ is still the most preferred worker in $P_f$ with copies in $T'\cup\{w\}$. It follows that $w\in \mathcal C''_f(T'\cup\{w\})$. 
    
    Suppose now $w\in W''_{aux}$ is an auxiliary worker. Then one of the following happens: either there is a pair $(w_\ell, w)\in A''_f$, and the most preferred worker $w_i$ in $P_f$ with copies in $T$ satisfies $w_i\preceq_f w_\ell$, or $T$ is only composed of auxiliary workers and non-acceptable partners for $f$. In either case, for any $T'\subseteq T$, we also know that either the most preferred worker $w_{\hat \imath}$ in $P_f$ with copies in $T'\cup\{w\}$ satisfies $w_{\hat \imath}\preceq_f w_\ell$, or $T'\cup \{w\}$ is only made of auxiliary and non-acceptable partners for $f$. In both cases, $w\in \mathcal C''_f(T'\cup\{w\})$. We conclude that $\mathcal C''_f$ is substitutable. 

    \noindent \emph{Consistency.} Consider now sets $T,T'$ such that $\mathcal C_f''(T)\subseteq T'\subseteq T$.
    
    a) Suppose first $T$ contains the copy of at least one worker in $P_f$, and let $w_i$ be the most preferred worker $w_i$ in $P_f$ with copies in $T$.
    Let $w\in \mathcal C_f''(T)$. 
    Assume first that $w$ is a regular worker. Then, by definition, $w\in copy''(w_i)$.
    $w_i$ must thus also be the most preferred worker $w_i$ in $P_f$ with copies in $T'$, and so $w\in \mathcal C_f''(T')$.
    Assume instead $w$ is an auxiliary worker. By definition of the regular choice function, there must exist a pair $(w_\ell, w)\in A''_f$ where $w_i\preceq_f w_\ell$.
    Because $\mathcal C''_f(T)\subseteq T'$, the copies of $w_i$ selected from $T$ are also present in $T'$.
    Furthermore, because $T'\subseteq T$, no worker strictly preferred to $w_i$ in $P_f$ exists in $T'$.
    Thus, $w_i$ remains the most preferred regular worker with copies in $T'$.
    Since $w\in \mathcal C''_f(T)\subseteq T'$, the auxiliary worker $w$ is also present in $T'$.
    Because $w_i$ remains the top regular worker in $T'$ and the condition $w_i\preceq_f w_\ell$ still holds, the choice function will successfully select $w$ from $T'$ as well.
    Thus, in both cases, $w\in C''_f(T')$, establishing that $\mathcal C''_f(T)\subseteq \mathcal C''_f(T')$.
    Since $T'\subseteq T$ ensures that no additional preferred regular workers or triggered auxiliary workers can appear, we conclude that $\mathcal C''_f(T') = \mathcal C''_f(T)$.
    
    b) Now suppose $T$ contains only auxiliary workers and non-acceptable partners.
    Then $C''_f(T)$ contains all and only the auxiliary workers $w_{aux}$ in $T$ such that $(w_\ell,w_{aux})\in A''_f$ for some $w_\ell$, which by definition are all and only the auxiliary workers in $T'$ such that $(w_\ell,w_{aux})\in A''_f$ for some $w_\ell$.
    Thus, ${\mathcal C}''_f(T')=\mathcal C''_f(T)$. We conclude that $\mathcal C''_f$ is consistent.

    \noindent {\bf Case 2:} Let $f=f_0$ (i.e., $f$ is the newly added auxiliary firm) and fix sets of workers $T', T$ with $T'\subseteq T$. \\
    \emph{Substitutability.} Let $w\in \mathcal C''_{f}(T)$. If $w = w_0$, then by definition $w\in \mathcal C''_{f}(T'\cup\{w\})$ as well. If $w\ne w_0$, then $w_0 \notin T$. Hence $w_0\not\in T'$, and $w\in {W''}_{\beta}$. Thus $w\in \mathcal C''_{f}(T'\cup\{w\})$ as well. We conclude that $\mathcal C''_{f}$ is substitutable. \\
    \emph{Consistency.} Suppose $\mathcal C''_{f}(T)\subseteq T'$. If $w_0\in T$, then $\mathcal C''_{f}(T) = \mathcal C''_{f}(T') = \{w_0\}$. If $w_0\not\in T$, then $T'\supseteq \mathcal C''_{f}(T)=T\cap {W''}_{\beta}$ implies $\mathcal C''_{f}(T) = T' \cap {W''}_{\beta} = C''_{f}(T')$. Thus, $\mathcal C''_{f}$ is consistent.
    
    \noindent {\bf Case 3:} Let $f \in F''_{aux} \setminus \{f_0\}$. That is, $f$ is an auxiliary firm of $I'$. In this case, ${\mathcal C}_f'' = {\mathcal C}_f'$, and we know by hypothesis that the latter is substitutable and consistent.
    
    \noindent {\bf Case 4:} Let $w \in {W''_{\beta}}$. Then $w\in copy''(w_j)$ for some $w_j \in W$. The preference list $P''_{w}$ is a truncated version of the original preference list $P_{w_j}$, which is a preference list of agents, plus the singleton set $\{f_0\}$ in first position. Thus, the corresponding choice function $ P''_{w}$ is substitutable and consistent.
    
    \noindent {\bf Case 5:} Let $w = w_0$ (i.e., $w$ is the newly added auxiliary worker). Fix sets of firms $T',T$ with $T'\subseteq T$.  \\ 
    \emph{Substitutability.} Let $f\in \mathcal C''_{w}(T)$. If $f = f_0$, we know that $f_0\in \mathcal C''_{w}(T)$ implies $\alpha(\pi(F_\alpha \setminus T)) = 1$. Since $\alpha(\pi(F_\alpha \setminus T))$ is weakly decreasing in $T$ and $f_0 \notin F_\alpha$ since $F_\alpha \subseteq F$, we also know that $\alpha(\pi(F_\alpha \setminus (T'\cup\{f_0\}))) = 1$. Thus, $f_0\in \mathcal C''_{w}(T'\cup\{f_0\})$ as well. If $f\ne f_0$, then $f\in F_{\alpha}$, and so  $f\in \mathcal C''_{w}(T'\cup\{f\})$. Hence, $\mathcal C''_{w_0}$ is substitutable.
    
    \emph{Consistency.} Suppose $\mathcal C''_{w}(T)\subseteq T'$. By construction, we have $\mathcal C''_w(T)\cap F_\alpha = T\cap F_\alpha = T'\cap F_\alpha$. It follows that $\alpha(\pi(F_\alpha \setminus T)) = \alpha(\pi(F_\alpha \setminus T'))$. Thus, let $f \in {\mathcal C}''_{w}(T)\subseteq T'$. If $f \in F_\alpha$, then $f \in {\mathcal C}''_w(T')$ by construction. Else, $f = f_0$, and again $f \in {\mathcal C}''_w(T')$ because of construction and of $\alpha(\pi(F_\alpha \setminus T)) = \alpha(\pi(F_\alpha \setminus T'))$. To establish full equality, we must also confirm that $f_0$ is not newly selected from $T'$. If $f_0\not\in \mathcal C''_w(T)$, then either $f_0\not\in T$ (implying $f_0\not\in T'$) or $\alpha(\pi(F_\alpha \setminus T')) = \alpha(\pi(F_\alpha \setminus T)) = 0$. In either scenario, $f_0\not\in \mathcal C''_w(T')$. We conclude that ${\mathcal C}''_{w}(T') = {\mathcal C}''_{w}(T)$, and so $\mathcal C''_{w}$ is consistent.
    
    \noindent {\bf Case 6:} Let $w \in W''\setminus (\{w_0\} \cup {W''_{\beta}})$. In this case, ${\mathcal C}_w'' = {\mathcal C}_w'$, and we know by hypothesis that the latter is substitutable and consistent. 
\end{proof}

We can now prove Lemma~\ref{lem:join_constr_aug_projects}.

\begin{proof}[Proof of Lemma~\ref{lem:join_constr_aug_projects}.]
    Clearly, $\zeta_{\mathcal S''}({\mu''})$ is a matching in $I'$. 
    
    \noindent\emph{Individual rationality of $\zeta_{\mathcal S''}({\mu''})$.} 
    
    \noindent \textbf{Case 1}: Let $ w \in W'$. For any $w\in W'\setminus W$, ${\mu''}(w) = \zeta_{\mathcal S''}(\mu'')(w)$, and $\mathcal C'_w = \mathcal C''_w$. We claim the same is true for $w_j\in W$. By definition, ${\cal C}'_{w_j}={\cal C}''_{w_j}$. To show ${\mu''}(w_j) = \zeta_{\mathcal S''}(\mu'')(w_j)$, pick an arbitrary $w''_j \in copy''(w_j)$. We need to show that $\mu''(w_j'')\cap F' \subseteq \mu''(w_j)\cap F'$. By construction, only agents from $F$ (resp., $F \cup \{f_0\}$) are acceptable for $w_j$ (resp., $w''_j$). Thus, $\mu''(w_j'')\cap F'=\mu''(w_j'') \cap F $ and $\mu''(w_j)\cap F'=\mu''(w_j) \cap F $. Assume by contradiction that there exists $f \in \mu''(w''_j)\cap F \setminus \mu''(w_j) \cap F$. By definition of $\mathcal{C}''_f$, $f$ demands $w_j$ at $\mu''$. Thus, there exists $f' \in F''\setminus \{f\}$ such that $f' \in \mu''(w_j)$. Since only agents from $F$ are acceptable for $w_j$, we deduce that $f' \in F\setminus\{f\}$.

    We claim that there exists some $w_k \in W$ such that no agent from $copy''(w_k)$ is matched to any firm from $F$ in $\mu''$. 
    Indeed, by Proposition~\ref{prop:GI_exact_matching}, stable matchings in GI instances are perfect matchings, so $|F| = |W|$. Furthermore, Lemma~\ref{lem:obs:basic-properties-augmentation} and individual rationality of $\mu''$ imply that for each regular firm $f\in F$, $\mu''(f)\cap copy''(w) = \mathcal C''_f(\mu''(f))\cap copy''(w) \ne \emptyset$ for at most one $w\in W$. As $w_j$ and its copy $w_j''$ are matched to $f'\neq f \in F$, respectively, the claim follows. 
 
    As ${P}''_{w_k}={P}_{w_k}$ by construction, $w_k$ demands at $\mu''$ the last firm $f_k$ $P_{w_k}$. By Proposition~\ref{prop:GI_opposite_prefs}, $w_k$ is $f_k$'s most preferred partner in ${P}_{f_k}$. Thus, ${\cal C}''_{f_k}$ is a $(copy(w_k),\dots,A''_{f_k})$-regular choice function. Consequently, $(w_k, f_k)$ is a blocking pair for $\mu''$, a contradiction. We have thus shown ${\mu''}(w_j) = \zeta_{\mathcal S''}(\mu'')(w_j)$.
    Therefore we know that for all $w\in W'$, $\mathcal C'_w(\zeta_{\mathcal S''}({\mu''})(w)) = \mathcal C''_w({\mu''}(w)) = {\mu''}(w) = \zeta_{\mathcal S''}({\mu''})(w)$, where the second equality follows by individual rationality of $\mu''$. 
    
   \noindent \textbf{Case 2}: Let $f\in F'\setminus F$. By construction, $f$ is not acceptable for any worker from ${W''_\beta}$. Then ${\mu''}(f) = \zeta_{\mathcal S''}({\mu''})(f)$. Moreover, $\mathcal C'_f = \mathcal C''_f$ by construction. Thus we have again that $\mathcal C'_f(\zeta_{\mathcal S''}({\mu''})(f)) = \mathcal C''_f({\mu''}(f)) = {\mu''}(f) = \zeta_{\mathcal S''}({\mu''})(f)$. 
   
    \noindent \textbf{Case 3}: Let $f\in F$. Assume first ${\mu''}(f)\cap W''_{reg}\neq \emptyset$. By definition of $\mathcal C''_f$, and by individual rationality of ${\mu''}$, we know that ${\mu''}(f)\cap W''_{reg}\subseteq copy''(w_j)$ for some $w_j\in W$. Then, $\emptyset \neq \zeta_{\mathcal S''}({{\mu''}})(f)\cap W'_{reg}\subseteq copy'(w_j)$. Hence, by definition of $\mathcal C'_f$, $f$ demands $\zeta_{\mathcal S''}({{\mu''}})(f)\cap W'_{reg}$ at $\zeta_{\mathcal S''}({{\mu''}})(f)$. Now let $w_{aux}\in \zeta_{\mathcal S''}({{\mu''}})(f)\cap W'_{aux}$. By individual rationality of ${{\mu''}}$, we know that $(w_\ell, w_{aux})\in A''_f$ for some $w_\ell\succeq_f w_j$. Since $w_0\not\in W'$, we have $w_{aux}\ne w_0$, and so $(w_\ell, w_{aux})\in A'_f$ as well. By definition of $\mathcal C'_f$, firm $f$ demands $w_{aux}$ at $\zeta_{\mathcal S''}({{\mu''}})(f)$ as well. We conclude that $\mathcal C'_f(\zeta_{\mathcal S''}({{\mu''}})(f))= \zeta_{\mathcal S''}({{\mu''}})(f)$, and so $\zeta_{\mathcal S''}({{\mu''}})$ is individually rational. If ${{\mu''}}(f)\cap W_{reg}=\emptyset$, then $\zeta_{\cal S''}({{\mu''}})(f)\subseteq W'_{aux}$. Similarly to the argument above, we deduce that, for each $w'_{aux} \in \zeta_{\cal S''}({{\mu''}})(f)$, there exists $w \in W$ such that $(w,w'_{aux}) \in A'_f$. Thus, ${\cal C}'_f(\zeta_{\cal S''}({{\mu''}})(f))=\zeta_{\cal S''}({{\mu''}})(f)$. We conclude again that $\zeta_{\mathcal S''}({{\mu''}})$ is individually rational.
    
    \noindent\emph{Absence of blocking pairs in $\zeta_{\mathcal S''}({{\mu''}})$.} Assume for contradiction that $(f, w)$ is a blocking pair for $\zeta_{\mathcal S''}({{\mu''}})$. $w$ does not change their choice function when going from $I'$ to $I''$, that is, $\mathcal C_w' = \mathcal C_w''$. We consider two cases.

    \noindent \textbf{Case 1}: $f\in F'\setminus F$. That is, $f$ is an auxiliary firm in $I'$. $f$ does not change their preference list when going from $I'$ to $I''$: $\mathcal C_f' = \mathcal C_f''$. Furthermore, every acceptable partner for $f$ (resp., $w$) under $\mathcal C_f''$ (resp., $\mathcal C_w''$) is in $W'$ (resp., $F'$). So, ${{\mu''}}(f) = \zeta_{\mathcal S''}({{\mu''}})(f)$ and ${{\mu''}}(w) = \zeta_{\mathcal S''}({{\mu''}})(w)$. We see that if $f$ and $w$ demand each other in $\zeta_{\mathcal S''}({{\mu''}})$, they must also demand each other in ${{\mu''}}$, and thus form a blocking pair in ${{\mu''}}$, a contradiction.

    \noindent \textbf{Case 2}: $f\in F$. As before, ${{\mu''}}(w) = \zeta_{\mathcal S''}({{\mu''}})(w)$.
    We also see that $\zeta_{\mathcal S''}({{\mu''}})(f)$ is equal to ${{\mu''}}(f)$, possibly minus some workers $w''\in {W''_{\beta}}$ or $w_0$, plus at most the unique node $\hat{w}\in W$ such that $W''_\beta \cap copy''(\hat{w})\cap \mu''(f)\neq \emptyset$. We know by definition of $\mathcal C'_f$ and $\mathcal C''_f$, though, that any $\tilde{w}$ demanded by $f$ at $\zeta_{\mathcal S''}({{\mu''}})(f)$ under $\mathcal C'_f$ is also demanded by $f$ at ${{\mu''}}(f)$ under $\mathcal C''_f$. Thus, $(f, w)$ also blocks ${{\mu''}}$, a contradiction.

    So, a blocking pair $(f, w)$ for $\zeta_{\cal S''}({{\mu''}})$ does not exist. Thus, $\zeta_{\mathcal S''}({{\mu''}})$ is a stable matching in $\mathcal S'$.
\end{proof}

We next have a lemma relating  matchings from ${\cal S}''$ with their (stable) images under $\zeta_{\mathcal S''}$ and $\xi_{\mathcal S'} \circ \zeta_{\mathcal S''}$. 

\begin{lemma}[Augmented matchings contain original matchings]\label{lem:from-mu-to-muprime-to-muprimeprime}
    Let $\mu''\in \mathcal S''$ with $\mu' = \zeta_{\mathcal S''}(\mu'')$ and $\mu = \xi_{\mathcal S''}(\mu'') = \xi_{\mathcal S'}(\mu')$. Then, $\mu\subseteq \mu'\subseteq \mu''$. Furthermore, for $w\in W$ and $w'\in W'$, we have $\mu(w) = \mu'(w)=\mu''(w)$ and $\mu'(w') = \mu''(w')$.
\end{lemma}
\begin{proof}
    Recall that $\mu' \in {\cal S}'$ by Lemma~\ref{lem:join_constr_aug_projects} and $\mu \in {\cal S}$ by definition of extendable instance. We first show $\mu \subseteq \mu'$. Consider any $(f, w)\in \mu$. By Proposition~\ref{prop:GI_exact_matching},  $\mu(w) = \{f\}$, $\mu(f)=\{w\}$. Thus, $\mu'(w)\subseteq\{f\}$. By definition of the extendable instance $I'$, there exists a pair $(f, w')\in \mu'$ such that $w'\in copy'(w)$. By definition of the choice function $\mathcal C'_f$, we know that if $f$ demands $w'$ at a matching, it must demand all agents in $copy'(w)$, including $w$ itself. By definition of ${\cal C}'_w$ and since $\mu'(w)\subseteq \{f\}$, if $w'$ demands $f$ at $\mu'$, so does $w$. Since $\mu'$ is stable in $I'$, hence individually rational, $f,w'$ demand each other at $\mu'$. Thus, $f,w$ demand each other at $\mu'$ and by stability of $\mu'$ we know that $(f,w)\in\mu'$. We deduce that $\mu\subseteq\mu'$. 
    
    We next show $\mu' \subseteq \mu''$. Let $(f, w)\in \mu'$. We consider two cases.

    \textbf{Case 1}: $w\not\in W$. Then by definition of $\zeta_{\mathcal S''}$, we must have $(f, w)\in \mu''$ as well.

    \textbf{Case 2}: $w\in W$. Then there exists $(f, w')\in \mu''$ such that $w'\in copy''(w)$. 
    Since $\mu' \in {\cal S}'$ and the only acceptable sets for $w$ under the choice function $\mathcal C'_w$ are singletons in $F$, we know that $\mu'(w) = \{f\}$. It follows that $\mu''(w)\subseteq \{f\}$ as well, or $\mu'$ would also contain $(f', w)$ for some  $f'\in F\setminus \{f\}$, contradicting the individual rationality of $\mu'$. Thus, $w$ demands $f$ at $\mu''$. By definition of $\mathcal C''_f$, we know that $f$ demands all agents in $copy''(w)$ as well, including $w$. Thus, by stability of $\mu''$, we must have $(f,w)\in \mu''$.
\end{proof}

Now, we can prove Lemma~\ref{lem:join_constr_aug_respects_join_constr}.

\begin{proof}[Proof of Lemma~\ref{lem:join_constr_aug_respects_join_constr}.]
    First, suppose $\mu\in \xi_{\mathcal S'}(\mathcal S')$ satisfies the join constraint. Let $\mu'\in \mathcal S'$ be such that $\xi_{\mathcal S'}(\mu') = \mu$. Recall that $\mu\in \mathcal S$ by definition of extendable instance. Thus, by Proposition~\ref{prop:GI_exact_matching}, we know that, for each $f \in F$, $\mu(f) = \{w\}$ for some $w\in W$. Consider the matching $\mu''$ in $I''$ given by the following procedure:
    \begin{enumerate}
        \item First, set $\mu''\gets \mu'$.\label{lem:join_constr_aug_respects_join_constr:step_mu'}
        \item For each $f\in F_{\alpha}$, let $w\in P_f$ be the unique worker such that $(f, w)\in \rho^-$ for some $\rho\in \Pi_\alpha$ (see Proposition~\ref{prop:GI_rotations_overlap}). If $\mu(f) \preceq_f w$, then add $(f, w_0)$ to $\mu''$. \label{lem:join_constr_aug_respects_join_constr:step_rho}
        \item If $\alpha(\pi(F_\alpha\setminus\mu''(w_0))) = 1$, then add to $\mu''$:\label{lem:join_constr_aug_respects_join_constr:step_if}
        \begin{enumerate} \item $(f_0,w_0)$;
        \item $(\mu(w), {w''})$ for each $w \in W_\beta$ and ${w''\in W''_{\beta}}\cap copy''(w)$;
        \end{enumerate}
        \item Else, add $(f_0, {w''})$ to $\mu''$, for each ${w''\in W''_{\beta}}$.

    \end{enumerate}
    Our goal is to show that $\mu''$ is stable in $I''$, and that $\zeta_{\mathcal S''}(\mu'') = \mu'$. 
    
    \noindent\emph{Individual rationality of $\mu''$.} First we show individual rationality of $\mu''$. It is easily verified by inspection that, for $a\in F''\cup W''\setminus {W''_\beta}$, $\mu''(a) = \mathcal C''_a(\mu''(a))$. Now consider ${w''\in W''_\beta}$, and let $w'' \in  copy''(w)$ for some $w \in W$. Note that either $\mu''({w''})=\mu(w)$ (if the ``if'' condition from Step~\ref{lem:join_constr_aug_respects_join_constr:step_if} above applies) or $\mu''({w''})=\{f_0\}$ (if it does not). Suppose the latter holds. Then, by Definition~\ref{def:join_constraint_augmentation}, $\mu''({w''})= \mathcal C''_{w''}(\mu''({w''}))$. 
    
    Now suppose $\mu''({w''}) = \mu(w)$. Since by definition the preference list $P''_{w''}$ is given by a list of firms, it suffices to show that $\mu''({w''})$ belongs to the list. From Step~\ref{lem:join_constr_aug_respects_join_constr:step_if} above, we must have $\alpha(\pi(F_{\alpha}\setminus\mu''(w_0))) = 1$. 
    
    Let $f \in F_\alpha$. Fix $\rho \in \Pi_\alpha$ so that $f \in F_\rho$, and let $\hat w$ be the unique worker such that $(f,\hat w)\in \rho^-$. By Step~\ref{lem:join_constr_aug_respects_join_constr:step_rho}, $f\in\mu''(w_0)$ if and only if $\mu(f)\preceq_f \hat w$, which happens if and only if $\rho\notin \psi_{\mathcal S}({\mu})$ (recall that $F_\rho \cap F_{\rho'}=\emptyset$ for $\rho\neq \rho' \in \Pi_\alpha$ by Lemma~\ref{obs:F-rho-do-not-intersect}).
    Thus, $\pi(F_\alpha\setminus \mu''(w_0)) = \{\rho \in \Pi_\alpha \mid F_\rho \subseteq F_\alpha\setminus \mu''(w_0)\} = \{\rho \in \Pi_\alpha \mid F_\rho \subseteq F_\alpha\setminus \cup_{\rho \in \Pi_\alpha : \rho \notin \psi_{\cal S} ({\mu})} F_\rho  \} = \{ \rho \in \Pi_\alpha \mid \rho \in \psi_{\cal S}({{\mu}})\} = \Pi_\alpha \cap \psi_{\cal S}({\mu})$.
    % $$\begin{array}{lll} \pi(F_\alpha\setminus \mu''(w_0)) & = & \{\rho \in \Pi_\alpha \mid F_\rho \subseteq F_\alpha\setminus \mu''(w_0)\} \\[1.2ex]
    % & = & \{\rho \in \Pi_\alpha \mid F_\rho \subseteq F_\alpha\setminus \cup_{\rho \in \Pi_\alpha : \rho \notin \psi_{\cal S} ({\mu})} F_\rho  \} \\[1.2ex] 
    % & = & \{ \rho \in \Pi_\alpha \mid \rho \in \psi_{\cal S}({{\mu}})\} \\[1.2ex] 
    % & = & \Pi_\alpha \cap \psi_{\cal S}({\mu}). \end{array}$$
    We deduce $\alpha(\psi_{\cal S}({{\mu}}))=\alpha(\Pi_\alpha \cap \psi_{\cal S}({{\mu}}))=\alpha(\pi(F_{\alpha}\setminus\mu''(w_0))) = 1.$ Since $\mu$ satisfies the join constraint $(\alpha, \beta)$ by hypothesis, we also have $\beta(\psi_{\mathcal S}({{\mu}})) = 1$. Thus, $\rho\in \psi_{\mathcal S}({{\mu}})$ for all $\rho\in \Pi_\beta$. 
    
    Recall that we have fixed ${w'' \in W''_\beta} \cap copy''(w)$. Now let $\hat f$ be the least preferred firm in $P_{w}$ such that $(\hat f, w)\in\rho^+$ for some $\rho\in \Pi_{\beta}$, as in Definition~\ref{def:join_constraint_augmentation} Step~\ref{def:join_constraint_augmentation:reg_worker_choice}. Then we know that $\mu(w)\preceq_{w}\hat f$, since the corresponding rotation $\rho\in \Pi_\beta$ belongs to $\psi_{\cal S}({{\mu}})$. 
    We deduce that $\mu''({w''})=\mu(w)\preceq_{w}\hat f$. As a result, $\mu''({w''})$ is an element of the preference list $P''_{{w''}}$, as required. %and so $\mu''({w''_j}) = \mathcal C''_{w''_j}(\mu''({w''_j}))$. 
    We conclude that $\mu''$ is individually rational.
    
    \noindent\emph{Absence of blocking pairs in $\mu''$.} Let $(f,w) \in (F'' \times W'')\setminus \mu''$. We consider several cases. In all of them, we show that either $f$ does not demand $w$ at $\mu''$, or $w$ does not demand $f$ at $\mu''$, or that $(f,w)$ blocks $\mu$ or $\mu'$ as well, a contradiction. Thus, $(f,w)$ does not block $\mu''$.

    \noindent    \textbf{Case 1}: $f = f_0$ and $w = w_0$. Then $\alpha(\pi(F_\alpha\setminus\mu''(w_0))) = 0$. By construction of $\mathcal C''_{w_0}$, $w_0$ does not demand $f_0$ at $\mu''$.

    \noindent    \textbf{Case 2}: $f = f_0$ and $w\ne w_0$. Then $w\in {W''_{\beta}}$, as other than $w_0$ only the workers in ${W_{\beta}''}$ are acceptable to $f_0$. By construction, either $\mu''(f_0)={W''}_\beta$, or $\mu''(f_0) = \{w_0\}$. By definition the latter must hold, but then by construction $f_0$ does not demand $w$ at $\mu''$. 

    \noindent      \textbf{Case 3}: $f\ne f_0$ and $w = w_0$. Then $f\in F_{\alpha}$, as other than $f_0$ only the firms in $F_{\alpha}$ are acceptable to $w_0$. Let $w_i\in P_f$ be the worker such that $(f, w_i)\in \rho^-$ for some $\rho\in \Pi_\alpha$, which is unique by Proposition~\ref{prop:GI_rotations_overlap}. Then $\mu(f) = w_j\succ_f w_i$, because if $\mu(f)\preceq_f w_i$ held, then $(f, w_0)$ would have been added to $\mu''$ in Step~\ref{lem:join_constr_aug_respects_join_constr:step_rho}. Thus, $w_j'\in \mu'(f)$ for some $w_j'\in copy'(w_j)$. Step~\ref{lem:join_constr_aug_respects_join_constr:step_mu'} in the construction of $\mu''$ implies that $w'_j \in \mu''(f)$. By the definition of the join constraint augmentation, $(w_i, w_0)\in A''_f$, and so under the choice function $\mathcal C''_f$, firm $f$ does not demand $w_0$ at $\mu''$, since it is matched to $w_j'$ and $w_j \succ_f w_i$.

    \noindent    \textbf{Case 4}: $f\in F'\setminus F$. Thus, in particular, $f \notin F_\alpha$ (since $F_\alpha \subseteq F$) and $w\in W'$, as no worker from {$W''_\beta$} is acceptable for $f$ under $\mathcal C''_f$. Hence, the only step in the construction of $\mu''$ that involves $f$ or $w$ is 1. Therefore, $\mu'(f) = \mu''(f)$ and $\mu'(w) = \mu''(w)$. Furthermore, $\mathcal C''_f = \mathcal C'_f$ and $\mathcal C''_w = \mathcal C'_w$. So, if $(f, w)$ blocks $\mu''$, it must also block $\mu'$.

    \noindent    \textbf{Case 5}: $f\in F$ and $w\in {W'}$. As in the previous case, we know that $\mu'(w) = \mu''(w)$ and $\mathcal C''_w = \mathcal C'_w$. Thus, if $w$ demands $f$ at $\mu''$, it also demands $f$ at $\mu'$. Also, $\mu''(f)$ coincides with $\mu'(f)$, plus possibly $w_0$ and new copies of $\hat w \in W \cap \mu'(f)$. Assume $f$ demands $w$ at $\mu''$. Then by construction $f$ demands $w$ at $\mu'$. Hence, if $(f,w)$ blocks $\mu''$, it also blocks $\mu'$.

    \noindent    \textbf{Case 6}: $f\in F$ and $w = {w_j''}\in {W''_{\beta}}$, where ${w_j''}\in copy''(w_j)$ for some $w_j \in W$. Then $f_0\notin\mu''(w)$ since, by construction, if $f_0 \in \mu''(w)$, $w$ does not demand any other firm. By Lemma~\ref{lem:from-mu-to-muprime-to-muprimeprime}, we know that $\mu'(w_j)=\mu(w_j)$. Therefore, by construction of $\mu''$ and the definition of $\xi_{\mathcal S'}:\mathcal S'\to \mathcal S$, we have $\mu''({w_j''}) = \mu(w_j)=\mu'(w_j)=\mu''(w_j)$. Thus $(f, w_j)\not\in \mu'$ as well. If $f$ demands ${w_j''}$ at $\mu''(f)$ under $\mathcal C''_f$, then $f$ also demands $w_j$ at $\mu'(f)$ under $\mathcal C'_f$. If ${w_j''}$ demands $f$ at $\mu''({w_j''})$ under $\mathcal C''_{{w_j''}}$, then by construction of $\mathcal C''_{{w_j''}}$, $w_j$ demands $f$ at $\mu'(w_j)$ under $\mathcal C'_{w_j}$ as well. Thus, $(f, w_j)$ blocks $\mu'$, a contradiction.

    We see that there can be no blocking pair, and so $\mu''$ is stable in $I''$. Furthermore, it is easy to verify that $\zeta_{\mathcal S''}(\mu'') = \mu'$, which then implies $\xi_{\mathcal S''}(\mu'') = \mu$. 

    \smallskip 

    For the other direction, let $\mu = \xi_{\mathcal S''}(\mu'')$ for some $\mu''\in \mathcal S''$ and suppose that $\alpha(\psi_{\mathcal S}({{\mu}})) = 1$. Our goal is to show that $\beta(\psi_{\mathcal S}({{\mu}})) =1$. Let $w \in W$,  $w'' \in W''_\beta \cap copy''(w)$. We first claim that $(f_0,w'')\notin \mu''$. 
    
    Consider any $\rho\in \Pi_\alpha\cap \psi_{\mathcal S}({{\mu}})$, and any $(f, \hat w)\in \rho^-$. Since $\rho$ has occurred in $\mu$, we know that $\mu(f)\succ_{f} \hat w$. We also know by Lemma~\ref{lem:from-mu-to-muprime-to-muprimeprime} that $\mu(f)\subseteq \mu''(f)$, with $\mu''(f)$ possibly also containing edges incident to $w_0$ and other elements of $copy''(\hat w)$. However, by construction of $\mathcal C''_f$, firm $f$ does not demand $w_0$ at $\mu(f)$. By substitutability, firm $f$ also does not demand $w_0$ at $\mu''(f)\supseteq \mu(f)$, and so $w_0\not\in \mu''(f)$ by individual rationality of $\mu''$. Since this is true for all $\rho\in \Pi_\alpha\cap \psi_{\mathcal S}({{\mu}})$ and all $(f, \hat w)\in \rho^-$, we see that, by monotonicity of $\alpha (\cdot )$, we have 
    $\alpha(\pi(F_\alpha\setminus\mu''(w_0))) \geq \alpha(\Pi_\alpha \cap \psi_{\cal S}({{\mu}})) = \alpha((\psi_{\cal S}({{\mu}}))=1.$
    Thus, $w_0$ demands $f_0$ at $\mu''(w_0)$. Since $f_0$ always demands $w_0$ and $\mu''$ is stable, we have $(f_0, w_0)\in \mu''$. By construction of $\mathcal C''_{f_0}$ and individual rationality of $\mu''$, we deduce $(f_0, {w''})\not\in \mu''$.
    
    We show next that $w''$ is matched in $\mu''$. Assume this is not the case. Then, ${w''}$ demands the last firm $f_k$ on their preference list $P''_{{w''}}$, which is also the last firm on $P_{w}$. By Proposition~\ref{prop:GI_opposite_prefs}, we know that $w$ is the first worker on the preference list $P_{f_k}$. By definition, $f_k$ also always demands all workers in $copy''(w)$ under the choice function $\mathcal C''_f$. Thus, $(f_k, {w''})$ would block $\mu''$, a contradiction. Thus, $w''$ is matched in $\mu''$.

    Since we argued above that $(f_0,w'') \notin \mu''$, it follows that $\mu''({w''}) = \{f_\ell\}$ for some firm $f_\ell\ne f_0$ in $P''_{{w''}}$. Let $f_j$ be the least preferred firm in $P_{w}$ such that $(f_j, w)\in\rho^+$ for some $\rho\in \Pi_\beta$. Then, by construction of $P''_{{w''}}$, the only acceptable firms for ${w''}$ other than $f_0$ are those less preferable or equal to $f_j$ under $P_{w}$. Thus, by stability of $\mu''$ and Lemma~\ref{lem:from-mu-to-muprime-to-muprimeprime}, we have $\mu''({w''}) = \mu''(w) = \mu(w) \preceq_{w} \{f_j\}$. 

    Since $w \in W$, $w'' \in W''_\beta \cap copy''(w)$ were selected arbitrarily, it follows that all rotations in $\Pi_\beta$ have occurred in $\mu$. Equivalently, $\Pi_\beta\subseteq\psi_{\mathcal S}({{\mu}})$, which implies $\beta(\psi_{\mathcal S}({{\mu}})) = 1$. 
\end{proof}

Now, we have a result showing that the new stable matching lattice admits a natural order-embedding to the previous stable matching lattice. 

\begin{lemma}[Join constraint augmentation admits order-embedding]\label{lem:join_constr_aug_order_embedding}
    $\zeta_{\mathcal S''}:\mathcal S''\to\mathcal S'$ is an order-embedding, under the standard partial orders over stable matchings. As a consequence, $\xi_{\mathcal S''}:\mathcal S''\to\mathcal S$ is also an order-embedding. Furthermore, the maximal and minimal elements of $\mathcal S$ are in the image of $\xi_{\mathcal S''}$.
\end{lemma}
\begin{proof}
    To distinguish between the partial orders in ${\cal S}'$ and ${\cal S}''$, we will use $\succeq'$ and $\succeq''$, respectively.
    Let $\mu''_1, \mu''_2\in \mathcal S''$. First, consider any firm $f$ in $F'\setminus F$.
    Then $\mathcal C''_f = \mathcal C'_f$,
    $\mu''_1(f) = \zeta_{\mathcal S''}(\mu''_1)(f)$, and $\mu''_2(f) = \zeta_{\mathcal S''}(\mu''_2)(f)$.
    Thus, $\mu''_1(f)\succeq''_f \mu''_2(f)$ if and only if $\zeta_{\mathcal S''}(\mu''_1)(f)\succeq_f'\zeta_{\mathcal S''}(\mu''_2)(f)$.

    Next, consider any firm $f\in F$.
    We know by Lemma~\ref{lem:from-mu-to-muprime-to-muprimeprime} and the definition of $\zeta_{\mathcal S''}$ that for $i=1,2$, $\zeta_{\mathcal S''}(\mu''_i)(f) = \mu''_i(f)\setminus (W''_{\beta}\cup \{w_0\})$.
    By construction of the standard choice function $\mathcal C''_f$ in Definition~\ref{def:join_constraint_augmentation}, $\mu''_1(f)\succeq''_f \mu''_2(f)$ if and only if $\mu''_1(f)\setminus (W''_{\beta}\cup \{w_0\})\succeq'_f \mu''_2(f)\setminus (W''_{\beta}\cup \{w_0\})$ as well, which is true if and only if $\zeta_{\mathcal S''}(\mu''_1)(f)\succeq'_f\zeta_{\mathcal S''}(\mu''_2)(f)$.
    Thus, $\mu''_1(f)\succeq''_f \mu''_2(f)$ if and only if $\zeta_{\mathcal S''}(\mu''_1)(f)\succeq_f'\zeta_{\mathcal S''}(\mu''_2)(f)$.

    Finally, consider any firm $f\in F''\setminus F'$, and suppose $\zeta_{\mathcal S''}(\mu''_1)\succeq'\zeta_{\mathcal S''}(\mu''_2)$.
    We show that $\mu''_1(f)\succeq''_f \mu''_2(f)$, which concludes the proof of the order embedding.
    We know $f = f_0$, and each of $\mu''_1(f_0), \mu''_2(f_0)$ is  either equal to $\{w_0\}$ or contained in $W''_\beta$.
    If $\mu''_1(f_0) = \{w_0\}$, then $\mu''_1(f_0)\succeq''_{f_0} \mu''_2(f_0)$, and we are done. Now suppose $\mu''_1(f_0)\ne\{w_0\}$.
    Then $f_0$ demands all agents in $W''_\beta$, who also by definition always demand $f_0$, and so $\mu''_1(f_0) = {W''_\beta}$ and $\alpha(\pi(F_\alpha\setminus\mu''_1(w_0))) = 0$ by stability and definition of ${\mathcal C}''_{w_0}$.
    Suppose for contradiction that $\mu''_1(f_0) \prec''_{f_0} \mu''_2(f_0)$. Then, since $\mu''_2$ is also stable, we have $\mu''_2(f_0) = \{w_0\}$.
    By individual rationality of $\mu''_2$ and the definition of ${\mathcal C}''_{w_0}$, we have $\alpha(\pi(F_\alpha\setminus\mu''_2(w_0))) = 1$.
    By monotonicity of $\alpha$, there exists some $f_j\in F_\alpha$ with $f_j\in \mu''_1(w_0)\setminus \mu''_2(w_0)$.
    By Proposition~\ref{prop:GI_exact_matching} there is a unique worker $w_1\in W$ such that $w_1\in \xi_{\mathcal S''}(\mu_1'')(f_j)\in\mathcal S$, and so by definition of $\mathcal \xi_{\mathcal S''}$, we also know that $w_1$ is the only element from $W$ such that $\mu''_1(f_j)\cap copy''(w_1) \ne\emptyset$.
    Similarly, there is a unique worker $w_2\in W$ such that $\mu''_2(f_j)\cap copy''(w_2) \ne\emptyset$.
    By definition of the standard choice function $C''_{f_j}$ and  because $f_j$ demands $w_0$ at $\mu''_1(f_j)$ by individual rationality of $\mu''$, we know that $w_1\preceq_{f_j} w_i$, where $(w_i, w_0)$ is the pair added to $A''_f$ in the join constraint augmentation.
    On the other hand, since $f_j$ does not demand $w_0$ at $\mu''_2(f_j)$ by our selection of $f_j$, we know that $w_2\succ_{f_j} w_i$, where $(w_i, w_0)$ is again the pair added to $A''_f$ in the join constraint augmentation.
    Thus, $\mu''_1(f_j)\prec''_{f_j}\mu''_2(f_j)$, and $\xi_{\mathcal S''}(\mu''_1)(f_j)\prec'_{f_j}\xi_{\mathcal S''}(\mu''_2)(f_j)$, contradicting $\zeta_{\mathcal S''}(\mu''_1)\succeq'\zeta_{\mathcal S''}(\mu''_2)$ and the fact that $\xi_{\mathcal S'}$ is an order embedding.
    We see that 
    $\mu''_1(f_0)\succeq''_{f_0} \mu''_2(f_0)$.

    Finally, consider the maximal element $\mu_F\in \mathcal S$.
    Then $\psi_{\mathcal S}(\mu_F) = \Pi$. So, for any $\rho\in \Pi$, we have $\mathbf{1}_\rho(\psi_{\mathcal S}(\mu_F)) = 1$.
    It follows that $\alpha(\psi_{\mathcal S}(\mu_F)) = \beta(\psi_{\mathcal S}(\mu_F))= 1$, which means $\mu_F$ satisfies $(\alpha, \beta)$, and thus $\mu_F\in\xi_{\mathcal S''}(\mathcal S'')$ by Lemma~\ref{lem:join_constr_aug_respects_join_constr}.
    Similarly, for the minimal element $\mu_W\in \mathcal S$, we have $\psi_{\mathcal S}(\mu_W) = \emptyset$.
    For any $\rho\in \Pi$, we have $\mathbf{1}_\rho(\psi_{\mathcal S}(\mu_W)) = 0$, and thus $\alpha(\psi_{\mathcal S}(\mu_W)) = \beta(\psi_{\mathcal S}(\mu_W)) = 0$.
    We conclude that $\mu_W$ satisfies $(\alpha,\beta)$ as well, and is in the image of $\xi_{\mathcal S''}$.
\end{proof}

Last, we prove Lemma~\ref{lem:join_constr_aug_preserves_partial_rep}.

\begin{proof}[Proof of Lemma~\ref{lem:join_constr_aug_preserves_partial_rep}.] We verify the properties from Definition~\ref{def:extendable_instance}. 

    \ref{def:extendable_instance:subs_cons}. We know by Lemma~\ref{lem:join_constr_aug_subs_cons} that all agents in $I''$ have substitutable and consistent choice functions.

    \ref{def:extendable_instance:partition}. We know by definition of the join constraint augmentation that the agents in $F''\cup W''$ can be partitioned into firms $F''_{aux}\uplus F$ and workers $W''_{aux}\uplus W''_{reg}$, and the set of regular workers $W''_{reg}$ can be partitioned into sets $\biguplus_{w\in W}copy''(w)$.

    \ref{def:extendable_instance:reg_workers}. The preference list $P''_w$ for $w\in W'$ are unchanged from $P'_w$, which already satisfied \ref{def:extendable_instance:reg_workers}.~by assumption. For agents $w''_j\in W''_{\beta}$, by definition the preference list $P''_{w''_j}$ is also of the form $(f_0, f_\ell, f_{\ell+1}, \dots, f_k)$, as desired.

    \ref{def:extendable_instance:reg_firms}. By definition, for every $f\in F$ the choice function $\mathcal C''_f$ is a standard choice function.

    \ref{def:extendable_instance:projection}. We know by Lemma~\ref{lem:join_constr_aug_order_embedding} that $\xi_{\mathcal S''}$ is an order-embedding, and that the maximal and minimal elements of $\mathcal S$ are in the image of $\xi$.
\end{proof}

\section{Proof of Theorem~\ref{thm:main1}}\label{sec:proof:thm:main1}
 
Let $(X,\succeq)$ be a lattice. By Theorem~\ref{thm:gusfield_irving} applied to $(X_j,\succeq)$, in $O(|X_j|^2) = O(|X|^2)$ time we can construct a GI instance $I=I_{(X_j, \succeq)}$  whose rotation poset $(\Pi(\mathcal S), \succeq)$ is order-isomorphic to the poset of join-irreducible elements $(X_j, \succeq)$ via $\phi: X_j\to \Pi(\mathcal S)$. By Corollary~\ref{cor:GI_repr}, $\phi^{-1}\circ\psi_{\mathcal S}: (\mathcal S, \succeq) \rightarrow (\mathcal D(X_j), \supseteq)$ is an order-isomorphism, where $({\cal S},\succeq)$ is the lattice of stable matchings of $I$. By Lemma~\ref{lem:poly_join_constrs}, we can compute in $O(|X|^3)$ time a set $\Omega$ of $O(|X|^2)$ join constraints on $(X_j, \succeq)$, such that the canonical partial representation $\psi_X: X\to \mathcal D(X_j)$ is an order-isomorphism from $(X, \succeq)$ to 
$$(\{T\in \mathcal D(X_j)\mid T \text{ satisfies } (\alpha,\beta)\ \forall (\alpha, \beta)\in\Omega\}, \supseteq).$$ Setting $\Omega_{\mathcal S} = \{(\alpha\circ \phi^{-1},\beta\circ \phi^{-1}) \mid (\alpha, \beta)\in\Omega\}$ as a set of join constraints over $(\Pi(\mathcal S), \succeq)$, we deduce that $\psi_{\mathcal S}^{-1}\circ\phi\circ\psi_X$ is an order-isomorphism from $(X, \succeq)$ to 
\begin{equation}\label{eq:mu-and-omega}(\{\mu\in \mathcal S\mid \mu \text{ satisfies } \omega,\ \forall \omega\in\Omega_{\mathcal S}\}, \succeq).\end{equation} 
By Theorem~\ref{thm:omega_extension}, we can construct a standard $\Omega_S$-extension $I^*$ of $I$ such that~\eqref{eq:mu-and-omega} is order-isomorphic to $(\mathcal S^*, \succeq)$, where $(\mathcal S^*,\succeq)$ is the lattice of stable matchings of $I^*$. Thus, $(\mathcal S^*, \succeq)$ is order-isomorphic to $(X, \succeq)$, as desired.
    
For the runtime, first observe that, by Theorem~\ref{thm:gusfield_irving}, $I$ has $O(|X|^2)$ agents and can be constructed in time $O(|X|^2)$. By Lemma~\ref{lem:poly_join_constrs} we have $|\Omega| = O(|X|^2)$, and by Theorem~\ref{thm:omega_extension} and Lemma~\ref{lem:poly_join_constrs}, it takes $O(|\Omega||F|(|W|+|(\alpha,\beta)^*|)) = O(|X|^6)$ time to construct $I^*$, which then has $O(|\Omega||W|) = O(|X|^4)$ agents. \hfill $\square$

\section{Antimatroids and complements of join constraints}\label{sec:hardness}

We now introduce tools that allow us to show that optimizing a linear function over the set of stable matchings in a standard matching market is NP-hard via a reduction from the problem of finding a feasible set of minimum cost in an antimatroid.

\subsection{Basic definitions}

\begin{definition}[Antimatroid]
    Let $V$ be a finite ground set and let $\mathcal G$ be a nonempty family of subsets of $V$, called \emph{feasible}. $(V, \mathcal G)$ is an \emph{antimatroid} if the following conditions hold:
    \begin{enumerate}
        \item ({\bf Feasibility of the ground set}) $V\in \mathcal G$.
        \item ({\bf Closure under union}) $G_1, G_2\in \mathcal G\implies G_1\cup G_2\in \mathcal G$. 
        \item ({\bf Accessibility}) $G\in \mathcal G\setminus\{\emptyset\} \implies \exists g\in G : G\setminus\{g\}\in\mathcal G$. \hfill $\diamond$
    \end{enumerate}
 \end{definition}

It is well-known that the feasible sets of an antimatroid, when ordered by inclusion, form a lattice $(\mathcal G, \supseteq)$, whose join operation is the union (see, e.g.,~\citet{korte2012greedoids}). 

\begin{definition}[Paths and endpoints]\label{def:path_endpoint}
    Let $(V,{\cal G})$ be an antimatroid. An \emph{endpoint} of $G \in {\cal G}$ is an element $g\in G$ such that $G\setminus\{g\}\in \mathcal G$. A \emph{path} is a feasible set $G\in \mathcal G$ with exactly one endpoint, denoted by $e(G)$. We let $\mathcal Q$ be the collection of all paths of $(V,{\cal G})$. Given any $x\in V$, we define $\mathcal Q(x) = \{G\in \mathcal Q\mid e(G) = x\}$. The set of paths $\mathcal Q$ can be partially ordered to form the \emph{path poset} $(\mathcal Q, \supseteq)$.
\hfill $\diamond$ \end{definition}

It is showed in~\citet{korte2012greedoids} that the paths of an antimatroid are exactly the feasible sets which are not the union of other feasible sets. In other words, paths are the join-irreducible elements of the lattice $(\mathcal G, \supseteq)$.

\begin{proposition}[Paths are join-irreducible elements \citep{korte2012greedoids}]\label{prop:paths_jie}
    A feasible set is a path if and only if it is not the union of two other feasible sets.
\end{proposition}

It is also showed in \citet{korte2012greedoids} that the lower closed sets of paths contain all endpoints.

\begin{proposition}[Path subsets contain all endpoints \citep{korte2012greedoids}]\label{prop:paths_contain_all_endpoints}
    Let $G\in\mathcal G$ be a feasible set, and let $g\in G$. Then, there is a path $G'\subseteq G$ such that $g$ is the endpoint of $G'$.
\end{proposition}

Similar to Birkhoff's representation theorem for lattices, $(\mathcal Q, \supseteq)$ generates all feasible sets of an antimatroid. 

\begin{proposition}[Feasible sets as unions of paths \citep{korte2012greedoids}]\label{prop:feasible_sets_representation}
    Let $(V, \mathcal G)$ be an antimatroid with path poset $(\mathcal Q, \supseteq)$. Define $\psi_{\mathcal G}: \mathcal G\to 2^{\mathcal Q}$ by $\psi_{\mathcal G}(G) = \{G'\in {\mathcal Q}\mid G'\subseteq G\}$. Then, for any $G\in \mathcal G$, we have $G = \bigcup\psi_{\mathcal G}(G)$. Conversely, for any ${\mathcal Q}'\subseteq {\mathcal Q}$, we have $\cup {\mathcal Q}'\in {\cal G}$.
\end{proposition}

\begin{example}
    Consider the antimatroid and its corresponding path poset shown in Figure~\ref{fig:antimatroid}. The ground set is $V = \{a, b, c, d\}$. We can verify that each path is not the join (i.e. union) of any other feasible sets, and has a unique endpoint. For example, for the path $\{a, c, d\}$, we can check that $\{a, c, d\}\setminus\{d\}$ is feasible (and thus $d$ is an endpoint), but $\{a, c, d\}\setminus\{a\}$ and $\{a, c, d\}\setminus\{c\}$ are not. We can also see that each element in $\{a, c, d\}$ is the endpoint of a path contained in $\{a, c, d\}$: $a$ is the endpoint for the path $\{a\}$, $c$ is the endpoint for $\{a, c\}$, and $d$ is the endpoint for $\{a, c, d\}$. \hfill $\triangle$

    \begin{figure}
        \centering
        \begin{subfigure}[t]{.4\linewidth}
        \centering\includegraphics[scale=0.5]{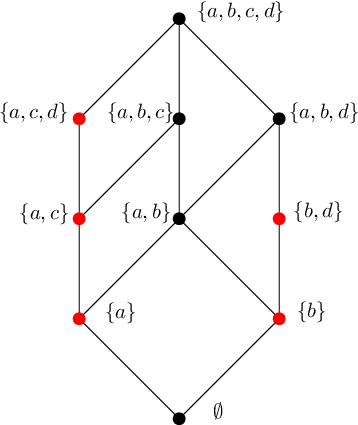}
        \caption{Antimatroid $(V, \mathcal G)$, partially ordered by inclusion.}
         \end{subfigure}
        \begin{subfigure}[t]{.4\linewidth}
        \centering\includegraphics[scale=0.5]{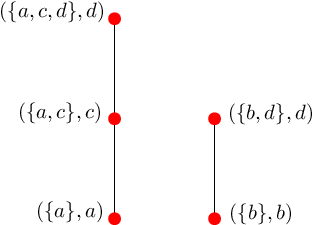}
        \caption{Path poset $({\mathcal Q}, \supseteq)$, with each path denoted together with its endpoint.}
        \end{subfigure}
        \caption{An antimatroid, and its path poset. The nodes corresponding to paths are highlighted in red.}
        \label{fig:antimatroid}
    \end{figure}
\end{example}

Because of Proposition~\ref{prop:feasible_sets_representation}, an antimatroid can be encoded by its path poset. This encoding can be used to formulate the following optimization problem. 

\begin{definition}[Minimum cost feasible set in an antimatroid]
    Let $(V, \mathcal G)$ be an antimatroid encoded by its path poset $({\mathcal Q}, \supseteq)$. Let $c: V\to \mathbb Z$. The goal of the \emph{minimum cost feasible set in $({\mathcal Q},\supseteq)$ problem} is to find an element in
    $\arg\min_{G\in \mathcal G} \sum_{g\in G}c(g)$.
\hfill $\diamond$ \end{definition}

\begin{theorem}[\citet{merckx2019optimization}]\label{thm:antimatroid_hardness}
    The minimum cost feasible set in an antimatroid problem is NP-hard.
\end{theorem}

For completeness, we present the proof of Theorem~\ref{thm:antimatroid_hardness} due to \citet{merckx2019optimization} in Appendix~\ref{sec:app:hardness}.

\subsection{Complement lattices and complement join constraints}

We present next some additional properties on posets and join constraints. In the following, fix a lattice $\mathcal L = (X, \succeq)$ and its \emph{complement lattice} $\mathcal L^c = (X, \preceq)$. Also fix a  partial representation $(B,\succeq_B)$ of ${\cal L}$ via the partial representation function $\psi$. Define an \emph{upper closed set} of $B$ to be $S\subseteq B$ such that $x\in S$ and $x'\succeq x$ implies $x'\in S$. The set of upper closed sets of $B$ is $\mathcal D^c(B)$. Note that a set is upper closed if and only if its complement is lower closed. Also define $\psi^c: X\to \mathcal D^c(B)$ as $\psi^c(x) = B\setminus \psi(x)$. We can show that a partial representation of a lattice is the complement of a partial representation for the complement lattice.

\begin{lemma}[Complement of partial representation is partial representation of complement lattice]\label{lem:complement_partial_rep}
     $(B, \preceq_B)$ is a partial representation for $\mathcal L^c$, with partial representation function $\psi^c: X\to \mathcal D^c(B)$.
\end{lemma}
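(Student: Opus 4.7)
The plan is to verify directly that $\psi^c$ satisfies the three requirements of Definition~\ref{def:partial_rep} for the lattice $\mathcal L^c=(X,\preceq)$ relative to the poset $(B,\preceq_B)$. The first step is to set up the right codomain: the lower closed sets of $(B,\preceq_B)$ are precisely the upper closed sets of $(B,\succeq_B)$, so $\mathcal D(B,\preceq_B)=\mathcal D^c(B,\succeq_B)$, and the codomain of $\psi^c$ as written is the correct one. I would also note that since $\psi(x)\in\mathcal D(B)$ is lower closed under $\succeq_B$, its complement $\psi^c(x)=B\setminus\psi(x)$ is upper closed under $\succeq_B$, i.e., lies in $\mathcal D^c(B)$; so the map is well-defined.

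Second, I would verify the order-embedding condition. Using that $\psi$ is an order-embedding from $(X,\succeq)$ to $(\mathcal D(B),\supseteq)$, we have for any $x,y\in X$
\[
x\succeq y \iff \psi(x)\supseteq\psi(y).
\]
Taking complements in $B$ reverses the inclusion, so $\psi(x)\supseteq\psi(y)\iff\psi^c(x)\subseteq\psi^c(y)$. Relabeling $x\leftrightarrow y$ yields
\[
x\preceq y \iff \psi^c(x)\supseteq\psi^c(y),
\]
which is exactly the order-embedding property of $\psi^c$ viewed as a function from $(X,\preceq)$ into $(\mathcal D^c(B),\supseteq)$.

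Third, I would check the extremal conditions. The maximal element of $\mathcal L^c=(X,\preceq)$ is the minimal element $\underline x$ of $\mathcal L$, and since $\psi(\underline x)=\emptyset$ by hypothesis, $\psi^c(\underline x)=B\setminus\emptyset=B$. Symmetrically, the minimal element of $\mathcal L^c$ is the maximal element $\overline x$ of $\mathcal L$, and $\psi^c(\overline x)=B\setminus B=\emptyset$. These are the two boundary conditions demanded by Definition~\ref{def:partial_rep}, so all three conditions hold and the lemma follows.

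There is no real obstacle here — the proof is purely a bookkeeping exercise that hinges on the observation that complementation within $B$ is an order-reversing bijection between $\mathcal D(B,\succeq_B)$ and $\mathcal D^c(B,\succeq_B)=\mathcal D(B,\preceq_B)$. The only point that deserves a careful sentence is making explicit the identification of ``lower closed under $\preceq_B$'' with ``upper closed under $\succeq_B$,'' so that the codomain of $\psi^c$ matches the codomain required by Definition~\ref{def:partial_rep} applied to $\mathcal L^c$.
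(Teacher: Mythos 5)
Your proposal is correct and follows essentially the same route as the paper's proof: well-definedness via complementation exchanging lower and upper closed sets, the order-embedding via inclusion reversal under complementation in $B$, and the extremal conditions via swapping the roles of the maximal and minimal elements. The only difference is cosmetic (your explicit remark identifying $\mathcal D(B,\preceq_B)$ with $\mathcal D^c(B,\succeq_B)$, which the paper leaves implicit), so no further changes are needed.
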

\begin{proof}
    Recall that the complement of a lower closed set is an upper closed set. Thus, $\psi^c(x) = B\setminus \psi(x)$ is an upper closed set of $(B, \succeq_B)$, which is also a lower closed set of $(B, \preceq_B)$, since the direction of the partial order is reversed. 

    Next, we will show that $\psi^c$ is an order-embedding. Fix any $x, x'\in X$. Since $\psi$ is a partial representation function and thus an order-embedding, we know that $x\succeq x'$ if and only if $\psi(x)\supseteq \psi(x')$ if and only if $\psi^c(x)\subseteq \psi^c(x')$. Thus, $\psi^c$ is an order-embedding from the complement lattice $(X,\preceq)$ into $(\mathcal{D}^c(B), \supseteq)$.

    Finally, we know that the maximal element $\overline x$ of the complement lattice $(X, \preceq)$ is the minimal element of the original lattice $(X, \succeq)$. So, we have $\psi^c(\overline x) = B\setminus \psi(\overline x) = B\setminus\emptyset = B$. Similarly, the minimal element $\underline x$ of the complement lattice is the maximal element of the original lattice, so we have $\psi^c(\underline x) = B\setminus \psi(\underline x) = B\setminus B = \emptyset$. We conclude that $(B, \preceq_B)$ is a partial representation for $(X, \preceq)$ with partial representation function $\psi^c$.
\end{proof}

\begin{definition}[Complement of a join constraint]
    Given a join constraint $(\alpha,\beta)$ over $\mathcal L$ defined as
        $$\alpha(T) = (\mathbf{1}_{b_{11}}(T)\vee \mathbf{1}_{b_{12}}(T)\vee\dots)\wedge (\mathbf{1}_{b_{21}}(T)\vee \mathbf{1}_{b_{22}}(T)\vee\dots)\wedge\dots, \quad \beta(T) = \mathbf{1}_{b_{1}}(T)\wedge\mathbf{1}_{b_{2}}(T)\wedge\dots,$$
the \emph{complement} $(\beta^c,\alpha^c)$ of $(\alpha,\beta)$ is defined as $$\beta^c(T) = \mathbf{1}_{b_1}(T)\vee \mathbf{1}_{b_2}(T)\vee \cdots, \quad \alpha^c(T) = (\mathbf{1}_{b_{11}}(T)\wedge \mathbf{1}_{b_{12}}(T)\wedge\dots)\vee (\mathbf{1}_{b_{21}}(T)\wedge \mathbf{1}_{b_{22}}(T)\wedge\dots)\vee\dots.$$ 
\hfill $\diamond$ \end{definition}

Now, we show that the complement lattice satisfies the complement join constraint.

\begin{lemma}[Complement lattice satisfies complement join constraint]\label{lem:complement_lattice_contra_jc}
    Assume that the elements of $X$ satisfy a join constraint $(\alpha, \beta)$ on $B$ defined as in Definition~\ref{def:join_constraint}. Then $\psi^c$ is a partial representation function for $\mathcal L^c$. Furthermore, elements $x\in X$ satisfy $\beta^c(\psi^c(x)) = 1$ only if $\alpha^c(\psi^c(x)) = 1$.
\end{lemma}

\begin{proof}
    $\psi^c$ is a partial representation function for $\mathcal L^c$ by Lemma~\ref{lem:complement_partial_rep}.
    To show the second thesis, assume $\beta^c(\psi^c(x)) = 1$. Let $B_\beta$ be the elements of $B$ in the argument of $\beta$.
    Letting $\neg \chi = 1-\chi$, we write
    $$\begin{array}{rrl}       1 &= & \beta^c(B\setminus \psi(x)) = \bigvee_{b\in B_\beta} \mathbf{1}_{b}(B\setminus \psi(x))  = \bigvee_{b\in B_\beta} \neg\mathbf{1}_{b}(\psi(x)) = \neg\bigwedge_{b\in B_\beta}\mathbf{1}_{b}(\psi(x)).
    \end{array}$$
    Hence, we have
    $\beta(\psi(x))=  \bigwedge_{b\in B_\beta}\mathbf{1}_{b}(\psi(x)) 
        = 0.$
    Then, since $x\in X$ satisfies the join constraint $(\alpha, \beta)$, we know that $\alpha(\psi(x)) = 0$ as well.
    Thus,
    $$\begin{array}{ll}        1 &= \neg\alpha(\psi(x)) \\
        &= \neg ((\mathbf{1}_{b_{11}}(\psi(x))\vee \mathbf{1}_{b_{12}}(\psi(x))\vee\dots)\wedge (\mathbf{1}_{b_{21}}(\psi(x))\vee \mathbf{1}_{b_{22}}(\psi(x))\vee\dots)\wedge\dots) \\
        &= (\neg\mathbf{1}_{b_{11}}(\psi(x))\wedge \neg\mathbf{1}_{b_{12}}(\psi(x))\wedge\dots)\vee (\neg\mathbf{1}_{b_{21}}(\psi(x))\wedge \neg\mathbf{1}_{b_{22}}(\psi(x))\wedge\dots)\vee\dots \\
        &= (\mathbf{1}_{b_{11}}(\psi^c(x))\wedge \mathbf{1}_{b_{12}}(\psi^c(x))\wedge\dots)\vee (\mathbf{1}_{b_{21}}(\psi^c(x))\wedge \mathbf{1}_{b_{22}}(\psi^c(x))\wedge\dots)\vee\dots \\
        &= \alpha^c(\psi^c(x)),
    \end{array}$$
    as required.
\end{proof}

Let $T\subseteq B$. If $\beta^c(T) = 1$ only if $\alpha^c(T) = 1$, we say that $T$ \emph{satisfies the complement join constraint} $(\beta^c, \alpha^c)$. If $x\in\mathcal L$ satisfies $\beta^c(\psi(x)) = 1$ only if $\alpha^c(\psi(x)) = 1$, we say that $x$ satisfies $(\beta^c, \alpha^c)$ via $\psi$.

We next show that, analogously to Lemma~\ref{lem:poly_join_constrs}, we can use a polynomial number of complement join constraints to describe the path poset of an antimatroid, starting from the trivial poset over $V$, defined as $(V, \sim)$ where $x, y\in V$ satisfy $x\sim y$ if and only if $x=y$. Note that $(\mathcal D(V), \supseteq) = (2^V, \supseteq) = (\mathcal D^c(V), \supseteq)$, and $(\mathcal D(V), \supseteq)$ has trivial canonical representation function $\id: \mathcal D(V)\to \mathcal D(V)$.

\begin{lemma}[Polynomially many complement join constraints describe an antimatroid]\label{lem:poly_rjc_antimatroid}
    Let $(V, \mathcal G)$ be an antimatroid with path poset $({\mathcal Q}, \supseteq)$. 
    Then there is a set $\Omega$ of $O(|V|)$ complement join constraints on $(V, \sim)$ each with $O(|{\mathcal Q}|^2)$ arguments such that $$\mathcal G = \{T\in \mathcal D^c(V) \mid T\text{ satisfies } (\beta^c, \alpha^c)\ \forall (\beta^c, \alpha^c)\in \Omega\}.$$
\end{lemma}
\begin{proof}
    For each $x\in V$, define $\alpha_x, \beta_x: 2^V \to \{0, 1\}$ by
    $$\alpha_x(T) = \bigwedge_{G\in {\mathcal Q}(x)}\left(\bigvee_{G'\in \psi_{\mathcal G}(G)}\mathbf{1}_{e(G')}(T)\right), \quad 
    \beta_x(T) = \mathbf{1}_{x}(T),$$
    where $\psi_{\mathcal G}$ is defined as in Proposition~\ref{prop:feasible_sets_representation}. Note that for each $e(G'_1), e(G'_2)$ in the argument of $\alpha_x$, we have $e(G'_1)\not\prec e(G'_2)$, since $(V, \sim)$ is endowed with the trivial partial order. Thus, $(\alpha_x, \beta_x)$ is a join constraint. Moreover, $\alpha_x,\beta_x$ have $O(|{\mathcal Q}|^2)$ arguments. Then, define 
    $\Omega = \{(\beta_x^c, \alpha_x^c)\mid x\in V\}.$
    First, we show that $\mathcal G\subseteq  \{T\in \mathcal D^c(V) \mid T\text{ satisfies } (\beta^c, \alpha^c)\ \forall (\beta^c, \alpha^c)\in \Omega\}.$ It suffices to show that for any $G^*\in \mathcal G$ and $x\in V$, $G^*$ satisfies $(\beta_x^c, \alpha_x^c)$. So, suppose $\beta_x^c(G^*) = 1$. We can write this relation as $\beta_x^c(G^*) = \mathbf{1}_{x}(G^*) =1$, which means that $x\in G^*$. By Proposition~\ref{prop:paths_contain_all_endpoints}, there exists a path $\tilde G\subseteq G^*$ such that $e(\tilde G) = x$. We also know that for all $G'\in \psi_{\mathcal G}(\tilde G)$, we have $e(G')\in G'\subseteq \tilde G\subseteq G^*$ as well. Writing $\alpha^c_x(G^*)$ as
    $$\alpha^c_x(G^*) = \bigvee_{G\in {\mathcal Q}(x)}\left(\bigwedge_{G'\in \psi_{\mathcal G}(G)}\mathbf{1}_{e(G')}(G^*)\right),$$
    we see that the subclause of $\alpha^c_x(G^*)$ given by
    $\bigwedge_{G'\in \psi_{\mathcal G}(\tilde G)}\mathbf{1}_{e(G')}(G^*)$
    is equal to 1. Thus, $\alpha^c_x(G^*) = 1$, showing that $G^*$ satisfies all $(\beta^c, \alpha^c)\in \Omega$.

    To complete the proof, we show that $$\mathcal G\supseteq  \{T\in \mathcal D^c(V) \mid T\text{ satisfies } (\beta^c, \alpha^c)\ \forall (\beta^c, \alpha^c)\in \Omega\}.$$ Fix any $T\in \mathcal D^c(V)$ that satisfies all complement join constraints in $\Omega$. By Proposition~\ref{prop:feasible_sets_representation}, it suffices to show that $T$ is the union of paths in ${\mathcal Q}$. So, fix any $x\in T$. We claim that there exists $G\in {\mathcal Q}$ such that $x\in G\subseteq T$. Consider $(\beta_x^c, \alpha_x^c)\in\Omega$. We have $\beta_x^c(T) = \mathbf{1}_{x}(T) = 1$. Since $T$ must satisfy $(\beta^c, \alpha^c)$, we know that $\alpha_x^c(T) = 1$ as well. We can write this as
    $$\alpha_x^c(T) = \bigvee_{G\in {\mathcal Q}(x)}\left(\bigwedge_{G'\in \psi_{\mathcal G}(G)}\mathbf{1}_{e(G')}(T)\right) = 1.$$
    So, there is some $G\in {\mathcal Q}(x)$ such that 
    $\bigwedge_{G'\in \psi_{\mathcal G}(G)}\mathbf{1}_{e(G')}(T) = 1.$
    That is, $T$ contains the endpoints $e(G')$ of all paths $G'\subseteq G$. By Proposition~\ref{prop:paths_contain_all_endpoints}, every element in $G$ is the endpoint of some path $G'\subseteq G$. Thus, $G\subseteq T$. Iterating the argument we see that, for each $x \in T$, there exists a path $G$ with $x\in G\subseteq T$. 
    We conclude that $T$ is the union of the paths contained in $T$, and so $T\in \mathcal G$.
\end{proof}

\subsection{Rotations} As a last step before delving into the proof of Theorem~\ref{thm:main2}, we present some results on rotations associated to certain GI instances and one-to-one matching markets.

\begin{proposition}[\citet{gusfield1989stable}]\label{prop:rotations_disjoint}
    Let $(F, W, (\mathcal C_f)_{f\in F}, (\mathcal C_w)_{w\in W})$ be a one-to-one matching market instance with stable matching lattice $(\mathcal S, \succeq)$ and rotation poset $(\Pi, \succeq)$. Then, for all distinct rotations $\rho, \hat \rho \in \Pi$, $\rho^+\cap \hat \rho^+  = \emptyset$ and $\rho^- \cap \hat \rho^- = \emptyset$. Furthermore, if $(F, W, (\mathcal C_f)_{f\in F}, (\mathcal C_w)_{w\in W}) = I_{(X, \sim)}$ for a ground set $X$ with trivial partial order $\sim$, then we also have that for all $\rho, \hat \rho \in \Pi$, $\rho^+\cap \hat \rho^-  = \emptyset$.
\end{proposition}

Last, we show how to translate costs between rotations and matchings: see Appendix~\ref{sec:app:hardness} for a proof. 

\begin{lemma}[Cost function]\label{lem:cost_function}
    Let $I=I_{(V, \sim)}$ be the GI instance for $(V, \sim)$ with stable matching lattice $(\mathcal S, \succeq)$ and representation function $\psi_{\mathcal S}: \mathcal S\to \mathcal D(\Pi(\mathcal S))$.
    Let $\Omega$ be a set of join constraints over $(\mathcal S, \succeq)$, and let $I^* = (F^*, W^*, (\mathcal C^*_f)_{f\in F^*}, (\mathcal C^*_w)_{w\in W^*})$ 
    be an $\Omega$-extension of $I$ with stable matching lattice $(\mathcal S^*, \succeq)$, and partial representation function $\psi_{\mathcal S^*} = \psi_{\mathcal S}\circ \xi_{\mathcal S^*}$.
    Let $c: \Pi(\mathcal S)\to \mathbb{Z}$. Let $M = \operatorname{lcm}_{\rho\in \Pi(\mathcal S)} |\rho^-|$. Define $c': F^* \times W^* \rightarrow \mathbb{Z}$ as follows: for each $\rho\in \Pi$ and each $(f, w)\in \rho^-$, let $c'(f, w) = M \cdot c(\rho)/|\rho^-|$;
    for all other $(f, w)$, let $c'(f,w) = 0$. Then, $c'(\mu) = M \cdot c(\psi_{\mathcal S^*}^c(\mu))$ for all $\mu \in {\cal S}^*$.
\end{lemma}

In Lemma~\ref{lem:cost_function}, $c'$ is well-defined since for every pair $(f,w)$ of agents from opposite sides of the market of a one-to-one instance, $(f,w) \in \rho^-$ for at most one rotation $\rho$ (see Proposition~\ref{prop:rotations_disjoint}).

\section{Proof of Theorem~\ref{thm:main2}}\label{sec:proof:thm:main2}

We reduce from minimum cost feasible set in an antimatroid. Let $(V,{\cal G})$ be an antimatroid, $({\mathcal Q},\supseteq)$ its path poset, and $c: V \rightarrow \mathbb{Z}$ a cost function. %Let $(V,\sim)$ be the poset over $G$ such that $u \sim v$ if and only if $u = v$. 
Let $I=I_{(V, \sim)}$ be the GI instance for $(V, \sim)$ defined as in Theorem~\ref{thm:gusfield_irving} with stable matching lattice $(\mathcal S, \succeq)$, bijection $\phi: V\to \Pi(\mathcal S)$, and representation function $\psi_{\mathcal S}:\mathcal S\to \mathcal D(\Pi(\mathcal S))$. By Lemma~\ref{lem:poly_rjc_antimatroid}, there is a set $\Omega$ of $O(|V|)$ complement join constraints on $(V, \sim)$ each with $O(|{\mathcal Q}|^2)$ arguments such that $$\mathcal G = \{T\in \mathcal D^c(V) \mid T\text{ satisfies } (\beta^c, \alpha^c)\ \forall (\beta^c, \alpha^c)\in \Omega\}.$$

Let $\Omega' = \{(\alpha, \beta)\mid (\beta^c, \alpha^c)\in \Omega\}$ be the join constraints corresponding to the complement join constraints in $\Omega$. By Theorem~\ref{thm:omega_extension}, we can produce a standard $\Omega'$-extension $I^*$ of $I$, with matching lattice $(\mathcal S^*, \succeq)$ order-isomorphic to $\xi_{\mathcal S^*}(\mathcal S^*) = \{\mu\in\mathcal S\mid \mu\text{ satisfies }(\alpha,\beta),\ \forall (\alpha, \beta)\in\Omega'\},$ and partial representation function $\phi^{-1}\circ\psi_{\mathcal S^*}: \mathcal S^*\to \mathcal D(V)$, where $\psi_{\mathcal S^*} = \psi_{\mathcal S}\circ \xi_{\mathcal S^*}$. By Lemma~\ref{lem:complement_lattice_contra_jc}, we know that the complement lattice $(\mathcal S^*, \preceq)$ has partial representation $(\phi^{-1}\circ \psi_{\mathcal S^*})^c = \phi^{-1}\circ\psi_{\mathcal S^*}^c = V\setminus (\phi^{-1}\circ\psi_{\mathcal S^*}),$ and $\xi_{\mathcal S^*}(\mathcal S^*) = \{\mu\in \mathcal S\mid \mu \text{ satisfies }(\beta^c, \alpha^c) \text{ via }\phi^{-1}\circ \psi_{\mathcal S}^c,\ \forall (\beta^c, \alpha^c)\in\Omega\}.$ Thus, 
$$    \begin{array}{lll}
        \phi^{-1}(\psi_{\mathcal S^*}^c(\mathcal S^*)) & = &\{T \subseteq V \mid T = \phi^{-1}(\psi_{\mathcal S^*}^c(\mu)),\ \mu\in \mathcal S^*\} \\ & = &\{T\in \mathcal D^c(V) \mid T = \phi^{-1}(\psi_{\mathcal S^*}^c(\mu)),\ \mu\in \mathcal S^*\} \\
        &= &  \{T\in \mathcal D^c(V) \mid T = \phi^{-1}(\psi^c_{\mathcal S}(\mu)),\ \mu\in \mathcal S,\ \mu \text{ satisfies }\\ & & \quad(\beta^c, \alpha^c) \text{ via }\phi^{-1}\circ \psi_{\mathcal S}^c,\ \forall (\beta^c, \alpha^c)\in\Omega\}  \\
        &= &\{T\in \mathcal D^c(V) \mid T \text{ satisfies }(\beta^c, \alpha^c) \text{ via }\id,\ \forall (\beta^c, \alpha^c)\in\Omega\} \\ 
        &= & \mathcal G.
    \end{array}
 $$
    Extend $c$ to $\Pi(\mathcal S)$ by setting $c(\rho) = c(\phi^{-1}(\rho))$ for each $\rho\in \Pi(\mathcal S)$, and define $c'$ as in Lemma~\ref{lem:cost_function}. Specifically, let $M = \operatorname{lcm}_{\rho\in \Pi(\mathcal S)} |\rho^-|$. For each $\rho\in \Pi$ and each $(f, w)\in \rho^-$, we define the integer cost $c'(f, w) = M \cdot c(\rho)/|\rho^-|$. For all other $(f, w)$, set $c'(f,w) = 0$. Then we know that for each $\mu\in\mathcal S^*$ we have $c'(\mu) = M \cdot c(\psi_{\mathcal S^*}^c(\mu)) = M \cdot c(\phi^{-1}(\psi_{\mathcal S^*}^c(\mu)))$. Then, since $\phi^{-1}(\psi_{\mathcal S^*}^c(\mathcal S^*)) = \mathcal G$, the optimal value of $c'$ over $\mathcal S^*$ is exactly $M$ times the optimal value of $c$ over $\mathcal G$. Since $I^*$ can be obtained using $O(|V|)$ join constraint augmentations, each with $O(|{\mathcal Q}|^2)$ arguments, by Theorem~\ref{thm:omega_extension} we have a polynomial reduction from minimum cost feasible set in an antimatroid to minimum cost stable matching.
\hfill $\square$

\section*{Acknowledgements}

Supported by the grants NSF 2046146 and AFOSR FA9550-23-1-0697, and by a Cheung-Kong Innovation Doctoral Fellowship. The authors wish to thank Samuel Fiorini and a reviewer for suggesting papers~\cite{merckx2019optimization}and~\cite{fleiner2000stable}, respectively.

\newpage
\nocite{*}
\bibliographystyle{apalike}
\bibliography{main}

\newpage
\appendix

\section{Examples and discussions for Section \ref{sec:join_constraints}}\label{app:join_constraints}

We now start with the example. We then provide an intuitive description of the preference lists.

\begin{example}\label{ex:join_constr_aug}
    Consider the target non-distributive lattice $(X, \succeq)$ from Example~\ref{ex:distributive_closure}, as seen in Figure~\ref{fig:distributive_closure}(a), with poset of join-irreducible elements $(X_j, \succeq)$ in Figure~\ref{fig:distributive_closure}(b) and distributive closure $(\mathcal D(X_j), \supseteq)$ in Figure~\ref{fig:distributive_closure}(c). The corresponding GI instance $I=I_{(X_j, \succeq)}$ is given by firms $F = \{f_1,\cdots, f_7\}$, workers $W = \{w_1, \cdots, w_7\}$, and preference lists:
    \begin{align*}
        P_{f_1} &= (w_5, w_1) && P_{w_1} = (f_1, f_5) \\
        P_{f_2} &= (w_7, w_4, w_2) && P_{w_2} = (f_2, f_3) \\
        P_{f_3} &= (w_2, w_3) && P_{w_3} = (f_3, f_4, f_6) \\
        P_{f_4} &= (w_6, w_3, w_4) && P_{w_4} = (f_4, f_2, f_7) \\
        P_{f_5} &= (w_1, w_5) && P_{w_5} = (f_5, f_1) \\
        P_{f_6} &= (w_3, w_6) && P_{w_6} = (f_6, f_4) \\
        P_{f_7} &= (w_4, w_7) && P_{w_7} = (f_7, f_2)
    \end{align*}
    The stable matchings are then given by

        \medskip 

 \begin{center}   \boxed{
    \begin{aligned}
        {\color{blue}\mu_1 = \{(f_1, w_1), (f_2, w_2), (f_3, w_3), (f_4, w_4), (f_5, w_5), (f_6, w_6), (f_7, w_7)\}} \\
        {\color{blue}\mu_{10} = \{(f_1, w_5), (f_2, w_7), (f_3, w_2), (f_4, w_6), (f_5, w_1), (f_6, w_3), (f_7, w_4)\}}
    \end{aligned}
    }
    
    \boxed{
    \begin{aligned}
        {\color{red}\mu_2 = \{(f_1, w_5), (f_2, w_2), (f_3, w_3), (f_4, w_4), (f_5, w_1), (f_6, w_6), (f_7, w_7)\}} \\
        {\color{red}\mu_3 = \{(f_1, w_1), (f_2, w_4), (f_3, w_2), (f_4, w_3), (f_5, w_5), (f_6, w_6), (f_7, w_7)\}} \\
        {\color{red}\mu_5 = \{(f_1, w_1), (f_2, w_4), (f_3, w_2), (f_4, w_6), (f_5, w_5), (f_6, w_3), (f_7, w_7)\}} \\
        {\color{red}\mu_6 = \{(f_1, w_1), (f_2, w_7), (f_3, w_2), (f_4, w_3), (f_5, w_5), (f_6, w_6), (f_7, w_4)\}} \\
    \end{aligned}
    }
    
    \boxed{
    \begin{aligned}
        \mu_4 = \{(f_1, w_5), (f_2, w_4), (f_3, w_2), (f_4, w_3), (f_5, w_1), (f_6, w_6), (f_7, w_7)\} \\
        \mu_7 = \{(f_1, w_5), (f_2, w_4), (f_3, w_2), (f_4, w_6), (f_5, w_1), (f_6, w_3), (f_7, w_7)\} \\
        \mu_8 = \{(f_1, w_5), (f_2, w_7), (f_3, w_2), (f_4, w_3), (f_5, w_1), (f_6, w_6), (f_7, w_4)\} \\
        \mu_9 = \{(f_1, w_1), (f_2, w_7), (f_3, w_2), (f_4, w_6), (f_5, w_5), (f_6, w_3), (f_7, w_4)\} \\
    \end{aligned}
    }

\end{center}

    \medskip 
    
    The matchings in the first box (blue matchings) correspond to the minimal and maximal elements in the original lattice, as in Figure~\ref{fig:distributive_closure}. The matchings in the second box (red matchings) correspond to the join-irreducible elements of the original lattice. The remaining matchings (black matchings) are in the third box. The lattice of stable matchings $(S,\succeq)$ of $I$ is shown in Figure~\ref{fig:gusfield_irving_lattice_rotations}(a). We can compare it to the distributive closure $(\mathcal D(X_j), \supseteq)$ from Figure~\ref{fig:distributive_closure}(c) in Example~\ref{ex:distributive_closure}.
    \begin{figure}[H]
        \centering
        \begin{subfigure}[t]{.45\linewidth}
        \centering\includegraphics[scale=0.7]{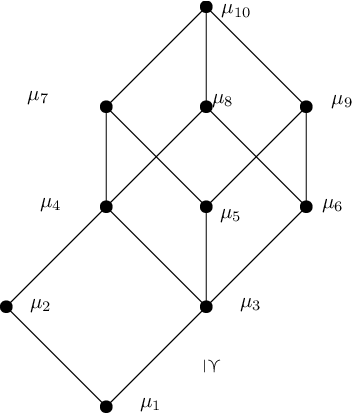}
        \caption[size=small]{Stable matching lattice $(\mathcal S,\succeq)$}
        \end{subfigure}
        \begin{subfigure}[t]{.45\linewidth}
        \centering\includegraphics[scale=0.7]{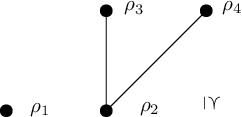}
        \caption[size=small]{Poset of join-irreducible elements $(X_j,\succeq)$, which is order-isomorphic to rotation poset $\Pi(\mathcal S)$}
        \end{subfigure}
        \caption[size=small]{The stable matching lattice, and poset of rotations.}
        \label{fig:gusfield_irving_lattice_rotations}
    \end{figure}
    By taking the minimal differences between stable matchings from $({\mathcal S},\succeq)$, we can find that the rotations are given by
    \begin{align*}
        \rho_1^+ &= \{(f_2, w_4), (f_3, w_2), (f_4, w_3)\} && \rho_1^- = \{(f_2, w_2), (f_3, w_3), (f_4, w_4)\} \\
        \rho_2^+ &= \{(f_1, w_5), (f_5, w_1)\} && \rho_2^- = \{(f_1, w_1), (f_5, w_5)\} \\
        \rho_3^+ &= \{(f_4, w_6), (f_6, w_3)\} && \rho_3^- = \{(f_4, w_3), (f_6, w_6)\} \\
        \rho_4^+ &= \{(f_2, w_7), (f_7, w_4)\} && \rho_4^- = \{(f_2, w_4), (f_7, w_7)\}
    \end{align*}
    The partial order over rotations is then shown in Figure~\ref{fig:gusfield_irving_lattice_rotations}(b). Again, it is order-isomorphic to the original partial order $(X_j,\succeq)$ as stated in Theorem~\ref{thm:gusfield_irving}. Next, our goal in this example is to augment the matching market instance so that the operation $\mu_2\vee \mu_3 = \mu_{10}$ holds in the augmented stable matching lattice. So, we can set 
    $$\alpha(T) = \mathbf{1}_{\rho_1}(T)\wedge\mathbf{1}_{\rho_2}(T),  
 \quad \beta(T) =  \mathbf{1}_{\rho_3}(T)\wedge\mathbf{1}_{\rho_4}(T) \notag.$$
    That is, the desired constraint is ``if $\rho_1$ and $\rho_2$ occur, then $\rho_3$ and $\rho_4$ occur.'' This ensures that if a stable matching $\mu$ satisfies $\mu\succeq\mu_2,\mu_3$, then $\psi_{\mathcal S}(\mu_2) = \{\rho_1\}$ and $\psi_{\mathcal S}(\mu_3) = \{\rho_2\}$ occur, and so $\{\rho_3, \rho_4\}$ occur in $\mu$ by the constraint. As a result, 
    $$\psi_{\mathcal S}({{\mu}}) \supseteq\{\rho_1, \rho_2, \rho_3, \rho_4\} \quad
    \implies \quad \mu \succeq \psi_{\mathcal S}^{-1}(\{\rho_1, \rho_2, \rho_3, \rho_4\}) = \mu_{10}.
    $$
    Applying Definition~\ref{def:join_constraint_augmentation}, we have $$\hbox{$\Pi_a=\{\rho_1,\rho_2\}$, $\Pi_b=\{\rho_3,\rho_4\}$, $F_{\alpha} = \{f_1, f_2, f_3, f_4, f_5\}$ and $W_{\beta} = \{w_3, w_4, w_6, w_7\}$.}$$ We add an auxiliary worker $w_0$ to $W_{aux}$ and an auxiliary firm $f_0$ to $F_{aux}$. Then, we add regular workers $W''_{\beta} = \{w_3'', w_4'', w_6'', w_7''\}$ and add them to the corresponding sets $copy(w_j)$. Given a set $T$ of firms, the choice function $\mathcal C''_{w_0}(T)$ for $w_0$ is the $(F_\alpha, f_0, \alpha(\pi(F_\alpha\setminus \cdot)))$-triggered choice function given by the process
    \begin{enumerate}[i.]
        \item Select $T\cap \{f_1, f_2, f_3, f_4, f_5\}$.
        \item If $\alpha(\pi(F_{\alpha}\setminus T)) = 1$, then additionally select $f_0$.
    \end{enumerate}
    Note that $\alpha(\pi(F_\alpha\setminus T))=1$ if and only if $T\cap F_\alpha=\emptyset$. Thus, ${\mathcal C}''_{w_0}(T)$ is equivalent to the preference list $P''_{w_0}$ given by
    $$P''_{w_0} = (\{f_1, f_2, f_3, f_4, f_5\}, \dots, \{f_0\})$$
    where in $(\dots)$ are all subsets of the set $F_{\alpha}=\{f_1, f_2, f_3, f_4, f_5\}$ (other than $F_{\alpha}$ itself) in order of decreasing length with ties broken arbitrarily. Given a set $T$ of workers, the new choice function $\mathcal C''_{f_0}(T)$ for $f_0$ is the if-else-$(w_0, W''_\beta)$ choice function given by the process
    \begin{enumerate}[i.]
        \item If $w_0\in T$, select $\{w_0\}$.
        \item Else, select $T\cap \{w_3'', w_4'', w_6'', w_7''\}$.
    \end{enumerate}
    This is equivalent to the preference list $P''_{f_0}$ given by
    $$P''_{f_0} = (\{w_0\}, \{w_3'', w_4'', w_6'', w_7''\}, \dots)$$
    where in $(\dots)$  are all subsets of the set $W_{\beta}''=\{w_3'', w_4'', w_6'', w_7''\}$ (other than $W''_{\beta}$ itself), in order of decreasing length with ties broken arbitrarily. To see the equivalence, notice that under $P''_{f_0}$, if $w_0\in T$, then $\{w_0\}$ is selected, as it is the most preferred set. Otherwise, the largest subset of $\{w''_3, w''_4, w''_6, w''_7\}$ is selected. 
    
    The preference lists for the workers in $W''_{\beta}$ are given by
    $$P''_{{w_3''}} = (\{f_0\}, \{f_6\}), \quad P''_{{w_4''}} = (\{f_0\}, \{f_7\}),\quad P''_{{w_6''}} = (\{f_0\}, \{f_4\}), \quad P''_{{w_7''}} = (\{f_0\}, \{f_2\}).$$
    For the firms $f\in F$, we first compute $$A''_{f_1} = \{(w_1, w_0)\},\qquad A''_{f_2} = \{(w_2, w_0)\},\qquad A''_{f_3} = \{(w_3, w_0)\},\qquad A''_{f_4} = \{(w_4, w_0)\}, \nonumber$$
    $$A''_{f_5} = \{(w_5, w_0)\},\qquad A''_{f_6} = \emptyset \qquad A''_{f_7} = \emptyset.$$
The choice functions can then be obtained following Definition~\ref{def:join_constraint_augmentation}. As an example, we consider $f=f_1,f_7$. Given a set $T$ of workers, the new choice function ${\mathcal C}''_{f_1}(T)$ for $f_1$ is the regular choice function given by the following process:
    \begin{enumerate}[i.]
        \item If $w_5\in T$, select $w_5$.
        \item Else, select $T\cap \{w_1,w_0\}$.
    \end{enumerate}
        This is equivalent to the preference list $P''_{f_1}$ given by
$$
P''_{f_1}=(\{w_5\},\{w_0,w_1\},\{w_1\},\{w_0\}).
$$
The new choice function ${\mathcal C}''_{f_7}(T)$ for $f_7$ is the regular choice function given by the following process:
    \begin{enumerate}[i.]
        \item If $\{w_4,{w_4''}\} \cap T\neq \emptyset$, select $\{w_4,{w_4''}\}\cap T$.
        \item Else, select $T\cap \{w_7,{w_7''}\}$.
    \end{enumerate}
        This is equivalent to the preference list $P''_{f_7}$ given by
$$
P_{f_7}''=(\{w_4,{w_4''}\},\{w_4\},\{{w_4''}\},\{w_7,{w_7''}\},\{w_7\},\{{w_7''}\}).
$$
        
    Finally, for each worker $w\in W$, the new choice functions $\mathcal C''_w$ are equal to the original choice functions $\mathcal C_w$.

    We can now examine the lattice of stable matchings $\mathcal S''$ in the instance post-augmentation. The stable matchings are given by
    \begin{gather*}
        \mu_1'' = \{(f_1, w_1), (f_1, w_0), (f_2, w_2), (f_2, w_0), (f_3, w_3), (f_3, w_0), (f_4, w_4), (f_4, w_0),\\ 
         \qquad (f_5, w_5), (f_5, w_0),  (f_6, w_6), (f_7, w_7), (f_0, {w_3''}), (f_0, {w_4''}), (f_0, {w_6''}), (f_0, {w_7''})\} \\
        \mu_2'' = \{(f_1, w_5), (f_2, w_2), (f_2, w_0), (f_3, w_3), (f_3, w_0), (f_4, w_4), (f_4, w_0), (f_5, w_1), \\
        \qquad (f_6, w_6), (f_7, w_7), (f_0, {w_3''}), (f_0, {w_4''}), (f_0, {w_6''}), (f_0, {w_7''})\} \\
        \mu_3'' = \{(f_1, w_1), (f_1, w_0), (f_2, w_4), (f_3, w_2), (f_4, w_3), (f_5, w_5), (f_5, w_0), (f_6, w_6),  \\
        \qquad (f_7, w_7), (f_0, {w_3''}), (f_0, {w_4''}), (f_0, {w_6''}), (f_0, {w_7''})\} \\
        \mu_5'' = \{(f_1, w_1), (f_1, w_0), (f_2, w_4), (f_3, w_2), (f_4, w_6), (f_5, w_5), (f_5, w_0), (f_6, w_3),  \\
        \qquad (f_7, w_7), (f_0, {w_3''}), (f_0, {w_4''}), (f_0, {w_6''}), (f_0,{ w_7''})\} \\
        \mu_6'' = \{(f_1, w_1), (f_1, w_0), (f_2, w_7), (f_3, w_2), (f_4, w_3), (f_5, w_5), (f_5, w_0), (f_6, w_6), \\
        \qquad (f_7, w_4), (f_0, {w_3''}), (f_0, {w_4''}), (f_0, {w_6''}), (f_0, {w_7''})\} \\
        \mu_9'' = \{(f_1, w_1), (f_1, w_0), (f_2, w_7), (f_3, w_2), (f_4, w_6), (f_5, w_5), (f_5, w_0), (f_6, w_3), \\
        \qquad (f_7, w_4), (f_0, {w_3''}), (f_0, {w_4''}), (f_0, {w_6''}), (f_0, {w_7''})\} \\
        \mu_{10}'' = \{(f_1, w_5), (f_2, w_7), (f_2, {w_7''}), (f_3, w_2), (f_4, w_6), (f_4, {w_6''}), (f_5, w_1), (f_6, w_3), \\
        \qquad (f_6, {w_3''}), (f_7, w_4), (f_7, {w_4''}), (f_0, w_0)\}
    \end{gather*}
    The stable matching lattice $\mathcal S''$ is then shown in Figure \ref{fig:join_constr_aug}. Notice that we can ``project'' each matching $\mu_i''$ back to the original instance by taking its intersection $\mu_i''\cap(F\times W)$ with the original firms and workers. Then, $\mu_i''\cap(F\times W) = \mu_i$, for $i \in\{ 1,2,3,5,6,9,10\}$. These are exactly the matchings in the original instance that satisfy the join constraint $(\alpha, \beta)$.

    In terms of the original target lattice $(X, \succeq)$, the join constraint ``if $\rho_1$ and $\rho_2$ occur, then $\rho_3$ and $\rho_4$ occur'' corresponds via $\phi^{-1}$ to the statement ``if $b$ and $c$ occur, then $d$ and $e$ occur.'' In other words, ``if $x\succeq b, c$, then $x\succeq d, e$.'' This is equivalent to exactly the join operation $b\vee c = f$ from the target lattice. Thus, using the join constraint and join constraint augmentation, we have enforced the equivalent of the join operation $b\vee c = f$ on the stable matching lattice. \hfill $\triangle$
\end{example}

\begin{figure}
    \centering
    \includegraphics[width=0.3\linewidth]{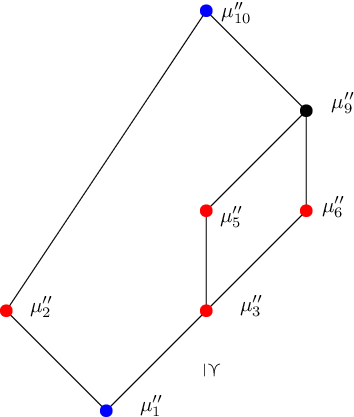}
    \caption{The stable matching lattice $\mathcal S''$ after applying the join constraint augmentation corresponding to $(\alpha,\beta)$.}
    \label{fig:join_constr_aug}
\end{figure}

\section{Proofs from Section~\ref{sec:hardness}}\label{sec:app:hardness}

For the sake of completeness, we reproduce here the proof of Theorem~\ref{thm:antimatroid_hardness}, which appeared in~\citet{merckx2019optimization}.

\begin{proof}[Proof of Theorem~\ref{thm:antimatroid_hardness}.] We reduce from the maximum independent set problem, which is well-known to be NP-Hard~\citep{karp2010reducibility}. Consider any graph $(V, E)$ for which we wish to find a maximum independent set. We construct an antimatroid $(V\cup E, \mathcal G)$ on the vertices and edges of the graph, such that $G\in \mathcal G$ is feasible if and only if, whenever $\{u, v\}\in G\cap E$, we have $u\in G$ or $v\in G$. Clearly, $V \cup E \in {\cal G}$ and ${\cal G}$ is closed under union. Moreover, for $G \in {\cal G}$, we can preserve feasibility by removing an edge if it contains one, or any element of $G$ if it does not. Hence, $(V\cup E,{\cal G})$ is an antimatroid. Define the weight function $w:V\cup E\to \mathbb Z$ by
    $$w(x) = \begin{cases}
        1 - d(x) & x\in V \\
        1 & x\in E
    \end{cases}$$
    where $d(v)$ is the degree of vertex $v$. Now, consider any independent set $U\subseteq V$. Let $G = U\cup \{\{u, v\}\in E\mid u\in U\text{ or }v\in U\}$. $G$ is thus a feasible set. Then by independence, $|G\cap E| = \sum_{u\in U}d(u)$. It follows that
    $$
    \begin{array}{lll}    w(G) &= \sum_{u\in U}w(u) + \sum_{e\in G\cap E}w(e) \\
        &= \sum_{u\in U}(1-d(u)) + \sum_{u\in U}d(u) \\
        &= |U|.
    \end{array}$$
    Thus, the maximum weight feasible set has weight at least equal to the maximum cardinality independent set. Now consider any feasible set in $G\in \mathcal G$. We construct an independent set of vertices $U$ from $G$. First, set $U\gets G\cap V$. If $U$ is independent, we are done. Otherwise, select any $\{u, v\}\in E$ such that $u,v\in G\cap V$ and remove one of its vertices, say $u$, from $G$. Additionally, remove any edge $\{u, v'\}\in G\cap E$ with $v'\not\in G$, to preserve feasibility of $G$. Note that this operations removes $u$ and at most $d(u)-1$ edges from $G$. Thus, the weight of $G$ does not decrease. Repeat this until the remaining vertices are independent. Let $G'$ be the final feasible set, and $U\gets G'\cap V$ be the independent set. Then, by definition of $w$, we can write
    $$\begin{array}{ll}    w(G) &\le w(G') \\
        &\le \sum_{v\in G'\cap V}(1 - d(v)) + \sum_{e\in G'\cap E}1 \\
        &\le \sum_{v\in G'\cap V}(1 - d(v)) + \sum_{v\in G'\cap V}d(v) \\
        &= |G'\cap V| \\
        &= |U|.
    \end{array}$$
    The maximum cardinality independent set is thus equal to the weight of the maximum weight feasible set. Additionally, the path poset of the antimatroid $(V\cup E, \mathcal G)$ has $O(|V|+|E|)$ elements, and thus has size polynomial in $|V|, |E|$. We conclude that computing the maximum weight feasible set in an antimatroid is NP-hard. By negating the weight function $w$, it immediately follows that finding the minimum cost feasible set in an antimatroid is also NP-hard, establishing Theorem~\ref{thm:antimatroid_hardness}.
\end{proof}

Next, we have the proof of Lemma~\ref{lem:cost_function}.

\begin{proof}[Proof of Lemma~\ref{lem:cost_function}.]
    Let $\mu \in {\cal S}^*$ and fix $R = \psi_{\mathcal S^*}^c(\mu)$. We can write
    \begin{equation}\label{eq:ecco-mu}\mu = (\psi_{\mathcal S^*}^c)^{-1}(R) = \psi_{\mathcal S^*}^{-1}(\Pi\setminus R)\end{equation}
    by the definitions of $R$, $\psi_{\mathcal S^*}^c$, and $\psi_{\mathcal S^*}$. Observe that \begin{equation}\label{eq:c'-and-mu}c'(\mu) = c'(\mu \cap (F\times W)) = c'(\xi_{\mathcal S^*}(\mu)),\end{equation} where the first equation holds since, by construction, only edges in $F\times W$ have nonzero weight under $c'$, and the second since, by Lemma~\ref{lem:from-mu-to-muprime-to-muprimeprime}, $\mu^*(w)=\xi_{\mathcal S^*}(\mu^*)(w)$ for all $w \in W$. Using~\eqref{eq:ecco-mu} and~\eqref{eq:c'-and-mu}, we can write
    $$c'(\mu)= c'(\xi_{\mathcal S^*}(\psi_{\mathcal S^*}^{-1}(\Pi\setminus R))) = c'(\psi_{\mathcal S}^{-1}(\Pi\setminus R)),$$
    as $\psi_{\mathcal S^*} = \psi_{\mathcal S}\circ\xi_{\cal S^*}$ and $\psi_{\mathcal S}^*$, $\xi_{\cal S^*}$ are injective (the first by definition of a partial representation, the second by Lemma~\ref{lem:join_constr_aug_order_embedding}). Then, using Theorem~\ref{thm:rotation_representation}, we can write
    \[ c'(\mu)= c'\left(\mu_W\cup\left(\bigcup_{\rho\in \Pi\setminus R}\rho^+\right)\setminus\left(\bigcup_{\rho\in \Pi\setminus R}\rho^-\right)\right),\label{thm:main2:mu_W}\]
    where $\mu_W$ is the minimal element of $(\mathcal S, \succeq)$. Next, note that again by Theorem~\ref{thm:rotation_representation}, we can write
    $$c'(\mu_F) = c'(\psi_{\mathcal S}^{-1}(\Pi)) = c'\left(\mu_W\cup\left(\bigcup_{\rho\in \Pi}\rho^+\right)\setminus\left(\bigcup_{\rho\in \Pi}\rho^-\right)\right) = 0,\label{thm:main2:muF_expanded}$$
    where $\mu_F$ is the maximal element of $(\mathcal S, \succeq)$, and the last equation holds by definition of $c'$. So, using \eqref{thm:main2:mu_W} we write
    \begin{equation}\label{eq:the-last-of-eq}c'(\mu)=c'\left(\mu_F\cup\left(\bigcup_{\rho\in R}\rho^-\right)\setminus\left(\bigcup_{\rho\in R}\rho^+\right)\right).\end{equation}
    Now, we know by the first thesis of Proposition~\ref{prop:rotations_disjoint} that each of the $\rho^-$ are disjoint from each other, and from its second thesis that $\bigcup_{\rho\in R}\rho^-$ and $\bigcup_{\rho\in R}\rho^+$ are disjoint, since the original GI instance is $I_{(V,\sim)}$. Since only $(f, w)\in \rho^-$ for some $\rho\in \Pi$ can have $c'(f, w)\ne 0$, from~\eqref{eq:the-last-of-eq} we deduce
    $$\begin{array}{lll} c'(\mu)& = & c'(\mu_F) + \sum_{\rho\in R}\sum_{(f, w)\in \rho^-}c'(f, w) \\ [1.2ex] & = &  \sum_{\rho\in R}\sum_{(f, w)\in \rho^-}\frac{M \cdot c(\rho)}{|\rho^-|} \\ [1.2ex]
    &= & \sum_{\rho\in R}M \cdot c(\rho) = M \cdot c(R) = M \cdot c(\psi_{\mathcal S^*}^c(\mu)),\end{array}$$
    as desired.
\end{proof}

\end{document}